\documentclass[11pt,reqno]{amsart}  

\usepackage{amscd} 
\usepackage{mathrsfs}
\usepackage{amsmath,amsbsy,amsthm,amsfonts,amssymb,esint}
\usepackage{mathtools}
\usepackage[unicode,psdextra]{hyperref}
\usepackage{graphicx,subcaption}
\usepackage{tikz}
\usepackage{bm}
\usepackage[normalem]{ulem}
\usepackage{soul}
\usepackage{color}

\usepackage[sort&compress,capitalise,nameinlink]{cleveref}
\crefname{section}{\textsection}{\textsection}
\crefname{subsection}{\textsection}{\textsection}
\usepackage{dsfont}
\usepackage{enumerate}

\DeclareMathAlphabet\mathbfcal{OMS}{cmsy}{b}{n}

\newtheorem{theorem}{Theorem}[section]
\newtheorem{corollary}[theorem]{Corollary}
\newtheorem{lemma}[theorem]{Lemma}
\newtheorem{proposition}[theorem]{Proposition}
\theoremstyle{definition}
\newtheorem{remark}[theorem]{Remark}

\numberwithin{equation}{section}

\numberwithin{equation}{section}

\AtEndEnvironment{remark}{\hfill $\diamond$}

\newcommand{\N}{\mathbb{N}}
\newcommand{\Nz}{\mathbb{N}_0}
\newcommand{\Z}{\mathbb{Z}}

\newcommand{\R}{\mathbb{R}}
\newcommand{\C}{\mathbb{C}}

\newcommand{\one}{\mathds{1}}

\newcommand{\tx}{\textstyle}

\def\LN{L_{\mbox{{\tiny{$N$}}}}}
\def\Lz{L_{0}}
\def\AN{A_{\beta,N}}
\def\Az{A_{\beta,0}}
\def\fe{\phi_{\mbox{{\tiny{$E$}}}}}
\def\ZeN{\mathcal{Z}_{\beta,N}}
\def\Zez{\mathcal{Z}_{\beta,0}}

\def\VN{V_{N}^{z}}
\def\PN{P_{N}^{z}}
\def\QN{Q_{N}^{z}}
\def\SN{S_{N}^{z}}
\def\QNp{Q_N^{+,v^2}}
\def\PNp{P_N^{+,v^2}}
\def\SNp{S_N^{+,v^2}}
\def\VNp{V_{N}^{+,v^2}}
\def\PNp{P_{N}^{+,v^2}}

\def\II{\mathcal{I}}
\def\KK{\Xi}

\def\H0{H_{0}}

\def\Gz{G_{0}}

\def\GN{G_{N}}
\def\GaN{\Gamma_N}
\def\rr{\mathrm{r}}
\def\ee{\mathrm{e}}

\def\dom{\mathscr{D}}

\def\mm{m}
\def\nn{n}
\def\tt{t}

\def\IR{\mathrm{IR}}
\def\UV{\mathrm{UV}}
\def\JIR{J_\IR}
\def\JUV{J_\UV}
\def\ren{\mathrm{ren}}
\def\vac{\mathrm{vac}}
\def\vz{v_0}
\def\Fr{\mathcal{F}_{\ren}}
\def\Ur{\mathcal{U}_{\ren}}
\def\Sr{\mathcal{S}_{\ren}}

\renewcommand{\Re}{\mathrm{Re}}
\renewcommand{\Im}{\mathrm{Im}}
\renewcommand{\dots}{...\,}

\DeclareMathOperator*{\Res}{Res}
\DeclareMathOperator*{\Rez}{Res_0}
\DeclareMathOperator*{\Reu}{Res_1}
\DeclareMathOperator*{\Rek}{Res_k}
\DeclareMathOperator*{\Tr}{Tr}

\title[Vacuum and thermal fluctuations of a scalar field with point interactions]{Vacuum and thermal fluctuations of a scalar field\\ with point interactions}

\author{Davide Fermi}
\address{Dipartimento di Matematica, Politecnico di Milano, P.zza Leonardo da Vinci, 32, 20133, Milano, Italy\\
and Istituto Nazionale di Fisica Nucleare, Sezione di Milano, Italy}
\email{davide.fermi@polimi.it}
\urladdr{https://fermidavide.com}

\author{Marco Gurgoglione}
\address{Dipartimento di Matematica, Politecnico di Milano, P.zza Leonardo da Vinci, 32, 20133, Milano, Italy}
\email{marco.gurgoglione@mail.polimi.it}

\begin{document}

\begin{abstract}
We investigate the vacuum and thermal fluctuations of a neutral massless scalar field living in Minkowski spacetime and interacting with a finite number of point-like obstacles, modelled by zero-range potentials. The system is described rigorously in terms of self-adjoint realizations of the Laplacian, under assumptions ensuring the absence of instabilities.
Using the relative zeta-function technique, we determine the renormalized connected partition function and derive explicit expressions for the thermodynamic observables, characterizing both their low- and high-temperature behaviors.
Furthermore, we derive of a convergent Born series expansion for the Casimir energy, which identifies multiple-scattering processes as the mechanism underlying vacuum forces. The latter decompose into pairwise contributions directed along the lines joining the obstacles, with intensities depending non-locally on the full configuration. We also present some numerical results for identical obstacles, indicating that the Casimir forces are always attractive in this context.	
\end{abstract}

\keywords{Casimir effect, relative zeta function, zero-range potentials, vacuum polarization.}
\subjclass[2020]{
	81T55, 
	81T20, 
	81T28, 
	81U15, 
	81Q10.  
}

\maketitle

\vspace{-0.5cm}
\tableofcontents
\vspace{-0.8cm}

\section{Introduction}

The Casimir effect is a force arising between uncharged conducting bodies, whose origin can be ascribed to quantum vacuum fluctuations of the electromagnetic field. First predicted by Casimir \cite{C48} and rooted in earlier work with Polder on retarded van der Waals interactions \cite{CP48}, it was soon framed within a broader class of dispersion forces by Lifshitz \cite{L55}.
Increasingly precise measurements \cite{L97,MR98,BC+02,DM24} have since confirmed these predictions and stimulated theoretical studies on the role of geometry, material properties and temperature \cite{M94,MT97,M01,L05,BK+09,DM+11,FP17}.
In addition to its fundamental significance, Casimir physics has recently gained technological relevance. At separations below 100nm, dispersion forces dominate interactions between neutral surfaces, causing stiction in Micro and Nano Electro Mechanical Systems (MEMS/NEMS) \cite{EY+24,TP25} and representing an unavoidable background in precision measurements \cite{KC+07}.
Current research explores the engineering of repulsive Casimir forces for frictionless levitation \cite{DL+14,SC+23} and exploiting critical interactions to control colloidal dispersions \cite{NF+13,PR+25}.

In this work we study an exactly solvable model allowing for an explicit and rigorous analysis of some fundamental features of Casimir interactions. We examine the vacuum and thermal fluctuations of a neutral massless scalar field living in Minkowski space-time and interacting with fixed pointlike obstacles, modeled by means of zero-range potentials in the spatial part of the Klein-Gordon equation which governs the field dynamics. Working in natural units with $\hbar = 1$ and $c= 1$, we assume the field to satisfy
\begin{equation*}
	(\partial_{tt} + \LN)\, \phi = 0\,,
\end{equation*}
where $t \in \R$ is the time coordinate and $\LN$ is the differential operator formally given by
\begin{equation}\label{eq:Lheu}
	\mbox{``}\,\LN = -\Delta + \tx\sum_{\nn = 1}^N \mu_\nn\,\delta_{x_\nn} \mbox{''} .
\end{equation}
Here, $-\Delta$ is the usual Laplacian on $\R^3$, $N \in \N$ is the number of obstacles and, for each $\nn \in \{1,\dots,N\}$, $\delta_{x_\nn}$ is the Dirac delta distribution concentrated at the point $x_\nn \in \R^3$, while $\mu_\nn$ is the corresponding (infinitesimal) coupling strength. We will refer to the standard realization of the heuristic expression \eqref{eq:Lheu} as a self-adjoint extension of the symmetric operator $-\Delta\!\upharpoonright\! C^\infty_c(\R^3 \setminus \{x_1,\dots,x_N\})$ acting in $L^2(\R^3)$ \cite{AG+88,AK99}, making suitable assumptions on the interaction strengths and on the distances between the points to ensure the stability of the field theory. 
Zero-range potentials have been used in the physics literature since the early days of quantum mechanics \cite{BP34,F36}. In the present context, they are meant to provide an effective description of regimes in which the typical size of extended obstacles is much smaller than the characteristic wavelength of the quantum field. 

Treating the obstacles as a fixed background and neglecting back-reaction, thermal effects at equilibrium can be accounted for using Matsubara's imaginary time formalism \cite{G77,DK78,KG89,BC+03}. Introducing the Wick-rotated coordinate $\tau = i t$ and imposing the KMS periodicity identification $\tau \sim \tau + \beta$, with $\beta > 0$ denoting the inverse temperature, the Euclidean canonical partition function for the system under analysis is formally given by the path-integral
\begin{equation}\label{eq:ZeN}
	\mbox{``}\,\ZeN = \int \mathcal{D}\fe \; e^{-\frac{1}{2} \langle \fe| \AN \fe \rangle}\;\mbox{''},
\end{equation}
where the relevant operator is $\AN = -\partial_{\tau\tau} + \LN$, acting in $L^2(S^1_{\beta/2\pi} \!\times \R^3)$. 
Of course, the expression \eqref{eq:ZeN} is ill-defined, due to ultraviolet divergences. Physical information can nonetheless be extracted by isolating the contribution of the obstacles. To this purpose, one introduces an ultraviolet regularization and compares $\ZeN$ with the free partition function $\Zez$, corresponding to the configuration without obstacles ($N=0$). Subtracting the free contribution and finally removing the regularization yields a finite and physically meaningful result. 

In this spirit, handling \eqref{eq:ZeN} as if it were a finite-dimensional Gaussian integral, by heuristic manipulations one gets
\begin{multline*}
	\ln \ZeN - \ln \Zez 
	= - \tfrac{1}{2} \log \det(\ell^2 \AN) + \tfrac{1}{2} \log \det(\ell^2 \Az) \\
	= - \tfrac{1}{2} \Tr \log(\ell^2 \AN) + \tfrac{1}{2} \Tr \log(\ell^2 \Az)
	= \tfrac{1}{2} \Tr \!\left.\tfrac{d}{ds}\right|_{s = 0} (\ell^2 \AN)^{-s} - \tfrac{1}{2} \Tr\! \left.\tfrac{d}{ds}\right|_{s = 0}(\ell^2 \Az)^{-s} \\
	= \Big[\tfrac{1}{2} \tfrac{d}{ds} \Tr\! \big[ \AN^{-s} - \Az^{-s} \big]
		- (\log\ell)\, \Tr \!\big[ \AN^{-s} - \Az^{-s} \big] \Big]_{s = 0} \,,
\end{multline*}
where $\ell> 0$ is a length scale, introduced for dimensional reasons, and $s$ is a complex parameter allowing to regularize the ultraviolet divergences. To proceed further, one can use the Euler integral representation of the Gamma function \cite[Eq. 5.2.1]{OL+10} to write
\begin{equation*}
	\Tr \!\big[ \AN^{-s} - \Az^{-s} \big] = \frac{1}{\Gamma(s)} \int_{0}^{\infty}\!\! du\;u^{s-1} \Tr \!\big[ e^{-u\, \AN} - e^{-u\, \Az} \big]\,.
\end{equation*}
Under suitable assumptions, this expression is well-defined for values of $s$ ranging in an appropriate strip of the complex plane and it provides a natural definition of the {\sl relative zeta function} $\zeta(s; \AN,\Az)$. This function admits, in turn, a meromorphic extension to the entire complex plane.

A rigorous implementation of the above arguments was provided by M\"uller \cite{M98}, highlighting deep connections with scattering theory. His approach can be viewed as a variation of the zeta-regularization method originally introduced in quantum field theory by Dowker and Critchley \cite{DC76}, Hawking \cite{H77}, and Wald \cite{W79}, building in turn on earlier ideas of Ray and Singer \cite{RS74} for determinants of elliptic operators. For reviews and more recent developments on zeta-function regularization, see, for instance, \cite{EO+94,K01,BC+03,FP17} and references therein. A key feature of M\"uller’s formulation is its ability to accommodate operators with continuous spectrum, as required in the present setting.
In this framework, one defines the renormalized connected relative partition function (or effective Euclidean action) as
\begin{equation}\label{eq:lnZN}
	\big[\ln \ZeN - \ln \Zez\big]_{\ren} := \Big[\tfrac{1}{2}\,\zeta'(s;\AN,\Az) - (\log\ell)\, \zeta(s;\AN,\Az) \Big]_{s = 0}\,,
\end{equation}
where the evaluation at $s = 0$ is understood in the sense of analytic continuation. 

An equivalent formulation of M\"uller's method, based on resolvent operators rather than heat semigroups, was proposed by Spreafico and Zerbini \cite{SZ09}. In that work, they also computed the renormalised relative partition function for a massless scalar field
in presence of one or two delta-like potentials ($N=1,2$), without addressing the attractive or repulsive nature of the resulting forces. The same formalism was later applied to models involving a point-like impurity and an infinite conducting plate \cite{AC+10}, a point-like impurity combined with a Coulomb potential centered at the same point \cite{ACS16}, and a semi-transparent hyperplane selecting transverse modes \cite{CFP17}.

The study of Casimir forces for a scalar field with singular potentials supported on sets of high co-dimension was initiated in the physics literature by Scardicchio \cite{S05}, with further developments appearing in \cite{GS17,GS25} (see also the references cited therein). A rigorous renormalization of both local and global observables for a single point impurity was subsequently achieved in \cite{FP18,F20,FP23}, building on a local formulation of zeta-function regularization grounded in functional analytic techniques developed in \cite{FP11,FP15,FP16,F16,FP17}. Analogous results for one and two centers were obtained in \cite{Z21} under an appropriate scaling limit by means of a refined version of the point-splitting procedure outlined in \cite{H05,H06,HS10}. 
We also take this opportunity to mention \cite{S21,FS22,S24} for another rigorous description of Casimir physics based on spectral theory and integral kernels methods, including results that establish the equivalence of different approaches.

We employ the approach of Spreafico and Zerbini \cite{SZ09}, extending it to configurations with an arbitrary finite number $N \geqslant 1$ of point-like obstacles. This generalization is far from trivial, as the expressions involved rapidly grow in complexity with $N$. Indeed, a direct extension of \cite{SZ09} would require inverting $N \times N$ matrices which, while theoretically possible, becomes intractable for large $N$ and obstructs the analysis of the meromorphic structure of the relative zeta function, a key tool in our analysis. To overcome this difficulty, we develop an alternative approach based on a separation of infrared and ultraviolet contributions, which further provides a cleaner physical picture. 

We begin by reviewing the rigorous definition of the spatial operator in the Klein–Gordon equation and by stating our working assumptions in \cref{sec:hypo}. We then compute the relative zeta function and describe its pole structure in \cref{sec:zeta}. Building on this analysis, in \cref{sec:Zren} we determine the renormalized connected relative partition function, from which the main thermodynamic quantities - free energy, internal energy, and entropy - can be derived. We determine both the low-temperature asymptotics and the leading high-temperature behavior of these quantities; the Casimir energy arises naturally as the only contribution surviving in the zero-temperature limit. In \cref{sec:Born} we provide a Born series expansion for the Casimir energy, discussing its physical interpretation. We specialize this expansion in \cref{sec:num} to the case of identical obstacles, presenting some numerical examples as well.
In \cref{sec:proofs} we collect the proofs, together with a number of auxiliary results. In \cref{subsec:restr} and \cref{subsec:respden} we evaluate the relative spectral trace and the relative spectral density, respectively. In \cref{subsec:acrelz} we establish the well-posedness of the relative zeta function in a suitable complex strip and derive its meromorphic extension to arbitrarily large strips. This is used in \cref{subsec:relpart} to compute the renormalised relative partition function. In \cref{subsec:ded} we analyze the relative Dedekind eta function and its asymptotic expansions for small and large argument, which govern the thermal corrections in the low- and high-temperature regimes. In \cref{sec:bornexp} we derive the Born series expansion for the Casimir energy, together with convergence estimates. Finally, in Appendix \ref{sec:app} we evaluate a class of integrals in terms of the exponential integral function, enabling faster numerical computations for identical obstacles.

Our analysis highlights some key features of Casimir physics.
The Born series reveals that fluctuation-induced forces originate from multi-scattering processes and admit a decomposition into pairwise contributions directed along the line joining each pair of obstacles, while their intensities retain a nonlocal dependence on the full configuration. The numerical investigations presented in \cref{sec:num} suggest that these forces are always attractive, although establishing this property analytically remains challenging due to their intrinsically nonlocal nature. Finally, we emphasize that, despite the appearance of the auxiliary renormalization length $\ell$ at intermediate stages, the resulting forces depend only on intrinsic physical parameters, namely the positions of the obstacles and their scattering lengths.

\section{Main results}\label{sec:main}

\subsection{The model and working hypotheses}\label{sec:hypo}
We begin by recalling the definition of the self-adjoint operator which identifies the spatial part of the Klein-Gordon operator for the model under analysis, providing a rigorous realization of the heuristic expression \eqref{eq:Lheu}. This is characterized as a singular perturbation of the free Laplacian
	\begin{equation*}
		\Lz = -\Delta\,, \qquad \dom(\Lz) = H^2(\R^3)\,,
	\end{equation*}
where $H^2(\R^3)$ is the usual Sobolev space of order $2$. Considering that $\sigma(\Lz) = \sigma_{\mathrm{ac}}(\Lz) = [0,+\infty)$, we will refer to the associated resolvent operator 
	\begin{equation*}
		R(z;\Lz) := (\Lz - z)^{-1} : L^2(\R^3) \to H^2(\R^3)\,, \qquad \mbox{for $z \in \C \setminus [0,+\infty)$}\,,
	\end{equation*}
which acts by convolution with the Green function
	\begin{equation}\label{eq:G0}
		\Gz^{z}(x) := \frac{e^{i \sqrt{z}\,|x|}}{4\pi|x|}\,.
	\end{equation}
Here and in the sequel, the branch of the square root is chosen so that $\Im \sqrt{z} > 0$.

For fixed $N \in \N$, let $\alpha_1,\dots,\alpha_N \in \R$ and consider the $N \times N$ matrix
	\begin{equation}\label{eq:GaN}
		\big[\GaN^{z}\big]_{\mm\nn} := \left(\alpha_\nn - \frac{i\sqrt{z}}{4\pi}\right)\delta_{\mm\nn} - \Gz^{z}(x_\mm-x_\nn)\,(1-\delta_{\mm\nn})\,, \qquad (m,n = 1,\dots,N)\,.
	\end{equation}
Throughout the paper, expressions of the form $f(\mm,\nn)(1-\delta_{\mm\nn})$ are understood to vanish for $\mm=\nn$, so that any divergence of $f(\mm,\nn)$ on the diagonal does not contribute.

Following \cite[\S II.1.1]{AG+88}, we define the self-adjoint operator
	\begin{eqnarray}
		& \hspace{-2.cm}\dom(\LN) = \big\{\psi = \phi_z + \tx\sum_{\mm=1}^N q_\mm\, \Gz^{z}(\cdot - x_\mm) \in L^2(\R^3) \;\,\big|\; \phi_z\!\in\! H^2(\R^3),\;\, q_1,\dots,q_N \!\in\! \C\,,\;\, \mbox{ and}\nonumber \\
		& \hspace{7.8cm} \phi_z(x_\mm) = \tx\sum_{\nn = 1}^N\! \big[\GaN^z\big]_{\mm \nn}\, q_\nn\; \mbox{ for }\, \mm = 1,\dots,N \big\} \,, \nonumber  \\
		& (\LN -z) \psi = (\Lz -z) \phi_z\,. \label{eq:HNdef}
	\end{eqnarray}
where $z \in \C \setminus [0,+\infty)$ is arbitrary and the decomposition of $\psi$ is unique.
The parameters $\alpha_\nn$ and the coefficients $\mu_\nn$ in \eqref{eq:Lheu} are related by $\mu_\nn = - c\,\varepsilon + \alpha_\nn\,\varepsilon^2$, with $\varepsilon \to 0^+$ and $c > 0$, see \cite[\S II.1.2 and App. H]{AG+88}.
Let us also mention that varying $\alpha_\nn \in \R \cup \{+\infty\}$ yields the whole family of so-called {\em local} self-adjoint extensions of the closable symmetric operator $-\Delta\!\upharpoonright\! C^\infty_c(\R^3 \setminus \{x_1,\dots,x_N\})$. Notably, the free operator $\Lz$, namely the Friedrichs extension, is recovered in the limit $\alpha_\nn \to +\infty$ for all $n\in\{1,\dots,N\}$.

The spectrum of $\LN$ is given by \cite[Thm. 1.1.4]{AG+88}
	\begin{equation}\label{eq:sigLN}
		\sigma(\LN) = \sigma_{\mathrm{ac}}(\LN) \cup \sigma_{\mathrm{pp}}(\LN)\,, \qquad
		\sigma_{\mathrm{ac}}(\LN) = [0,+\infty)\,, \qquad
		\sigma_{\mathrm{pp}}(\LN) = \{z \in \R_-\,|\, \det\GaN^z = 0\}\,,
	\end{equation}
and, for any $z \in \C \setminus \sigma(\LN)$, the resolvent operator
	\begin{equation*}
		R(z;\LN) := (\LN - z)^{-1} : L^2(\R^3) \to \dom(\LN)\,, 
	\end{equation*}
acts by integration against the kernel \cite[Eq. (1.1.33)]{AG+88}
	\begin{equation}\label{eq:GN}
		\GN^{z}(x,y) = \Gz^{z}(x-y) + \sum_{\mm,\nn=1}^N \big[\GaN^{z}\big]^{-1}_{\mm\nn}\, \Gz^{z}(x- x_\mm)\,\Gz^{z}(x_\nn- y)\,.
	\end{equation}
Here and in the sequel, $\big[\GaN^{z}\big]^{-1}_{\mm\nn}$ indicates the $(m,n)$-component of the inverse matrix $\big(\GaN^{z}\big)^{-1}$.
\medskip
	
We shall henceforth assume
	\begin{equation}\label{eq:ass1}
		\alpha_\nn > 0\,, \qquad \mbox{for all $\nn = 1,\dots,N$}\,, 
	\end{equation}
and
	\begin{equation}\label{eq:ass2}
		\sum_{\mm \neq \nn} \frac{1}{(4\pi \alpha_\mm)^2\, |x_\mm - x_\nn|^2} < 1 \,.
	\end{equation}
	
\begin{remark}
Condition \eqref{eq:ass1} implies, in particular, that the single-center scattering lengths $-(4\pi \alpha_\nn)^{-1}$ are negative definite \cite[Eq. (1.4.10)]{AG+88}.
In physical terms, \eqref{eq:ass2} indicates that, depending on their number, the point-like obstacles are well separated and the associated single-center interactions are sufficiently weak. This condition is not optimal, but it provides a simple sufficient criterion with a clear interpretation. As a matter of fact, the zero-range approximation itself becomes questionable when the interactions are strong and the obstacles lie too close to one another. 
On the other hand, it is worth noting that \eqref{eq:ass1} and \eqref{eq:ass2} are compatible with a so-called homogenization regime in which the intensities and mutual distances simultaneously tend to zero as the number of obstacles grows, following the scaling $|x_{\mm}-x_{\nn}| \sim N^{-1/3}$ and $\alpha_{\nn} \sim N$ as $N\to\infty$, while the total interaction strength remains finite \cite{CCF26}.
\end{remark}

\subsection{Meromorphic structure of the relative zeta function}\label{sec:zeta}
Having established the model of interest in rigorous terms and outlined our working hypothesis, we now retrace the steps outlined in \cite{SZ09} to evaluate the relative zeta function for the pair $(\LN,\Lz)$.

We first observe that $R(z;\LN)$ is a finite-rank perturbation of $R(z;\Lz)$, and define the {\sl relative resolvent trace}
	\begin{equation}
		\rr(z;\LN,\Lz) := \Tr \big[ R(z;\LN) - R(z;\Lz) \big]\,, \qquad \mbox{for all $z \in \C \setminus \big( \sigma(\LN) \cup \sigma(\Lz)\big)$}\,.
	\end{equation}

	\begin{proposition}\label{prop:rz}
		Assume \eqref{eq:ass1} and \eqref{eq:ass2}. Then,
			\begin{equation}\label{eq:sigLNp}
				\sigma(\LN) = \sigma_{\mathrm{ac}}(\LN) = [0,+\infty)\,.
			\end{equation}
		Moreover:
		\begin{enumerate}[i)]
			
			\item there exists a sequence $\{a_{j}\}_{j \in \Nz} \subset \C$ such that
				\begin{equation}\label{eq:asyrz0}
				 	\rr(z;\LN,\Lz) \sim \sum_{j = 0}^{+\infty} a_j \,z^{j/2 - 1/2}\,,
				 	\qquad \mbox{as $z \to 0$ in $\C \setminus [0,+\infty)$}\,;
				\end{equation}							
			
			\item there exists a sequence $\{b_{j}\}_{j \in \Nz} \subset \C$ such that
				\begin{equation}\label{eq:asyrzinf}
				 	\rr(z;\LN,\Lz) \sim \sum_{j = 0}^{+\infty} b_{j}\,z^{-j/2-1}\,,
				 	\qquad \mbox{as $z \to \infty$ in $\C \setminus [0,+\infty)$}\,.
				\end{equation}	
									
		\end{enumerate}
	\end{proposition}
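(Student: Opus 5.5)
To prove \eqref{eq:sigLNp}, we use \eqref{eq:sigLN}: it suffices to show $\det \GaN^{z} \neq 0$ for every $z<0$. Write $z=-k^2$ with $k>0$, so that $\sqrt{z}=ik$. Then $\GaN^{-k^2} = D_k - K_k$, where
\[
D_k = \mathrm{diag}\big(\alpha_\nn + \tfrac{k}{4\pi}\big), \qquad [K_k]_{\mm\nn} = \frac{e^{-k|x_\mm-x_\nn|}}{4\pi|x_\mm-x_\nn|}(1-\delta_{\mm\nn}).
\]
By \eqref{eq:ass1}, $D_k$ is invertible. The Hilbert--Schmidt norm of $D_k^{-1}K_k$ satisfies
\[
\|D_k^{-1}K_k\|_{HS}^2 \leqslant \sum_{\mm\neq\nn}\frac{1}{(4\pi\alpha_\mm)^2|x_\mm-x_\nn|^2}<1
\]
by \eqref{eq:ass2}. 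Hence $\mathbb{1}-D_k^{-1}K_k$ is invertible through a Neumann series, and $\det\GaN^{-k^2}\neq 0$. The same estimate also holds at $k=0$, which rules out a zero-energy resonance. I will reuse this bound at small $|z|$ below.

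For the relative trace, integrate the kernel \eqref{eq:GN} on the diagonal and use $\int_{\R^3}\Gz^z(x-x_\nn)\Gz^z(x_\mm-x)\,dx = -\partial_z \Gz^z(x_\mm-x_\nn)$, which follows from the resolvent identity $R(z)^2=\partial_z R(z)$. For $\mm=\nn$ this integral equals $\frac{i}{8\pi\sqrt z}$, which is exactly $-\partial_z\big(-\tfrac{i\sqrt z}{4\pi}\big)$. The off-diagonal entries of $\partial_z\GaN^z$ are $-\partial_z\Gz^z(x_\mm-x_\nn)$. Combining these, I expect the Krein-type formula
\[
\rr(z;\LN,\Lz) = \Tr\big[(\GaN^z)^{-1}\partial_z\GaN^z\big] = \partial_z\log\det\GaN^z .
\]
Next, pass to the variable $w=\sqrt z$. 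In this variable, $\GaN$ is an entire matrix-valued function of $w$, because its entries are $\alpha_\nn - \frac{iw}{4\pi}$ and $\frac{e^{iw r}}{4\pi r}$. Moreover $\partial_z = \frac{1}{2w}\partial_w$.

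For part (i), $\det\GaN^{w^2}$ is analytic near $w=0$ and nonzero at $w=0$ by the bound above. So $\partial_w\log\det\GaN$ has a convergent Taylor series $\sum_j c_j w^j$ there. Dividing by $2w$ gives the expansion \eqref{eq:asyrz0}, in which $a_j=c_j/2$ multiplies $z^{j/2-1/2}$. This is in fact a convergent expansion, not just an asymptotic one.

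For part (ii), as $|z|\to\infty$ in $\C\setminus[0,+\infty)$ we have $\Im w>0$. I will factor $\GaN = D_w(\mathbb{1}-D_w^{-1}K_w)$ with $D_w=\mathrm{diag}(\alpha_\nn-\frac{iw}{4\pi})$. Then
\[
\log\det\GaN = \sum_\nn\log\big(\alpha_\nn-\tfrac{iw}{4\pi}\big) + \log\det(\mathbb{1}-D_w^{-1}K_w).
\]
The first term has derivative $\sum_\nn (w+4\pi i\alpha_\nn)^{-1}$, which expands in inverse powers of $w$ starting at $w^{-1}$. After the factor $\frac{1}{2w}$, this produces $\sum_j b_j z^{-j/2-1}$, with leading coefficient $b_0=N/2$.

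The main obstacle is the second term, because $e^{iwr}$ decays only when $\Im w\to\infty$. In a sector staying away from the positive real axis, $|e^{iwr}|\leqslant e^{-c|w|r}$, so this term and its derivative are $O(|w|^{-\infty})$ and do not affect the asymptotic series. Near the cut, however, $e^{iwr}$ need not be small. There I would argue that $D_w^{-1}K_w = O(|w|^{-1})$ uniformly for $\Im w\geqslant 0$, using the oscillatory factors $e^{iwr}$, which have modulus at most $1$. The log-determinant is then controlled by a Neumann expansion. Its contribution is a sum of terms of the form (power of $w$)$\times e^{iwL}$, where $L$ is a closed-path length. These terms are oscillatory rather than pure powers. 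Since the expansion \eqref{eq:asyrzinf} is claimed in all of $\C\setminus[0,+\infty)$, I would read ``$\sim$'' as holding uniformly in closed sectors $|\arg z|\geqslant\epsilon$, where the exponentials are negligible. I would check that this is the sense needed for the Mellin transform used later.
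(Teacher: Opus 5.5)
Your argument is sound in structure but takes a different route from the paper, and it contains a sign slip you should correct. The spectral part is the same as in the paper: the Hilbert--Schmidt bound of \cref{lemma:GaN}.

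\textbf{Comparison of routes.} For the trace, the paper never passes through a determinant. It evaluates $\int_{\R^3}G_0^z(x-x_m)\,G_0^z(x_n-x)\,dx$ by Fourier transform and residues, which gives the explicit formula $r(z;\LN,\Lz)=\frac{i}{8\pi\sqrt z}\sum_{m,n}[(\Gamma_N^z)^{-1}]_{mn}\,e^{i\sqrt z|x_m-x_n|}$. It then expands $(\Gamma_N^z)^{-1}$ in Neumann series: around $\Gamma_N^0$ for small $z$, using $\|Q_N^z\|=O(|z|^{1/2})$, and around $-\frac{i\sqrt z}{4\pi}\one$ for large $z$, discarding the exponentially small off-diagonal entries.

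You instead use a Krein-type identity and work with the scalar function $\det\Gamma_N^{w^2}$, which is entire in $w=\sqrt z$. This makes (i) immediate, and manifestly convergent, as soon as $\det\Gamma_N^0\neq 0$. In (ii) it isolates the diagonal factor $\prod_n(\alpha_n-\frac{iw}{4\pi})$ exactly, so every $b_j$ comes out in closed form with an $O(|z|^{-\infty})$ remainder.

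The paper's route, in return, produces the explicit formula for $r$ that it needs anyway. Its boundary values on the cut give $e(v;\LN,\Lz)$ in \eqref{eq:eexp}, on which \cref{prop:ee} and the Born expansion are built.

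Your reading of $z\to\infty$ in closed sectors away from the cut is exactly the paper's convention (Remark~\ref{rem:Lamtc}, contours $\Lambda_{\theta,-c}$). That convention is also what justifies the paper's claim that the off-diagonal terms decay exponentially.

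\textbf{The sign slip.} Since $R(z;\Lz)^2=+\partial_z R(z;\Lz)$, the off-diagonal integral equals $+\partial_z G_0^z(x_m-x_n)$, not $-\partial_z G_0^z(x_m-x_n)$. Your diagonal value $\frac{i}{8\pi\sqrt z}=-\partial_z[\Gamma_N^z]_{nn}$ is correct. Together, the matrix of these integrals is exactly $-\partial_z\Gamma_N^z$, whence $r(z;\LN,\Lz)=-\Tr\big[(\Gamma_N^z)^{-1}\partial_z\Gamma_N^z\big]=-\partial_z\log\det\Gamma_N^z$. With the signs as you wrote them, the diagonal and off-diagonal pieces would not even combine into a single trace formula.

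Consequently $b_0=-N/2$, not $N/2$. For $N=1$ one has $r(z)=\frac{i}{2\sqrt z\,(4\pi\alpha_1-i\sqrt z)}\approx-\frac{1}{2z}$. This agrees with the paper's $r\sim-\frac{1}{2z}\sum_j\big(\sum_n\alpha_n^j\big)\big(\frac{4\pi}{i\sqrt z}\big)^j$, which is precisely what your diagonal term $-\frac{1}{2w}\sum_n(w+4\pi i\alpha_n)^{-1}$ yields once the sign is fixed. The existence of the sequences $\{a_j\}$ and $\{b_j\}$, and hence the statement itself, is unaffected.
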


	\begin{remark}
		The identities in \eqref{eq:sigLNp} show that, under the hypotheses \eqref{eq:ass1} and \eqref{eq:ass2}, the operator $\LN$ is positive semidefinite. This ensures, in turn, that the field theory under analysis is free of instabilities. The latter otherwise occur whenever the spatial part of the Klein-Gordon operator posses negative eigenvalues. 
		As anticipated, the conditions \eqref{eq:ass1} and \eqref{eq:ass2} are not strictly necessary for this property; see the spectral characterization in \eqref{eq:sigLN}.		
	\end{remark}

	\begin{remark}\label{rem:Lamtc}
		The limit $z \to \infty$ in \eqref{eq:asyrzinf} is understood, following \cite{SZ09}, with $z$ approaching infinity along any contour of the form $\Lambda_{\theta,-c} = \{z \in \C\,|\, \arg(z + c) = \theta \; \mbox{or}\; \arg(z + c) = 2\pi - \theta\}$, for fixed $c > 0$ and $0 < \theta < \pi/2$. The notation $f(z) \sim \sum_{j = 0}^{+\infty} c_j z^{\gamma_j}$ for $z \to z_*$, with $z_* = 0$ or $z_* = \infty$, indicates that $f(z) - \tx \sum_{j = 0}^{J} c_j\, z^{\gamma_j} = O(z^{\gamma_{J+1}})$ for any $J \in \Nz$, regardless of the convergence of the series. 
		Note also that our convention for resolvent operators differs by a sign from that adopted in \cite{SZ09}.
	\end{remark}

On account of \cref{prop:rz} and \cite[Lemma 2.1]{SZ09}, the pair of non-negative operators $(\LN,\Lz)$ satisfies conditions analogous to (1.1)-(1.3) of \cite{M98} for the associated semi-groups $(e^{-\tt \LN},e^{-\tt\Lz})$, with $\tt > 0$. Here $\tt$ is the standard symbol employed for the semi-groups' parameter and should not be confused with the time coordinate.
Accordingly, we can define the {\sl relative zeta function} by the equation 
	\begin{equation}\label{eq:zetamull}
		\zeta(s;\LN,\Lz) := \frac{1}{\Gamma(s)} \int_0^\infty\! d\tt\;\tt^{s-1} \Tr\big[ e^{-\tt \LN} - e^{-\tt \Lz}\big]\,,
	\end{equation}
for $s$ ranging in the complex strip 
	\begin{equation}\label{eq:sinstrip}
		\{ s \in \C\;|\; 0 < \Re s < 1/2\}\,,
	\end{equation}	
The function $\zeta(s;\LN,\Lz)$ is analytic in this strip and admits a meromorphic extension to the entire complex plane. For the explicit construction of the analytic continuation it is convenient to consider the equivalent representation
	\begin{equation}\label{eq:zetaev}
		\zeta(s;\LN,\Lz) = \int_0^{\infty} dv\;v^{-2s}\,\ee(v;\LN,\Lz)\,,
	\end{equation}
in terms of the {\sl relative spectral density}
	\begin{equation}\label{eq:defee}
		\ee(v;\LN,\Lz) := \frac{v}{i\pi} \lim_{\varepsilon \to 0^+} \big[ \,\rr(v^2 e^{i\varepsilon};\LN,\Lz) - \rr(v^2 e^{i(2\pi - \varepsilon)};\LN,\Lz) \,\big] \,.
	\end{equation}
We remark once again that the convention for the resolvent operator adopted here differs from that used in \cite{SZ09}, but the signs in the corresponding formulas have been adjusted accordingly.

	\begin{proposition}\label{prop:ee}
		Assume \eqref{eq:ass1} and \eqref{eq:ass2}. Then, the map $v \in (0,+\infty) \mapsto \ee(v;\LN,\Lz)$ is smooth and the following hold:
		\begin{enumerate}[i)]
			
			\item there exists a sequence $\{f_{j}\}_{j \in \Nz} \subset \R$ such that
				\begin{equation}\label{eq:asyev0}
				 	\ee(v;\LN,\Lz) \sim \sum_{j = 0}^{+\infty} f_j \,v^{2j}\,,
				 	\qquad \mbox{as $v \to 0^+$}\,.
				\end{equation}							
			
			\item there exist $\{g_j\}_{j \in \Nz} \subset \R$, $\{K_j\}_{j \in \Nz}  \subset \Nz$, $\{h_{j,k}\}_{j,k \in \Nz} \subset \C$ and $\{d_{j,k}\}_{j,k \in\Nz} \subset \R_+$ such that
				\begin{equation}\label{eq:asyeving}
				 	\ee(v;\LN,\Lz) \sim \sum_{j = 0}^{+\infty} v^{-j-2} \bigg[ g_j + \Re\bigg(\sum_{k = 0}^{K_j} h_{j,k}\,e^{i d_{j,k} v}\bigg) \bigg] \,,
				 	\qquad \mbox{as $v \to +\infty$}\,.
				\end{equation}						
														
		\end{enumerate}
	\end{proposition}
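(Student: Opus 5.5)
Our approach is to first obtain a closed formula for the relative resolvent trace and then take boundary values on the two sides of the cut. From \eqref{eq:GN} the trace of the finite-rank difference is
\[
\rr(z;\LN,\Lz)=\sum_{\mm,\nn}\big[\GaN^z\big]^{-1}_{\mm\nn}\int_{\R^3}\!dx\,\Gz^z(x_\nn-x)\Gz^z(x-x_\mm)
=\sum_{\mm,\nn}\big[\GaN^z\big]^{-1}_{\mm\nn}\,\tfrac{d}{dz}\Gz^z(x_\nn-x_\mm),
\]
where we use $R(z;\Lz)^2=\frac{d}{dz}R(z;\Lz)$. The diagonal entries are understood as $\frac{d}{dz}\frac{i\sqrt z}{4\pi}$. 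Comparing with \eqref{eq:GaN}, we have $\frac{d}{dz}\GaN^z=-\frac{d}{dz}\Gz$ entrywise, including the diagonal. Hence
\[
\rr=-\Tr\big[(\GaN^z)^{-1}\tfrac{d}{dz}\GaN^z\big]=-\tfrac{d}{dz}\log\det\GaN^z.
\]
Set $k=\sqrt z$ and $z=v^2e^{i\varepsilon}$, so that $k\to v$ from above. Taking $z=v^2e^{i(2\pi-\varepsilon)}$ instead gives $k\to -v$. Since $\frac{d}{dz}=\frac{1}{2k}\frac{d}{dk}$, this yields
\[
\ee(v)=-\frac{1}{2\pi i}\frac{d}{dv}\Big[\log\det\Gamma(v)-\log\det\Gamma(-v)\Big]
=-\frac1\pi\frac{d}{dv}\arg\det\Gamma(v).
\]
Here $\Gamma(k)_{\mm\nn}=(\alpha_\nn-\tfrac{ik}{4\pi})\delta_{\mm\nn}-\tfrac{e^{ik|x_\mm-x_\nn|}}{4\pi|x_\mm-x_\nn|}(1-\delta_{\mm\nn})$, and we use that $\Gamma(-v)=\overline{\Gamma(v)}$ for real $v$.

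The first key step is invertibility of $\Gamma(k)$ for all $\Im k\ge0$, which also gives the spectral claim and smoothness of $\ee$. Write $\Gamma(k)=D(k)(I-T(k))$, with $D=\mathrm{diag}(\alpha_\nn-ik/4\pi)$ and $T_{\mm\nn}=D_{\mm\mm}^{-1}\Gz$. For $\Im k\ge0$ and $\alpha>0$ we have $|\alpha_\mm-ik/4\pi|\ge\alpha_\mm$ and $|e^{ik d}|\le1$. Therefore the Hilbert–Schmidt norm satisfies $\|T\|_{HS}^2\le\sum_{\mm\ne\nn}(4\pi\alpha_\mm|x_\mm-x_\nn|)^{-2}<1$ by \eqref{eq:ass2}. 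Thus $\det\Gamma(k)\neq0$ on the closed upper half-plane, so there are no negative eigenvalues, giving \eqref{eq:sigLNp} (the case $z=0$ is included). Moreover, $\log\det\Gamma(k)=\sum_\nn\log(\alpha_\nn-ik/4\pi)+\Tr\log(I-T(k))$, with $\Tr\log(I-T)=-\sum_{p\ge1}\frac1p\Tr T^p$ convergent. All ingredients are smooth in $v>0$ (indeed real-analytic, except at $v=0$, where only $k$ enters), so $\ee$ is smooth.

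For small $v$, every entry of $\Gamma(k)$ is entire in $k$ and $\Gamma(0)$ is invertible, so $\arg\det\Gamma(v)=\Im\log\det\Gamma(v)$ is a real power series in $v$. By conjugation symmetry it is odd in $v$. Its derivative is therefore even, giving \eqref{eq:asyev0} with real $f_j$. Part i) of \cref{prop:rz} follows similarly, since $\rr=-\frac{1}{2k}\frac{d}{dk}\log\det\Gamma(k)$ and the derivative is a power series in $k=z^{1/2}$.

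For large $v$, which is the main obstacle, we expand $\log\det\Gamma=\sum_\nn\log(\alpha_\nn-\tfrac{iv}{4\pi})+\Tr\log(I-T)$. The first sum has derivative with a pure power expansion, $\Im\frac{d}{dv}\log(\alpha-iv/4\pi)=-\frac{d}{dv}\arctan(v/4\pi\alpha)$, which is $\sim\sum_j c_j v^{-2-2j}$; this produces the $g_j$. The trace terms $\Tr T^p$ are sums over closed chains $\nn_1\to\dots\to\nn_p\to\nn_1$ without immediate repetition. Each is a product of $e^{ik L}/(4\pi L)$ factors, where $L$ is the total length, times factors $(\alpha-ik/4\pi)^{-1}$. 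Since $p\ge2$ gives $|T|=O(v^{-1})$, we have $\Tr T^p=O(v^{-p})$. So for each order $v^{-j-2}$ only finitely many $p$ contribute, and these involve finitely many path lengths $d_{j,k}$. Expanding $(\alpha-iv/4\pi)^{-1}$ in powers of $v^{-1}$ and differentiating (differentiating $e^{ivd}$ costs no decay, but the leading $p=2$ term is already $O(v^{-2})$) gives the oscillatory terms $\Re(h_{j,k}e^{id_{j,k}v})$. Part ii) of \cref{prop:rz} follows in the same manner along the contours $\Lambda_{\theta,-c}$, where the exponentials decay exponentially and only the power series survives, giving $z^{-j/2-1}$. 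The delicate point will be bookkeeping uniformity: we must check that truncating the $p$-series is legitimate at each order, using $\|T\|\le Cv^{-1}$ for the tail.
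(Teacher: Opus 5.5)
Your proof is correct, but it takes a different route from the paper. The paper works directly with the inverse matrix. It computes $\rr$ by Fourier transform and residues, obtaining \eqref{eq:rzexp} and hence $\ee(v;\LN,\Lz)=\frac{1}{4\pi^2}\Re\sum_{\mm,\nn}[\GaN^{+,v^2}]^{-1}_{\mm\nn}e^{iv|x_\mm-x_\nn|}$. It then expands $(\GaN^{+,v^2})^{-1}$ in a Neumann series around $\GaN^0$ for small $v$, and in powers of $S_N^{+,v^2}$ for large $v$. The $g_j$ are read off from purely diagonal products, and the cancellation of the $v^{-1}$ term is checked by explicit computation. You instead use $R(z;\Lz)^2=\frac{d}{dz}R(z;\Lz)$ to identify $\rr=-\frac{d}{dz}\log\det\GaN^z$. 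This turns $\ee$ into the phase-derivative formula $-\frac1\pi\frac{d}{dv}\arg\det\GaN^{+,v^2}$, which you then handle with the loop expansion $\log\det(I-T)=-\sum_{p\ge2}p^{-1}\Tr T^p$. Jacobi's formula, with $\frac{d}{dv}[\GaN^{+,v^2}]_{\mm\nn}=-\frac{i}{4\pi}e^{iv|x_\mm-x_\nn|}$, shows the two representations coincide exactly. Your route makes several features structural rather than computed. The missing $v^{-1}$ term reflects the fact that each single-center phase $\arg(\alpha_\nn-iv/4\pi)$ tends to a constant. The frequencies $d_{j,k}$ appear explicitly as perimeters of closed multiple-scattering loops. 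Only loops with at most $j+2$ legs contribute at order $v^{-j-2}$, and the tail is controlled by $\|T\|,\|T'\|\le Cv^{-1}$ as you indicate. The paper's matrix formula, in exchange, needs no branch choices for $\log$ or $\arg$. It is also exactly the form reused in \cref{lemma:eNalt} to define the Born terms $\ee_j$ and to derive \cref{thm:EvacBorn}.
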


	\begin{remark}\label{rem:gj}
		All coefficients $f_j$ in \eqref{eq:asyev0} and $g_{j},K_j,h_{j,k},d_{j,k}$ in \eqref{eq:asyeving} can be computed explicitly.
		Especially, we record that
		\begin{equation}
			f_0 = \frac{1}{4 \pi^2} \!\sum_{\mm,\nn=1}^N \!\big[\GaN^0\big]_{\mm\nn}^{-1}\,,	\label{eq:f0}
		\end{equation}		
		while,
		\begin{equation}\label{eq:gj}
			g_j = \begin{cases}
			 \frac{(-1)^{j/2}}{\pi} \tx \sum_{\nn=1}^N (4\pi \alpha_\nn)^{j+1}\,, & \mbox{for even $j$}\,, \vspace{0.1cm}\\
			 0\,, & \mbox{for odd $j$}\,,
			\end{cases}
		\end{equation}		
		and
		\begin{equation}\label{eq:eevinf}
			\ee(v;\LN,\Lz) 
			= \frac{4}{v^2} \Bigg(\sum_{\nn=1}^N  \alpha_\nn - \frac{1}{4\pi}\sum_{\mm \neq \nn} \frac{\cos(2 v |x_\mm-x_\nn|)}{|x_\mm-x_\nn|}\Bigg) + O(v^{-3})\,, \qquad \mbox{as $v \to +\infty$}\,.	
		\end{equation}
		Here and in \eqref{eq:asyeving} the non-oscillating and oscillating terms have been distinguished as they are going to play different roles in the subsequent analysis.

		Two further features emerge from the proof of \cref{prop:ee}. First, the coefficients $d_{j,k}$ are given by linear combinations of the mutual distances between the points. Second, the leading-order contribution in the expansion \eqref{eq:asyeving} for $v \to +\infty$ would be in principle of order $v^{-1}$, but the corresponding coefficient vanishes due to an exact cancellation. Other coefficients may vanish as well, but this has not been investigated further as it does not affect the following discussion.
	\end{remark}

Building on \cref{prop:ee} and using the representation \eqref{eq:zetaev}, we obtain the following.

	\begin{theorem}\label{thm:zLNLz}
		Assume \eqref{eq:ass1} and \eqref{eq:ass2}. Then, the map $s \in \C \mapsto \zeta(s;\LN,\Lz)$ is meromorphic, with possible simple poles at $s \in 1/2 + \Z$.
	\end{theorem}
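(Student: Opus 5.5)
We will work with the representation \eqref{eq:zetaev} and split the integral at $v=1$:
\begin{equation*}
\zeta(s;\LN,\Lz) = \int_0^1 dv\, v^{-2s}\,\ee(v;\LN,\Lz) + \int_1^\infty dv\, v^{-2s}\,\ee(v;\LN,\Lz) =: \zeta_{\IR}(s) + \zeta_{\UV}(s)\,.
\end{equation*}
By \cref{prop:ee}, $\ee$ is smooth on $(0,+\infty)$ and bounded near $0$, since $\ee\to f_0$. Hence $\zeta_{\IR}$ is analytic for $\Re s<1/2$. It also decays like $v^{-2}$ at infinity, so $\zeta_{\UV}$ is analytic for $\Re s>-1/2$. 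This recovers analyticity in the strip \eqref{eq:sinstrip}, consistent with \eqref{eq:zetamull}. The task is then to continue each piece separately.

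\emph{Infrared part.} For any $J\in\Nz$ we write $\ee(v)=\sum_{j=0}^{J} f_j v^{2j} + \rho_J(v)$ on $(0,1]$, where $\rho_J(v)=O(v^{2J+2})$ by \eqref{eq:asyev0}. Integrating the polynomial terms exactly gives
\begin{equation*}
\zeta_{\IR}(s)=\sum_{j=0}^{J}\frac{f_j}{2j+1-2s}+\int_0^1 dv\,v^{-2s}\rho_J(v)\,.
\end{equation*}
The remainder integral is analytic for $\Re s<J+3/2$, while the sum is meromorphic with simple poles at $s=j+1/2$, $j=0,\dots,J$. Letting $J\to\infty$ continues $\zeta_{\IR}$ meromorphically to all of $\C$, with simple poles only in $1/2+\Nz$.

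\emph{Ultraviolet part.} We use \eqref{eq:asyeving} in the same way: $\ee(v)=\sum_{j=0}^{J} v^{-j-2}\big[g_j+\Re\sum_{k} h_{j,k}e^{id_{j,k}v}\big]+\sigma_J(v)$, where $\sigma_J(v)=O(v^{-J-3})$. The remainder integral $\int_1^\infty v^{-2s}\sigma_J$ is analytic for $\Re s>-J/2-1$. The non-oscillating terms give $\int_1^\infty v^{-2s-j-2}g_j\,dv = g_j/(2s+j+1)$. This has a simple pole at $s=-(j+1)/2$, which lies in $1/2+\Z$ only for even $j$, and by \eqref{eq:gj} $g_j=0$ for odd $j$. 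This parity cancellation is exactly what places all the ultraviolet poles in $1/2+\Z$ rather than in $\tfrac12\Z$.

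\emph{Oscillatory terms.} These are the main obstacle. We must show that for $d=d_{j,k}>0$ the integral $I(s)=\int_1^\infty v^{-2s-j-2}e^{idv}\,dv$ extends to an entire function of $s$. The plan is repeated integration by parts: writing $e^{idv}=(id)^{-1}\partial_v e^{idv}$ gives
\begin{equation*}
I(s)=-\frac{e^{id}}{id}+\frac{2s+j+2}{id}\int_1^\infty v^{-2s-j-3}e^{idv}\,dv\,.
\end{equation*}
Iterating $M$ times yields a finite sum of entire functions plus an integral convergent for $\Re s>-(j+M+1)/2$, so $I$ is entire. Alternatively, one can rotate the contour $v\mapsto 1+it$ to obtain exponential decay. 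The delicate point is a possible degenerate coefficient $d_{j,k}=0$, since the $d_{j,k}$ are linear combinations of distances. Such a term is non-oscillating and merely adds to $g_j$. We would therefore have to check from the proof of \cref{prop:ee} that the statement $d_{j,k}\in\R_+$ means $d_{j,k}>0$, so that these zero-frequency contributions are already absorbed into $g_j$ and the parity statement \eqref{eq:gj} still applies. Taking $J\to\infty$ then continues $\zeta_{\UV}$ meromorphically to $\C$ with simple poles only at $s=-(j+1)/2$, $j$ even, that is, in $-1/2-\Nz$.

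Combining the two parts, $\zeta(s;\LN,\Lz)$ is meromorphic on $\C$ with at most simple poles in $1/2+\Z$.
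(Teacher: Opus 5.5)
Your proposal is correct and follows the paper's proof essentially step by step. The paper likewise splits \eqref{eq:zetaev} at an arbitrary $\vz>0$ (your choice $v=1$ is immaterial). It then subtracts the small-$v$ Taylor terms to obtain poles in $1/2+\Nz$ (\cref{lemma:ZIR}). Finally, it subtracts the large-$v$ expansion, using $g_j=0$ for odd $j$ and repeated integration by parts on the oscillatory pieces, to obtain poles only in $-1/2-\Nz$ (\cref{lemma:ZUV}).

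Your worry about $d_{j,k}=0$ is harmless; the paper simply takes $d_{j,k}>0$ for granted. When the Neumann expansion of $\big[\GaN^{+,v^2}\big]^{-1}$ is inserted into \eqref{eq:eexp}, every term other than the purely diagonal ones (which produce the $g_j$) carries at least one factor $e^{iv|x_\mm-x_\nn|}$ with $\mm\neq\nn$. All these phases add with the same sign, so each $d_{j,k}$ is a strictly positive sum of mutual distances.
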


	\begin{remark}
		The residues of $\zeta(s;\LN,\Lz)$ at the negative values $s \in -1/2 -\Nz$ are entirely determined by the coefficients $g_j$ in \eqref{eq:gj}. In particular, they depend on the interaction parameters $\alpha_\nn$, but not on the positions $x_\nn$ of the obstacles.	
	\end{remark}
	
\subsection{The renormalized Casimir energy and thermal contributions}\label{sec:Zren}		
We now turn to the physical application and compute the renormalized connected partition function for the model under analysis, defined according to \eqref{eq:lnZN}.

	\begin{theorem}\label{thm:ZZren}
		Assume \eqref{eq:ass1} and \eqref{eq:ass2}. Let $\beta > 0$ be the inverse temperature and $\ell > 0$ the length scale parameter. Then, the renormalized connected relative partition function is given by	
		\begin{equation}\label{eq:ZZren}
			\big[\ln \ZeN - \ln \Zez\big]_{\ren} = - \mathcal{E}_{\vac}\,\beta - \log\eta(\beta;\LN,\Lz)\,,
		\end{equation}
		where, for $\vz > 0$ fixed arbitrarily,
		\begin{multline}\label{eq:Ecasdef}
			\mathcal{E}_{\vac} := 2 \big[1-\log(2\ell \vz)\big] \sum_{n =1}^N \alpha_n 
			+ \frac{1}{2} \int_0^{\vz}\! dv\;v\; \ee(v;\LN,\Lz)  \\
			+ \frac{1}{4\pi} \sum_{\mm \neq \nn} \left[ \frac{\sin(2 \vz|x_\mm - x_\nn|\,)}{\vz\,|x_\mm - x_\nn|^2} 
				-  \int_{\vz}^{\infty}\! dv\; \frac{\sin(2v |x_\mm - x_\nn|\,)}{v^2\,|x_\mm-x_\nn|}\right]
\\					 
		+ \frac{1}{2} \int_{\vz}^{\infty}\! dv\;v \bigg[\ee(v;\LN,\Lz) - \frac{4}{v^2} \bigg(\sum_{\nn=1}^N  \alpha_\nn - \sum_{\mm \neq \nn} \frac{\cos(2 v\,|x_\mm-x_\nn|)}{4\pi|x_\mm-x_\nn|}\bigg) \bigg]\,,
		\end{multline}
		is the {\sl Casimir energy} (or {\sl vacuum energy}) and
		\begin{equation}\label{eq:logndef}
			\eta(\beta;\LN,\Lz) := \exp \left[\int_0^\infty dv\, \log(1-e^{-\beta v})\,\ee(v;\LN,\Lz) \right] .
		\end{equation}
		is the {\sl relative Dedekind eta function}.
	\end{theorem}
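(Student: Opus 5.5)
The plan is to reduce the thermal problem to the spatial relative spectral density $\ee(v;\LN,\Lz)$ and then use \cref{prop:ee} and \cref{thm:zLNLz} to carry out the analytic continuation explicitly.

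\textbf{Step 1: Matsubara decomposition.} Since $\AN = -\partial_{\tau\tau} + \LN$ acts on $L^2(S^1_{\beta/2\pi}\times\R^3)$, Fourier decomposition in $\tau$ gives $\AN = \bigoplus_{k\in\Z} \big((2\pi k/\beta)^2 + \LN\big)$, and likewise for $\Az$. The heat traces factorize, so
\[
\zeta(s;\AN,\Az) = \int_0^\infty dv\, \ee(v;\LN,\Lz) \sum_{k\in\Z}\big(v^2 + (2\pi k/\beta)^2\big)^{-s},
\]
valid first for $\Re s$ large enough that the Matsubara sum converges.

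\textbf{Step 2: split off the $k$-sum.} I would apply the standard Chowla--Selberg / Poisson-type identity
\[
\sum_{k\in\Z}\big(v^2+(2\pi k/\beta)^2\big)^{-s} = \frac{\beta\,\Gamma(s-\tfrac12)}{2\sqrt{\pi}\,\Gamma(s)}\, v^{1-2s} + (\text{terms with } K_{s-1/2}(\beta v n),\ n\geqslant1).
\]
The Bessel terms are entire in $s$ and vanish at $s=0$ because of the factor $1/\Gamma(s)$. Their $s$-derivative at $s=0$ yields $2\sum_{n\geqslant1} e^{-n\beta v}/n = -2\log(1-e^{-\beta v})$. Integrated against $\ee$, and multiplied by $\tfrac12$, this gives $-\log\eta(\beta;\LN,\Lz)$. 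Convergence near $v=0$ follows from $\ee\to f_0$ in \eqref{eq:asyev0}, and at infinity from the exponential decay. The $\log\ell$ term multiplies $\zeta(0)$, and the Bessel part of $\zeta(0)$ vanishes.

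\textbf{Step 3: the zero-temperature piece.} This contributes
\[
\frac{\beta\,\Gamma(s-\tfrac12)}{2\sqrt\pi\,\Gamma(s)}\,\zeta(s-\tfrac12;\LN,\Lz),
\]
using \eqref{eq:zetaev}. Since $1/\Gamma(s)\sim s$, the value at $s=0$ and the derivative involve the finite part and the residue of $\zeta(\sigma;\LN,\Lz)$ at $\sigma=-1/2$. By \cref{thm:zLNLz} this residue is a simple pole, and I would compute it explicitly. Split $\int_0^\infty v^{1-2s}\ee\,dv = \int_0^{\vz} + \int_{\vz}^\infty$. On $[0,\vz]$ the integral is analytic at $s=0$ and tends to $\int_0^{\vz} v\,\ee$. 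On $[\vz,\infty)$, subtract the leading asymptotics \eqref{eq:eevinf}.
\begin{itemize}
\item The remainder is $O(v^{-3})$, so its integral is analytic near $s=0$ and gives the last line of \eqref{eq:Ecasdef}.
\item The non-oscillating term gives $4\sum\alpha_\nn\,\vz^{-2s}/(2s)$, which is the pole.
\item The cosine terms are handled by integration by parts, producing the $\sin$ boundary term and the $\int_{\vz}^\infty \sin/v^2$ integral. These are analytic at $s=0$.
\end{itemize}
Combining the Laurent expansion of $\Gamma(s-\tfrac12)/\Gamma(s) = -2\sqrt\pi\, s\,(1+(2-2\log2-\psi\text{-type constants})s+\dots)$ with the pole then produces $-\beta\,\mathcal{E}_\vac$. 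The $\log\ell$ contribution, $\zeta(0)\propto\beta\sum\alpha_\nn$, combines with the $\log\vz$ and $\log 2$ terms into $2[1-\log(2\ell\vz)]\sum\alpha_\nn$.

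\textbf{Main obstacle.} The delicate part is justifying the interchange of the $k$-sum, the $v$-integral and the analytic continuation in a common strip. The relevant bounds are the small-$v$ behavior of \eqref{eq:asyev0} and the large-$v$ asymptotics of \eqref{eq:asyeving}. A second difficulty is tracking the constants in the Laurent expansion exactly, so that the bracket $1-\log(2\ell\vz)$ comes out right. Finally, one must check that the result is independent of $\vz$, which I would verify by differentiating in $\vz$.
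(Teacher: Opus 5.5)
Your plan is correct and is essentially the paper's proof. The paper imports your Steps 1--2 from \cite[Prop.~3.1]{SZ09}: $\zeta(0;\AN,\Az)=-\beta R$ and $\zeta'(0;\AN,\Az)=-\beta F-2\beta(1-\log 2)R-2\log\eta(\beta;\LN,\Lz)$, with $R$ and $F$ the residue and finite part of $\zeta(\cdot\,;\LN,\Lz)$ at $s=-1/2$. It then does exactly your $\vz$-split, subtracting \eqref{eq:eevinf} on $[\vz,\infty)$ and integrating the cosine terms by parts (\cref{lemma:ZUV} with $\JUV=0$). Independence of $\vz$ needs no differentiation, since the result is $[1-\log(2\ell)]R+\tfrac12 F$ and $R,F$ are intrinsic to $\zeta$.

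There are two corrections to your own derivation of that reduction.

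First, the Laurent coefficient is exactly
$\Gamma(s-\tfrac12)/\Gamma(s)=-2\sqrt{\pi}\,\big(s+(2-2\log 2)s^2+O(s^3)\big)$.
The Euler constant in $\psi(-\tfrac12)=2-\gamma-2\log 2$ cancels the one in $1/\Gamma(s)=s+\gamma s^2+\dots$, and no other constant survives. This is what produces $2[1-\log(2\ell\vz)]\sum_\nn\alpha_\nn$.

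Second, and more substantively, your Step~1 formula holds for no value of $s$. The Matsubara sum converges only for $\Re s>1/2$. For such $s$, the $k=0$ mode contributes $\int_0^\infty v^{-2s}\,\ee(v;\LN,\Lz)\,dv$, which diverges at $v=0$ because $f_0>0$. Indeed, every term of the Neumann series for $(\GaN^0)^{-1}$ has nonnegative entries, and the zeroth term alone gives $\sum_\nn\alpha_\nn^{-1}$. So the ``common strip'' you intend to use does not exist for that expression.

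The thermal relative zeta function has to be defined, following \cite{M98}, by continuing the small-$t$ and large-$t$ parts of its Mellin integral separately. A common strip $0<\Re s<1/2$ appears only after the Poisson resummation: there the $v^{1-2s}$ term converges for $0<\Re s<1$ and the integrated Bessel series for $\Re s<1/2$. Identifying that resummed expression with $\zeta(s;\AN,\Az)$ near $s=0$ is precisely what \cite[Prop.~3.1]{SZ09} provides. Either prove it at that level or cite it, as the paper does.
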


	\begin{remark}
		All the integrals appearing in \eqref{eq:Ecasdef} and \eqref{eq:logndef} are absolutely convergent, owing to the results established in \cref{prop:ee}. Furthermore, the expression on the right-hand side of \eqref{eq:Ecasdef} is independent of $v_0 > 0$. The proof also makes clear that only the Casimir contribution \eqref{eq:Ecasdef} requires renormalization via analytic continuation, whereas $\log\eta(\beta;\LN,\Lz)$ is well defined from the outset.
	\end{remark}
	
	\begin{remark}\label{rem:ellind}
		The renormalization length $\ell > 0$ appears only in the first term on the right-hand side of \eqref{eq:Ecasdef}, defining $\mathcal{E}_{\vac}$. Crucially, this contribution depends solely on the interaction strengths $\alpha_\nn$, while being independent of the positions of the obstacles. As a consequence, the forces generated by vacuum fluctuations do not depend on $\ell$, since they are determined by derivatives of $\mathcal{E}_{\vac}$ with respect to the points positions. The same argument holds for thermal contributions. The parameter $\ell$ therefore serves as a purely auxiliary coefficient, and does not affect the physically measurable quantities.
	\end{remark}

Despite the rather cumbersome expressions appearing in \eqref{eq:Ecasdef}, \cref{thm:ZZren} provides a fully explicit expression for the renormalized partition function that can be used to derive the main thermodynamic quantities. In particular, one can consider the renormalized {\sl Helmholtz free energy}, {\sl internal energy} and {\sl entropy} at inverse temperature $\beta > 0$, respectively defined as follows:
\begin{align}
	\Fr(\beta) &:= - \tfrac{1}{\beta} \big[\ln \ZeN - \ln \Zez\big]_{\ren}\,; \label{eq:Frdef} \\
	\Ur(\beta) &:= - \partial_\beta \big[\ln \ZeN - \ln \Zez\big]_{\ren}\,; \label{eq:Urdef}\\
	\Sr(\beta) &:= \beta\,\Ur + \big[\ln \ZeN - \ln \Zez\big]_{\ren}\,. \label{eq:Srdef}
\end{align}

	\begin{corollary}\label{cor:FUS}
		Assume \eqref{eq:ass1} and \eqref{eq:ass2}. Then, the following asymptotic expansions hold.
		\begin{enumerate}[i)]
		\item {\sl Low-temperature limit}, for $\beta \to +\infty$:
			\begin{align}
				\Fr(\beta) & \sim \mathcal{E}_{\vac} - \frac{\pi^2}{\beta^2} \sum_{j = 0}^{+\infty} \frac{(2\pi)^{2j}\,|B_{2j+2}|\,f_j}{(j+1)(2j+1)}\,\frac{1}{\beta^{2j}} \,, \vspace{0.1cm}\label{eq:Frbinf}\\
				\Ur(\beta) & \sim \mathcal{E}_{\vac} + \frac{\pi^2}{\beta^2} \sum_{j = 0}^{+\infty} \frac{(2\pi)^{2j}\,|B_{2j+2}|\,f_j}{(j+1)}\,\frac{1}{\beta^{2j}}\,, \vspace{0.1cm} \label{eq:Urbinf}\\
				\Sr(\beta) & \sim \frac{2\pi^2}{\beta} \sum_{j = 0}^{+\infty} \frac{(2\pi)^{2j}\,(j+1)\,|B_{2j+2}|\,f_j}{(j+1)(2j+1)}\,\frac{1}{\beta^{2j}}\,,\label{eq:Srbinf}
			\end{align}
		where $\{B_{2j+2}\}_{j \in \Nz}$ are the even Bernoulli numbers and $\{f_{j}\}_{j \in \Nz} \subset \R$ are the coefficients in \eqref{eq:asyev0}.
		
		\item {\sl High-temperature limit}, for $\beta \to 0^+$:
			\begin{align}
				\Fr(\beta) &= \left(\int_0^\infty\!\! dv\,\ee(v;\LN,\Lz)\right) \frac{\log\beta}{\beta}
							 + \left(\int_0^\infty\!\! dv\,\log v\;\ee(v;\LN,\Lz)\right) \frac{1}{\beta} + \mathcal{O}(\log\beta)\,, \vspace{0.1cm}\label{eq:Frb0} \\
				\Ur(\beta) &= \left(\int_0^\infty\!\! dv\; \ee(v;\LN,\Lz)\right) \frac{1}{\beta}  + \mathcal{O}(\log\beta) \,, \vspace{0.1cm} \label{eq:Urb0}\\
				\Sr(\beta) &= 
				- \left(\int_0^\infty\!\! dv\,\ee(v;\LN,\Lz)\right) \log\beta
				+ \int_0^\infty\!\! dv\; (1-\log v)\, \ee(v;\LN,\Lz) 
				+ \mathcal{O}(\beta \log\beta)\,. \label{eq:Srb0}
			\end{align}
			Furthermore all the integrals in the above equations are absolutely convergent.
		\end{enumerate}
	\end{corollary}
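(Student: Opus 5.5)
The plan is to reduce every statement to the asymptotics of the single function
\[
\Phi(\beta) := \log\eta(\beta;\LN,\Lz) = \int_0^\infty dv\,\log(1-e^{-\beta v})\,\ee(v;\LN,\Lz)
\]
and of its derivative. By \cref{thm:ZZren} and the definitions \eqref{eq:Frdef}--\eqref{eq:Srdef} we have
\[
\Fr = \mathcal{E}_{\vac} + \Phi/\beta, \qquad \Ur = \mathcal{E}_{\vac} + \Phi'(\beta), \qquad \Sr = \beta\Phi'(\beta) - \Phi(\beta),
\]
where
\[
\Phi'(\beta) = \int_0^\infty dv\,\frac{v\,\ee(v)}{e^{\beta v}-1}.
\]
Differentiation under the integral is justified by dominated convergence: by \cref{prop:ee}, $\ee$ is smooth, bounded near $0$ and $O(v^{-2})$ at infinity. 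The same bounds give absolute convergence of every integral in the statement, since $\log v\,\ee(v)$ is integrable at both ends.

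\emph{Low temperature.} I split $\int_0^\infty = \int_0^\delta + \int_\delta^\infty$ with a fixed $\delta>0$.
\begin{itemize}
\item On $[\delta,\infty)$ we have $|\log(1-e^{-\beta v})| \le C e^{-\beta v}$ for $\beta\ge1$, and $\ee$ is bounded there. Hence this piece is $O(e^{-\beta\delta}/\beta)$, i.e.\ beyond all orders.
\item On $[0,\delta]$ I insert the expansion \eqref{eq:asyev0} truncated at order $J$, with remainder $O(v^{2J+2})$. Extending the polynomial terms back to $(0,\infty)$ costs only exponentially small errors.
\end{itemize}
With $v=w/\beta$, each polynomial term gives
\[
\int_0^\infty v^{2j}\log(1-e^{-\beta v})\,dv = -\beta^{-2j-1}(2j)!\,\zeta_R(2j+2),
\]
where $\zeta_R$ is the Riemann zeta function (expand the logarithm as a power series in $e^{-w}$). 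The remainder contributes $O(\beta^{-2J-3})$. Inserting $\zeta_R(2j+2) = (2\pi)^{2j+2}|B_{2j+2}|/(2(2j+2)!)$, the $j$-th term becomes
\[
-\pi^2(2\pi)^{2j}|B_{2j+2}|\,f_j\,\big/\big((j+1)(2j+1)\big)\;\beta^{-2j-1}.
\]
This yields \eqref{eq:Frbinf}. For $\Phi'$ I repeat the argument with the kernel $v/(e^{\beta v}-1)$, using $\int_0^\infty w^{2j+1}(e^{w}-1)^{-1}dw = (2j+1)!\,\zeta_R(2j+2)$. This is consistent with termwise differentiation and gives \eqref{eq:Urbinf}. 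Finally \eqref{eq:Srbinf} follows from $\Sr = \beta\Phi' - \Phi$ by combining coefficients.

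\emph{High temperature.} I write $\log(1-e^{-\beta v}) = \log\beta + \log v + h(\beta v)$, where $h(x) = \log\big((1-e^{-x})/x\big)$. Then
\[
\Phi(\beta) = \Big(\int\ee\Big)\log\beta + \int\log v\,\ee + \int_0^\infty h(\beta v)\,\ee(v)\,dv.
\]
Since $|h(x)| \le C\min(x,1+\log(1+x))$ and $|\ee(v)| \le C\min(1,v^{-2})$, splitting at $v=1$ and $v=1/\beta$ bounds the last integral by
\[
C\beta\int_1^{1/\beta}v^{-1}dv + C\int_{1/\beta}^\infty \log(1+\beta v)\,v^{-2}dv = O(\beta\log\beta).
\]
Dividing by $\beta$ gives \eqref{eq:Frb0}. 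Similarly,
\[
\Phi'(\beta) = \beta^{-1}\int\ee + \beta^{-1}\int k(\beta v)\,\ee(v)\,dv, \qquad k(x) = x/(e^x-1) - 1,
\]
and since $|k(x)| \le \min(x,1)$ the same splitting gives an error $O(\log\beta)$, whence \eqref{eq:Urb0}. Substituting both expansions into $\Sr = \beta\Phi' - \Phi$ yields \eqref{eq:Srb0}, with the terms combining into $-(\int\ee)\log\beta + \int(1-\log v)\,\ee$.

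The main obstacle is the uniform control of remainders. At low temperature it sits in the matching of the small-$v$ expansion with an exponentially small tail. At high temperature it sits in the oscillating large-$v$ behaviour \eqref{eq:asyeving}, which only yields integrability of $\ee$ and $\log v\,\ee$ rather than a clean expansion. This is why I only claim $O(\beta\log\beta)$-type errors there, and for that the bound $\ee = O(v^{-2})$ suffices.
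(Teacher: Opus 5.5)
Your proposal is correct and follows essentially the same route as the paper. You write $\Fr,\Ur,\Sr$ in terms of $\log\eta(\beta;\LN,\Lz)$ and its $\beta$-derivative. For low temperatures you split at a small cutoff, insert the expansion \eqref{eq:asyev0}, evaluate the polynomial terms through $\Gamma$ and $\zeta_{\mathrm{R}}$, and discard exponentially small tails. For high temperatures you compare $\log(1-e^{-\beta v})$ with $\log(\beta v)$, and $\beta v/(e^{\beta v}-1)$ with $1$, using $|\ee(v;\LN,\Lz)|\leqslant C/(1+v^2)$ and splitting at $v=1$ and $v=1/\beta$, exactly as in \cref{lemma:logetainf,lemma:logeta0}.
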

	
	\begin{remark}
		The asymptotic relations \eqref{eq:Frbinf}-\eqref{eq:Srbinf} imply, in particular,
		\begin{equation}
				\lim_{\beta \to +\infty} \Fr(\beta) = \lim_{\beta \to +\infty} \Ur(\beta) = \mathcal{E}_{\vac}\,, \qquad
				\lim_{\beta \to +\infty} \Sr(\beta) = 0\,.\label{eq:FSrlim}
			\end{equation}
		This justifies, {\em a posteriori}, the designation ``Casimir energy'' for the constant $\mathcal{E}_{\vac}$: it is the contribution to both the free and internal energies arising purely from vacuum fluctuations of the quantum field, in the absence of thermal excitations.
	\end{remark}
	
	\begin{remark}
		For the reader's convenience, we recall that the Bernoulli numbers appearing in \eqref{eq:Frbinf}-\eqref{eq:Srbinf} are given by \cite[Eq. 24.6.1]{OL+10}
			\begin{equation*}
				B_{2j+2} = \sum_{k=2}^{2j+3} \frac{(-1)^{k-1}}{k} \binom{2j+3}{k} \sum_{h=1}^{k-1} h^{2j+2}\,, \qquad \mbox{for any $j \in \Nz$}\,.
			\end{equation*}
		In particular, for $j = 0$, one has $B_2 = 1/6$. Combining this with the expression \eqref{eq:f0} for $f_0$, the leading thermal corrections in the low-temperature limit $\beta \to +\infty$ read as follows:
			\begin{gather*}
				\Fr(\beta) = \mathcal{E}_{\vac} - \bigg(\frac{1}{24} \!\sum_{\mm,\nn=1}^N \big[\GaN^0\big]^{-1}_{\mm\nn}\bigg)  \frac{1}{\beta^2} + \mathcal{O}(\beta^{-4}) \,; \vspace{0.1cm}\\
				\Ur(\beta) = \mathcal{E}_{\vac} + \bigg(\frac{1}{24} \!\sum_{\mm,\nn=1}^N \big[\GaN^0\big]^{-1}_{\mm\nn}\bigg)  \frac{1}{\beta^2} + \mathcal{O}(\beta^{-4})\,; \vspace{0.1cm} \\
				\Sr(\beta) = \bigg(\frac{1}{12} \!\sum_{\mm,\nn=1}^N \big[\GaN^0\big]^{-1}_{\mm\nn}\bigg)  \frac{1}{\beta} + \mathcal{O}(\beta^{-3})\,. 
			\end{gather*}
		Here $\GaN^0$ is just the $N \times N$ matrix obtained by setting $z = 0$ in the definition \eqref{eq:GaN} of $\GaN^z$.
		The growth of the Bernoulli numbers $B_{2j+2} \sim (2j)^{2j+5/2}$ as $j \to +\infty$ \cite[Eq.~24.11.1]{OL+10} suggests that the series in \eqref{eq:Frbinf}--\eqref{eq:Srbinf} are most likely divergent for any $\beta > 0$. A rigorous assessment of this claim, however, would require a detailed analysis of the asymptotic behavior of the coefficients $f_j$, which lies beyond the scope of this work. We limit ourselves to regard the said series as asymptotic expansions, possibly with zero radius of convergence.
	\end{remark}

	\begin{remark}
		In the low-temperature regime $\beta \to +\infty$, the asymptotics are entirely governed by infrared contributions. In contrast, in the high-temperature limit $\beta \to 0^+$, the infrared and ultraviolet sectors contribute equally, as reflected by the full-range integrals of $\ee(v;\LN,\Lz)$ in \eqref{eq:Frb0}-\eqref{eq:Srb0}. This is what makes the computation of higher-order terms in this regime considerably more involved.
	\end{remark}
	
	\begin{remark}	
		The form of the low- and high-temperature expansions derived in \cref{cor:FUS} is standard in Casimir-type problems involving a scalar field subject to external perturbations. They can in general be related, respectively, to the large- and small-time asymptotics of the relative heat trace $\Tr[e^{-t\LN} - e^{-t \Lz}]$, see {\em e.g.} \cite{DK78} and \cite[\S 5.2]{BK+09}.
	\end{remark}

\subsection{Born series expansion for the Casimir energy}\label{sec:Born}
Having established how to compute the renormalised Casimir energy, we now derive an alternative representation that lends itself to a clearer physical interpretation and further proves to be more convenient for numerical implementations.

	\begin{theorem}\label{thm:EvacBorn}
		Assume \eqref{eq:ass1} and \eqref{eq:ass2}. Then, the vacuum energy can be expressed as 
			\begin{equation}\label{eq:Eren2}
				\mathcal{E}_{\vac} = \mathcal{E}_{0}^{\ren} + \mathcal{E}_{1}^{\ren} + \sum_{j = 2}^{+\infty} \mathcal{E}_{j}\,,
			\end{equation}
		where
			\begin{equation}\label{eq:E0ren}
				\mathcal{E}_{0}^{\ren} := 2 \sum_{\nn=1}^N \alpha_\nn\, \big[1-\log(8\pi \alpha_\nn \ell) \big] \,,
			\end{equation}
			\begin{equation}\label{eq:E1ren}
				\mathcal{E}_{1}^{\ren} :=  - \frac{1}{4\pi} \sum_{\mm \neq \nn} \frac{1}{|x_\mm-x_\nn|^2}\int_0^{\infty}\! dv\; \Im \bigg( \frac{(4\pi \alpha_\mm)(4\pi \alpha_\nn) + v^2}{(4\pi \alpha_\mm - i v)^2(4\pi \alpha_\nn - i v)^2} \;e^{2i v\,|x_\mm-x_\nn|}\bigg) \,,
			\end{equation}
			\begin{equation}\label{eq:Ejren}
				\mathcal{E}_{j} := \frac{1}{8\pi^2}\int_0^{\infty} dv\;v\; \Re \left(\sum_{\mm,\nn=1}^N \Big[ \Big(\big(\VNp \big)^{-1}\PNp\Big)^{j} (\VNp)^{-1} \Big]_{\mm\nn} \,e^{i v |x_\mm-x_\nn|}\right), \quad\; \mbox{for $j \geqslant 2$}\,,
			\end{equation}
			and $\VNp,\PNp$ are the $N \times N$ complex-valued matrices with components
			\begin{equation}\label{eq:VpPp}
				\big[\VNp\big]_{\mm\nn} := \left(\alpha_\mm - \frac{i v}{4\pi}\right)\delta_{\mm\nn}\,, \qquad
				\big[\PNp\big]_{\mm\nn} := \frac{e^{i v\,|x_\mm-x_\nn|}}{4\pi|x_\mm-x_\nn|}\,(1-\delta_{\mm\nn}) \,.
			\end{equation}
			
			Moreover, the series in \eqref{eq:Eren2} is absolutely convergent and, for any $J \geqslant 2$,
			\begin{equation}\label{eq:ErenRest}
				\sum_{j=J}^{+\infty} \big|\mathcal{E}_{j}\big|
				\leqslant 
				\frac{4N\,(\max_\mm \alpha_\mm)^2}{\min_\mm \alpha_\mm} \,\rho^{J-2}						
				 \min\left\{\frac{\rho}{(J-1)(1-\rho)}\;,\, \big|\log(1-\rho)\big| \right\},
			\end{equation}
			where
			\begin{equation}\label{eq:rho}
				\rho := \bigg(\sum_{\mm \neq \nn} \frac{1}{(4\pi \alpha_\mm)^2\, |x_\mm - x_\nn|^2}\bigg)^{\!1/2} \in\; (0,1)\,.
			\end{equation}
	\end{theorem}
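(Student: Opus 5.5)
The plan is to start from the Krein-type identity behind \eqref{eq:GN}. Since $R(z;\LN)-R(z;\Lz)$ has rank $N$ with kernel $\sum_{\mm,\nn}[\GaN^z]^{-1}_{\mm\nn}\Gz^z(\cdot-x_\mm)\Gz^z(x_\nn-\cdot)$, I compute the trace using $\int \Gz^z(x-x_\mm)\Gz^z(x_\nn-x)\,dx = \partial_z \Gz^z(x_\mm-x_\nn)$. For $\mm=\nn$ this is read as $\frac{i}{8\pi\sqrt z}$, consistently with the diagonal of $\GaN^z$. This gives $\rr(z;\LN,\Lz) = -\Tr\big[(\GaN^z)^{-1}\partial_z\GaN^z\big] = -\partial_z\log\det\GaN^z$. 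Setting $z=v^2e^{i0}$, one has $\GaN^{v^2} = \VNp-\PNp$, while the lower boundary value gives the complex conjugate matrix. Definition \eqref{eq:defee} then expresses $\ee(v;\LN,\Lz)$ as $\tfrac{2v}{\pi}$ times an imaginary part of $\Tr[(\VNp-\PNp)^{-1} D_v]$. Here $D_v$ is the $z$-derivative matrix, with entries $-\tfrac{i}{8\pi v}e^{iv|x_\mm-x_\nn|}$ on the whole matrix, the diagonal included.

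The second step is a Neumann expansion for real $v>0$. We have $|[\VNp]_{\mm\mm}|\geqslant \alpha_\mm>0$ by \eqref{eq:ass1}, and by \eqref{eq:ass2} the Hilbert–Schmidt norm satisfies $\|(\VNp)^{-1}\PNp\|\leqslant \rho(v)\leqslant\rho<1$, where $\rho(v)^2=\sum_{\mm\neq\nn}[((4\pi\alpha_\mm)^2+v^2)|x_\mm-x_\nn|^2]^{-1}$. Hence
\[(\VNp-\PNp)^{-1}=\sum_{j\geqslant0}\big((\VNp)^{-1}\PNp\big)^j(\VNp)^{-1},\]
uniformly in $v$. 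Inserting this into $\ee$ splits $\tfrac12 v\,\ee(v)$ into terms $e_j(v)$. The term $j$ has $j+1$ factors of $(\VNp)^{-1}$, each $O(v^{-1})$ for large $v$. So $v\,e_j(v)=O(v^{-j})$, which makes $\int_0^\infty v\,e_j$ absolutely convergent for $j\geqslant2$. After the prefactors are collected, this reproduces \eqref{eq:Ejren}.

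The $j=0$ and $j=1$ terms must instead be matched with the renormalized expression \eqref{eq:Ecasdef}, using its independence of $\vz$. The $j=0$ term is the sum of single-center densities $\tfrac{2}{\pi}\sum_\nn \tfrac{4\pi\alpha_\nn}{(4\pi\alpha_\nn)^2+v^2}$. Its large-$v$ part is exactly the subtracted $4\sum\alpha_\nn/v^2$. Computing $\tfrac12\int_0^{\vz}v\,e_0 + \tfrac12\int_{\vz}^\infty v(e_0-4\sum\alpha_\nn/v^2)$ explicitly, adding $2[1-\log(2\ell\vz)]\sum\alpha_\nn$, and checking that $\vz$ cancels yields \eqref{eq:E0ren}.

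The $j=1$ term is the main obstacle. Its large-$v$ part carries the oscillating subtraction $\cos(2v d_{\mm\nn})/(\pi v^2 d_{\mm\nn})$, and the boundary terms $\sin(2\vz d)/(\vz d^2)$ and $\int_{\vz}^\infty \sin(2vd)/(v^2 d)$ in \eqref{eq:Ecasdef} must be absorbed. I would write the $(\mm,\nn)$ contribution as an imaginary part of $\partial_v$ of a function like $e^{2ivd}/[(4\pi\alpha_\mm-iv)(4\pi\alpha_\nn-iv)]$ (up to factors of $d$). Integrating by parts on $[0,\vz]$ and $[\vz,\infty)$ should then cancel the boundary terms at $\vz$ and collapse everything to the single convergent integral \eqref{eq:E1ren}. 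The delicate point is keeping track of whether the oscillatory terms cancel exactly or only after integration by parts; this is where most of the bookkeeping will go. Having used the uniform convergence from the previous step, exchanging sum and integral is legitimate for $j\geqslant2$.

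For the tail bound \eqref{eq:ErenRest}, I estimate $|\mathcal E_j|\leqslant \tfrac{1}{8\pi^2}\int_0^\infty v\,\|(\VNp)^{-1}\|\,\rho(v)^j\,\|(\VNp)^{-1}\|\,N\,dv$. Here the matrix of phases $e^{iv|x_\mm-x_\nn|}$ contributes at most $N$ in norm, and $\|(\VNp)^{-1}\|^2\leqslant 16\pi^2/((4\pi\min\alpha)^2+v^2)$. Bounding $\rho(v)\leqslant \rho\,(4\pi\max\alpha)/((4\pi\max\alpha)^2+v^2)^{1/2}$ and substituting gives an integral $\lesssim \rho^j/j$ times $N(\max\alpha)^2/\min\alpha$. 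Summing over $j\geqslant J$ then gives the two alternatives, either via $\sum_{j\geqslant J}\rho^j/j\leqslant \rho^{J}/(J(1-\rho))$ or via $\rho^{J-2}|\log(1-\rho)|$. The exact constant $4$ will be fixed by the precise integral.
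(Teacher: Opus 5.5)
Your plan works and takes a genuinely different route from the paper, but two steps need repair (see the second paragraph).

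\textbf{Comparison with the paper.} The paper never uses the explicit formula \eqref{eq:Ecasdef}. It starts from $\mathcal{E}_{\vac}=[1-\log(2\ell)]\Reu_{s=-1/2}\zeta+\tfrac12\Rez_{s=-1/2}\zeta$ and proves in \cref{lemma:zetaAC2} a Born decomposition of the zeta function itself on $-1<\Re s<0$:
\begin{itemize}
\item the $j=0$ term is summed in closed form as $\frac{1}{2\cos(\pi s)}\sum_n(4\pi\alpha_n)^{-2s}$ and carries the whole pole at $s=-1/2$;
\item the $j=1$ term is continued across $\Re s=-1/2$ by one integration by parts;
\item $\sum_{j\geq2}\zeta_j$ is analytic there, so all finite parts are read off at $s=-1/2$.
\end{itemize}
You instead start from the already renormalized \eqref{eq:Ecasdef}, split $\ee=\ee_0+\ee_1+\sum_{j\geq2}\ee_j$, and pair each piece with its own subtraction. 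Each group is separately independent of $\vz$ (once the misprint noted below is corrected), so this is legitimate.

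Your $j=0$ evaluation is right. The two integrals give $2\alpha_n\log(\vz/4\pi\alpha_n)$, and adding $2[1-\log(2\ell\vz)]\alpha_n$ gives \eqref{eq:E0ren}.

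For $j=1$, the cleanest way to close your sketch is to let $\vz\to\infty$ in the $j=1$ group. The boundary term, the $\sin$-integral and the subtracted integral all vanish, leaving the improper integral $\lim_{V\to\infty}\tfrac12\int_0^V v\,\ee_1\,dv$. Set $f=[(4\pi\alpha_m-iv)(4\pi\alpha_n-iv)]^{-1}$ and $d=|x_m-x_n|$. Then $2d\,\Re(vfe^{2ivd})=\Im\,\partial_v(vfe^{2ivd})-\Im\big((vf)'e^{2ivd}\big)$ and $(vf)'=\frac{(4\pi\alpha_m)(4\pi\alpha_n)+v^2}{(4\pi\alpha_m-iv)^2(4\pi\alpha_n-iv)^2}$. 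This gives exactly \eqref{eq:E1ren}; note that the function to differentiate is $vfe^{2ivd}$, not $fe^{2ivd}$.

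What each route buys: yours is more elementary (real-variable integrals only) and shows that $\mathcal{E}_1^{\ren}$ is just a conditionally convergent integral of $\tfrac12 v\,\ee_1$. The paper's route yields the meromorphic structure of each Born term of $\zeta$, and it does not depend on the bookkeeping in \eqref{eq:Ecasdef}.

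\textbf{Two corrections.}

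First, since you use \eqref{eq:Ecasdef} verbatim, note a misprint there. The term $\int_{\vz}^\infty dv\,\frac{\sin(2v|x_m-x_n|)}{v^2|x_m-x_n|}$ must carry $|x_m-x_n|^2$ in the denominator. This is what \cref{lemma:ZUV} with $\JUV=0$, $h_{0,k}=-1/(\pi|x_m-x_n|)$ and $d_{0,k}=2|x_m-x_n|$ actually gives, and it is forced by dimensions and by $\vz$-independence. With the printed version your $j=1$ matching will not close.

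Second, your tail estimate double counts $(\VNp)^{-1}$. The $j$-th term contains $j+1$ such factors, and $\rho(v)^j$ already absorbs $j$ of them. The correct bound is $\big|\sum_{m,n}[((\VNp)^{-1}\PNp)^j(\VNp)^{-1}]_{mn}e^{iv|x_m-x_n|}\big|\leq N\rho(v)^j\|(\VNp)^{-1}\|$. Your version is dimensionally inconsistent, and it is what produced $\rho^j/j$.

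Now use $\|(\VNp)^{-1}\|\leq(\min_m\alpha_m)^{-1}[1+(v/4\pi\max_m\alpha_m)^2]^{-1/2}$ together with your bound on $\rho(v)$. The $v$-integral then equals $(4\pi\max_m\alpha_m)^2/(j-1)$, whence $|\mathcal{E}_j|\leq\frac{2N(\max_m\alpha_m)^2}{\min_m\alpha_m}\,\frac{\rho^j}{j-1}$, which is within the stated constant. The identity $\sum_{j\geq J}\frac{\rho^j}{j-1}=\int_0^\rho\frac{w^{J-2}}{1-w}\,dw$ then yields both alternatives in \eqref{eq:ErenRest}.

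It is this summable bound, not uniform convergence in $v$, that justifies exchanging $\sum_{j\geq2}$ with $\int_0^\infty dv$.

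Minor: $\ee_0=\sum_n\frac{4\alpha_n}{(4\pi\alpha_n)^2+v^2}$, so the prefactor in front of $\sum_n\frac{4\pi\alpha_n}{(4\pi\alpha_n)^2+v^2}$ is $\frac1\pi$, not $\frac2\pi$. Otherwise its large-$v$ part would not match the subtracted $4\sum_n\alpha_n/v^2$, as you correctly assert it does.
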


Retracing the steps leading to \cref{thm:EvacBorn}, one can see that, for any $j \in \Nz$, the energy contribution $\mathcal{E}_{j}^{(\ren)}$ in \eqref{eq:Eren2} is a (possibly renormalized) version of the formal expression
		\begin{multline}\label{eq:Ejform}
			\mathcal{E}_{j} = \lim_{\varepsilon \to 0^+} \int_0^{\infty} \!\!\!dv\;v \int_{\R^3}\! dx\! \sum_{\mm,\nn=1}^N\! \frac{1}{2i\pi}\!\left(\sqrt{v^2 e^{i \varepsilon}}\, (P_N^{v^2 e^{i \varepsilon}})_{x \mm} \Big[ \Big(\big(V_N^{v^2 e^{i\varepsilon}} \big)^{-1} P_N^{v^2 e^{i \varepsilon}}\Big)^{j} (V_N^{v^2 e^{i\varepsilon}})^{-1} \Big]_{\mm\nn} (P_N^{v^2 e^{i \varepsilon}})_{\nn x} \right.\\
			\left. - \sqrt{v^2 e^{i (2\pi - \varepsilon)}}\, (P_N^{v^2 e^{i (2\pi - \varepsilon)}})_{x \mm} \Big[ \Big(\big(V_N^{v^2 e^{i(2\pi - \varepsilon)}} \big)^{-1} P_N^{v^2 e^{i (2\pi - \varepsilon)}}\Big)^{j} (V_N^{v^2 e^{i(2\pi - \varepsilon)}})^{-1} \Big]_{\mm\nn} (P_N^{v^2 e^{i (2\pi - \varepsilon)}})_{\nn x}\right),
		\end{multline}
where, in agreement with \eqref{eq:VpPp},
		\begin{equation*}
			\big[\VN\big]_{\mm\nn} = \left(\alpha_\nn - \frac{i\sqrt{z}}{4\pi}\right)\delta_{\mm\nn} \,, \qquad
			\big[\PN\big]_{\mm\nn} = \Gz^{z}(x_\mm-x_\nn)\,(1-\delta_{\mm\nn})\,.
		\end{equation*}
and, with an obvious abuse of notation, we have also set
	\begin{align*}
		(P_N^{z})_{x \mm} = \Gz^{z}(x - x_\mm)\,, \qquad
		(P_N^{z})_{\nn x} = \Gz^{z}(x_\nn - x) \,.
	\end{align*}	
	
Switching to a notation closer to that commonly used in the physics literature, \eqref{eq:Ejform} can be recast in a more compact form as
		\begin{multline}\label{eq:bornterm}
			\mathcal{E}_{j} = \int_0^{\infty}\!\!\! dv\;v\! \int_{\R^3}\!\!\! dx \;\tfrac{1}{2\pi i}\left[\sqrt{v^2 + i 0}\,P_N^{v^2 +  i 0} \Big(\big(V_N^{v^2 + i 0} \big)^{-1} P_N^{v^2 + i 0}\Big)^{j+1} \right.\\
			\left. - \sqrt{v^2 - i 0} \,P_N^{v^2 -i 0} \Big(\big(V_N^{v^2 -i 0} \big)^{-1} P_N^{v^2 - i 0}\Big)^{j+1} \right]_{xx}.
		\end{multline}
This identity naturally suggests to interpret \eqref{eq:Eren2} as a \emph{Born series} for the renormalized Casimir energy, in which the summation index $j\in\Nz$ counts the number of interactions among the point-like obstacles, mediated by the scalar field. A diagrammatic representation of this picture is provided in \cref{fig:born}. 		

	\begin{figure}[t!]
		\includegraphics[width=17cm]{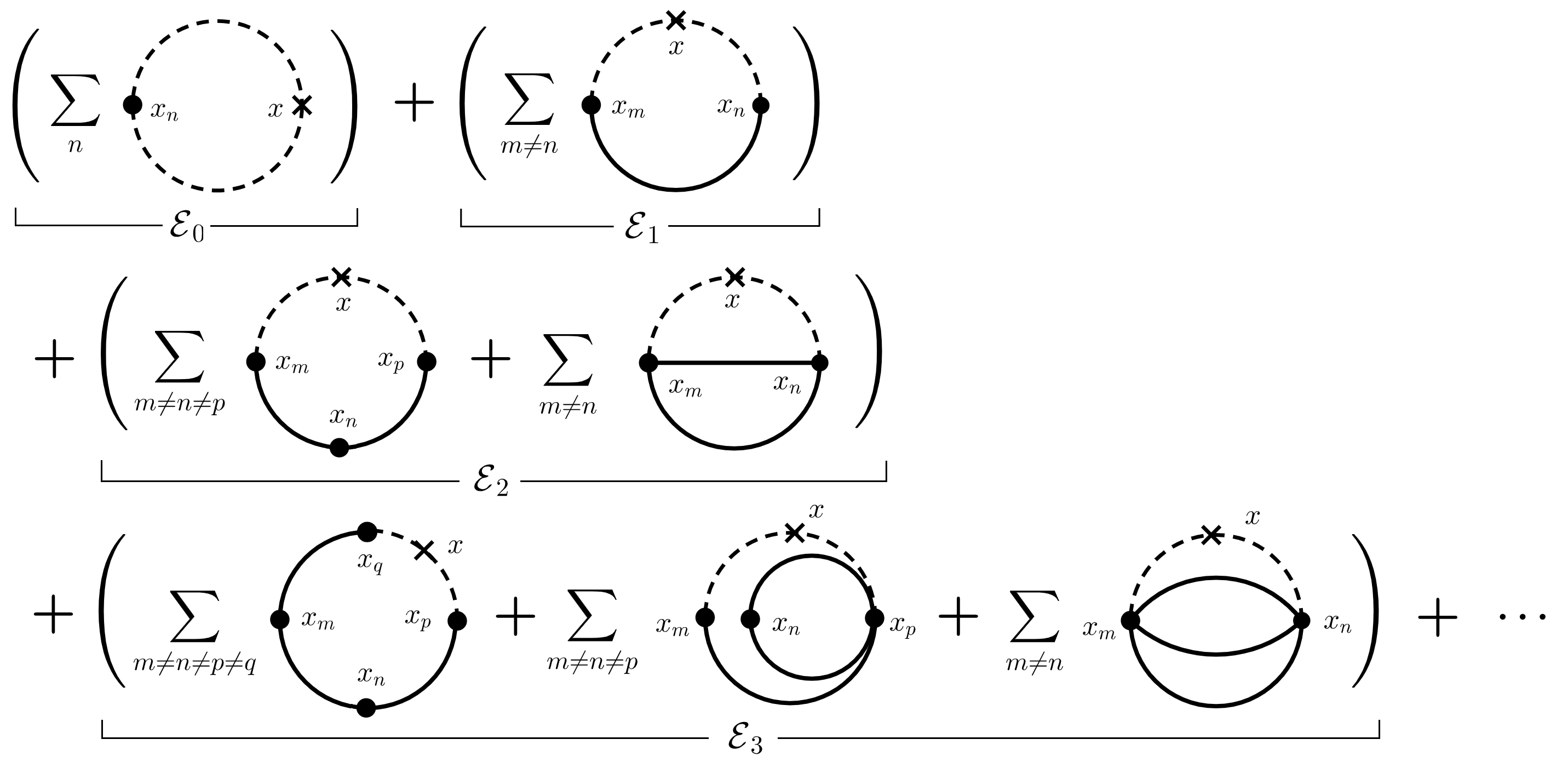}
 		\caption{
 			Graphical representation of the Born series for $\mathcal{E}_{\vac}$, see \eqref{eq:Eren2}.
		}
	    \label{fig:born}
	\end{figure}

Within this picture, the diagonal matrices $(V_{N}^{v^2\pm i0})^{-1}_{\mm\nn}$ correspond to (dressed) interaction vertexes (depicted as bullet points), while the matrices $(P_{N}^{v^2\pm i0})_{\mm\nn}$ describe propagation between distinct obstacles (shown as solid lines). Since $(P_{N}^{v^2\pm i0})_{\mm\nn}$ is strictly off-diagonal, each factor necessarily connects two \emph{different} points, thereby excluding self-interactions at the level of individual propagators (no single-line loops). Nonetheless, compositions of such propagators can produce closed paths connecting two or more vertexes. A simple combinatorial argument further shows that each term $\mathcal{E}_j$ receives contributions from diagrams with a number of vertexes ranging from $2$ to $j+1$.	
The vectors $(P_{N}^{v^2\pm i0})_{x\mm}$ and $(P_{N}^{v^2\pm i0})_{\nn x}$ (represented by dashed lines) connect a given physical obstacle to an arbitrary point in space $x\in\mathbb{R}^3$, where no interaction occurs. The integration over $x$ formally sums over all possible positions of this point, resulting in an expected volume factor. Finally, the difference of the terms in the two lines of \eqref{eq:bornterm} can be interpreted as the subtraction of advanced and retarded contributions, thereby selecting the causal part of the expansion.

The above discussion makes evident that Casimir interactions in the model under consideration can indeed be viewed as arising from multiple-scattering processes, consistently with other scattering-theoretic formulations of Casimir physics available in the literature \cite{RE+09,RCD+10,B14,EB23}.
	\medskip
	
We conclude this paragraph with some additional comments.	
	
	\begin{remark}
		Only the lowest-order contributions, namely the single-centers term ($j=0$) and the pairwise interaction term ($j=1$), require renormalization. All higher-order terms, corresponding to processes involving at least two successive interactions, are automatically well defined and finite, reflecting the improved ultraviolet behavior induced by multiple scattering processes. 
	\end{remark}	
	
	\begin{remark}
		The proof of \cref{thm:EvacBorn} is based on a corresponding Born-type expansion for the relative spectral density $\ee(v;\LN,\Lz)$, see \cref{sec:bornexp}. While this representation is well adapted to the analysis of the vacuum energy, it does not seem to be equally effective in the study of thermal corrections, which motivates our present restriction to the zero-temperature contribution.
	\end{remark}
	
	\begin{remark}
		Concerning the rate of convergence of the series in \eqref{eq:Eren2}, the remainder estimate \eqref{eq:ErenRest} provides efficient control when $J$ is large and $\rho$ is sufficiently small. This bound deteriorates as $\rho \to 1^{-}$, that is, in the regime where the assumption \eqref{eq:ass2} is saturated. Yet, it is worth noting that each term $\mathcal{E}_{j}$ as in \eqref{eq:Ejren} remains well defined even if \eqref{eq:ass2} is violated. In this regime, however, the derivation provided in this work is no longer justified, and the series in \eqref{eq:Eren2} may even fail to converge.
	\end{remark}
		
	\begin{remark}\label{rem:E0}
		As previously observed in \cref{rem:ellind}, the forces exerted by the quantum field on the point obstacles are independent of the renormalization length $\ell$. This is consistent with the Born expansion of \cref{thm:EvacBorn}, since $\ell$ enters only through the single-vertex term $\mathcal{E}_0$, which is itself independent of the obstacles' positions.
		To say more, \cref{thm:EvacBorn} shows that, apart from the constant $\mathcal{E}_0$, the renormalised Casimir energy depends on the obstacles' configuration solely through their mutual distances:
		\begin{equation}
   			 \mathcal{E}_{\mathrm{vac}} = \mathcal{E}_0^{\ren} + \mathcal{E}^{(\mathrm{int})}_{\mathrm{vac}} \big(\{\,|x_{\mm}-x_{\nn}|\,\}_{\mm \neq \nn}\big)\,.
		\end{equation}
		As a consequence, the force on the $\nn$-th obstacle, computed taking the gradient with respect to $x_\nn \in \R^3$,
		\begin{equation}
			\bm{F}_\nn = - \nabla_{\!\nn}\, \mathcal{E}_{\vac}\,,
		\end{equation}
		takes the form
		\begin{equation}
   		 	\bm{F}_{\nn}(x_1,\dots,x_N) = \sum_{\mm \neq \nn} F_{\mm\nn}(x_1,\dots,x_N)\, \frac{x_{\mm}-x_{\nn}}{|x_{\mm}-x_{\nn}|}\,,
		\end{equation}
		where the sum runs only over the index $\mm \in \{1,\dots,N\} \setminus \{\nn\}$, for fixed $\nn$, and
		\begin{equation}
			F_{\mm\nn} := \frac{\partial \mathcal{E}^{(\mathrm{int})}_{\mathrm{vac}}}{\partial|x_\mm-x_\nn|}\,.
		\end{equation}				
		This representation reveals two key features of the interaction: the force on $x_{\nn}$ decomposes into pairwise contributions, each directed along the line joining $x_{\nn}$ to $x_{\mm}$; yet, the scalar intensities $F_{\mm\nn}$ depend on the full configuration $(x_1,\dots,x_N)$, and not just on the pairwise distance $|x_{\mm} - x_{\nn}|$. The interaction is therefore genuinely many-body in nature: although it decomposes into pairwise directional contributions, the strength of each is modulated by the presence and positions of all other obstacles.
	\end{remark}

\subsection{The case of identical obstacles and some numerical examples}\label{sec:num}
In this final paragraph we present a collection of numerical experiments that illustrate and complement the analytical results established in the previous sections. 
Although the Born series of \cref{thm:EvacBorn} is, in principle, already amenable to numerical evaluation, its direct implementation can become computationally demanding.
To simplify the analysis and make the computations more tractable with the resources at our disposal, we now restrict our attention to the case of identical obstacles. Accordingly, in addition to the standing hypotheses \eqref{eq:ass1} and \eqref{eq:ass2}, we impose
	\begin{equation}\label{eq:aannaa}
		\alpha_\nn = \alpha \,, \qquad \mbox{for all $\nn \in {1,\dots,N}$}\,.
	\end{equation}
This condition introduces an additional symmetry that allows for further explicit simplifications, significantly reducing the computational complexity.

In this setting, the single obstacle scattering length $(4\pi\alpha)^{-1}$ provides a natural length unit. We rescale the obstacles' positions $x_\nn \in \R^3$ accordingly, introducing the dimensionless coordinates
	\begin{equation}\label{eq:yn}
		y_\nn := 4\pi \alpha\,x_\nn \,\in\,\R^3 .
	\end{equation}
Recalling that we are working in natural units with $\hbar = c = 1$, the above position \eqref{eq:aannaa} further identifies $4\pi \alpha$ as a natural energy scale. We shall henceforth refer to the rescaled Casimir energy
	\begin{equation}\label{eq:Etilde}
		\widetilde{\mathcal{E}}_{\vac} := \frac{1}{4\pi \alpha}\, \mathcal{E}_{\vac}\,. 
	\end{equation}

	\begin{corollary}\label{cor:num}
		Assume \eqref{eq:ass1}, \eqref{eq:ass2} and \eqref{eq:aannaa}. Then, the rescaled Casimir energy can be expressed as
		\begin{equation}\label{eq:Erennum}
			\widetilde{\mathcal{E}}_{\vac} = \widetilde{\mathcal{E}}_{0}^{\ren} + \widetilde{\mathcal{E}}_{1}^{\ren} + \sum_{j = 2}^{+\infty} \widetilde{\mathcal{E}}_{j}\,,
		\end{equation}
		where
		\begin{equation}\label{eq:E0numc}
			\widetilde{\mathcal{E}}_{0}^{\ren} = \frac{N}{2\pi}\, \big[1-\log(8\pi \alpha \ell) \big] \,,
		\end{equation}
		\begin{equation}\label{eq:E1numc}
			\widetilde{\mathcal{E}}_{1}^{\ren} 
				= \frac{1}{\pi} \sum_{\mm \neq \nn} \left[ \frac{1}{r^2}\,\Big( \KK_{2}(r) - 2\,\KK_{3}(r) \Big) \right]_{r \,=\, 2|y_\mm-y_\nn|} ,
		\end{equation}
		and, for all integer $j \geqslant 2$,
		\begin{multline}\label{eq:Ejnumc}
			\widetilde{\mathcal{E}}_{j} = \frac{1}{2\pi}  \sum_{\mm \neq p_1 \neq \dots \neq p_{j-1} \neq \nn} \frac{1}{|y_\mm - y_{p_1}| \cdot |y_{p_1} - y_{p_2}| \cdot \dots \cdot |y_{p_{j-1}} - y_{\nn}|} \;\times \\
				\times\, \Big[ \KK_{j+1}(r) - \KK_j(r) \Big]_{r \,=\, |y_\mm - y_{p_1}| + |y_{p_1} - y_{p_2}| +\, \dots + |y_{p_{j-1}} - y_{\nn}| + |y_{\nn}-y_{\mm}|}\,.
		\end{multline}
		The sum in \eqref{eq:Ejnumc} runs over all ordered collections $(\mm,p_1,\dots,p_{j-1},\nn) \in \{1,\dots,N\}^{\times (j+1)}$ such that no two consecutive indexes coincide.
		For $r > 0$ and integers $k \geqslant 2$, the functions $\KK_k(r)$ are defined by
		\begin{equation}\label{eq:KKk}
		\KK_k(r) := \frac{1}{(k-1)!} \left[\, \sum_{h = 0}^{k-2} (k-h-2)!\,(-r)^h + (- r)^{k-1}\,e^{r} E_1(r)\,\right] ,
		\end{equation}
		with $E_1(\,\cdot\,)$ denoting the exponential integral function \cite[\S 6.2(i)]{OL+10}.	
		
		Moreover, for any integer $J \geqslant 2$,
		\begin{equation}\label{eq:ErenRestnum}
				\sum_{j=J}^{+\infty} \big|\widetilde{\mathcal{E}}_{j}\big|
				\leqslant \frac{N}{\pi}\,\bigg[\rho^{J-2}\, \min\left\{\frac{\rho}{(J-1)(1-\rho)}\,,\,\big|\log(1-\rho)\big|\right\}\Bigg]_{\rho \,=\, \left(\sum_{\mm \neq \nn} \frac{1}{|y_\mm - y_\nn|^2}\right)^{\!1/2}} \,.
		\end{equation}
	\end{corollary}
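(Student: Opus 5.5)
The plan is to specialize \cref{thm:EvacBorn} to $\alpha_\nn=\alpha$ and evaluate each $v$-integral in closed form via the rescaling $v = 4\pi\alpha\, w$. With this choice $\VNp = (4\pi)^{-1}(4\pi\alpha)(1-iw)\,\one$ becomes a scalar matrix, so $(\VNp)^{-1}$ commutes with everything. In rescaled coordinates \eqref{eq:yn} one has $[\PNp]_{\mm\nn} = (4\pi\alpha)\,e^{iw|y_\mm-y_\nn|}/(4\pi|y_\mm-y_\nn|)$.

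\textbf{Step 1: the constant and single-center terms.} From \eqref{eq:E0ren}, dividing by $4\pi\alpha$ gives \eqref{eq:E0numc} directly. For the bound \eqref{eq:ErenRestnum}, I would insert $\max\alpha=\min\alpha=\alpha$ into \eqref{eq:ErenRest} and divide by $4\pi\alpha$, which gives the prefactor $N/\pi$. Under the same substitution, $\rho$ in \eqref{eq:rho} becomes $(\sum_{\mm\ne\nn}|y_\mm-y_\nn|^{-2})^{1/2}$.

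\textbf{Step 2: higher-order terms.} For $j\ge2$, expand the matrix product $((\VNp)^{-1}\PNp)^j(\VNp)^{-1}$ entrywise. This gives
\[
(4\pi\alpha)^{-1}4\pi\,(1-iw)^{-(j+1)}\sum \prod_{\text{links}}\frac{e^{iw|y-y'|}}{4\pi|y-y'|},
\]
summed over chains $\mm\ne p_1\ne\dots\ne p_{j-1}\ne\nn$. The extra factor $e^{iv|x_\mm-x_\nn|}$ in \eqref{eq:Ejren} closes the loop.

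A chain with $\mm=\nn$ is allowed here, with last distance $0$. The statement's closure term $|y_\nn-y_\mm|$ then vanishes, which is consistent.

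Collecting constants with $dv\,v=(4\pi\alpha)^2 w\,dw$, each loop contributes
\[
\frac{1}{2\pi}\frac{1}{\prod|y-y'|}\,\Re\int_0^\infty \frac{w\,e^{iwr}}{(1-iw)^{j+1}}\,dw ,
\]
with $r$ the loop perimeter. Up to checking the $4\pi$ powers, the prefactor is $\frac{(4\pi\alpha)^{2}\cdot4\pi}{8\pi^2\,(4\pi\alpha)(4\pi)^j}\cdot(4\pi\alpha)^{j}(4\pi\alpha)^{-j}$, which after dividing by $4\pi\alpha$ should give $1/(2\pi)$.

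Next, write $w/(1-iw)^{j+1} = i\big[(1-iw)^{-(j+1)}-(1-iw)^{-j}\big]$. This reduces the problem to
\[
I_k(r):=\int_0^\infty e^{iwr}(1-iw)^{-k}\,dw .
\]
Rotating the contour to $w=iu$, where the integrand decays in the upper half plane, gives
\[
I_k(r)=i\int_0^\infty e^{-ur}(1+u)^{-k}\,du .
\]
This is real times $i$, so $\Re\, i(I_{j+1}-I_j) = \int_0^\infty e^{-ur}\big[(1+u)^{-j}-(1+u)^{-(j+1)}\big]\,du$, up to the sign convention. One then shows by repeated integration by parts, starting from $e^{r}E_1(r)=\int_0^\infty e^{-ur}(1+u)^{-1}du$, that $\int_0^\infty e^{-ur}(1+u)^{-k}\,du=\KK_k(r)$ as in \eqref{eq:KKk}; this is Appendix-type material. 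The recursion is $k\,J_{k+1}=1-rJ_k$, which reproduces the finite sum plus $(-r)^{k-1}e^rE_1(r)/(k-1)!$. Fixing signs should yield $\KK_{j+1}-\KK_j$, which is positive once the correct orientation is taken.

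\textbf{Step 3: the $j=1$ term.} Here \eqref{eq:E1ren} with equal $\alpha$ has integrand $\Im\big[(1+w^2)(1-iw)^{-4}e^{2iw R}\big]$, with $R=|y_\mm-y_\nn|$. Write $1+w^2=(1-iw)(1+iw)$ and $1+iw = 2-(1-iw)$, so the integrand is $2(1-iw)^{-3}-(1-iw)^{-2}$ times $e^{iwr}$ with $r=2R$. Using $I_k$ above, $\Im(iJ_k)=J_k$, which gives $-(1/4\pi)(4\pi\alpha)^{-2}\cdot$(prefactors)$\cdot(2\KK_3-\KK_2)$. Tracking the normalization $R^{-2}=4r^{-2}$ gives $\frac1\pi r^{-2}(\KK_2-2\KK_3)$.

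The main obstacle is the bookkeeping of powers of $4\pi$, $\alpha$ and the signs in the contour rotation. In particular one must justify the rotation: for $k\ge1$ the arc contribution vanishes because of $e^{iwr}$ with $r>0$, and for $k=1$ by Jordan's lemma. One must also confirm that $\Re$ versus $\Im$ picks exactly the rotated real integral. I would verify the normalization on the $N=2$, $j=2$ case against a direct computation.
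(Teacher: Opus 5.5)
Your proposal is correct and follows the paper's own proof. Both specialize \cref{thm:EvacBorn} with $v=4\pi\alpha w$, use that $\VNp=\alpha(1-iw)\one$ is scalar to expand into closed chains, split the rational factors into powers of $(1-iw)^{-1}$, and evaluate $\II_k(r)$ through the recursion $(k-1)\II_k=i-r\II_{k-1}$ starting from $\II_1=ie^{r}E_1(r)$. Your rotation $w=iu$ is only a self-contained way to get $\II_k=i\int_0^\infty e^{-ur}(1+u)^{-k}\,du$; the paper instead cites the identity for $k=1$ and integrates by parts on the real line.

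Fix two bookkeeping slips.
\begin{enumerate}[i)]
\item The correct splitting is $w(1-iw)^{-j-1}=i\big[(1-iw)^{-j}-(1-iw)^{-j-1}\big]$, the opposite sign to yours. So the $j\geqslant 2$ integral is $\KK_{j+1}(r)-\KK_j(r)$, and this is negative, not positive, because $\KK_k(r)=\int_0^\infty e^{-ur}(1+u)^{-k}\,du$ decreases in $k$.
\item The chain prefactor is $\alpha^{-1}(1-iw)^{-j-1}\prod e^{iw|y-y'|}/|y-y'|$; you dropped a factor $(4\pi)^j$. With the correct prefactor the overall constant does come out to $1/(2\pi)$.
\end{enumerate}
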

\smallskip

We now examine some case studies with an assigned number of obstacles. For any $J \geqslant 2$, we refer to the approximate, rescaled vacuum interaction energy
	\begin{equation}\label{eq:Eintnum}
		\widetilde{\mathcal{E}}^{(\mathrm{int})}_{J} := \widetilde{\mathcal{E}}_{1}^{\ren} + \sum_{j = 2}^{J} \widetilde{\mathcal{E}}_{j}\,,
	\end{equation}
and to the associated relative error
	\begin{equation}
		\widetilde{\mathcal{R}}_{J} := \left(\tx\sum_{j = J + 1}^{+\infty} \big|\widetilde{\mathcal{E}}_{j}\big|\right) \Big/ \left| \widetilde{\mathcal{E}}^{(\mathrm{int})}_{J} \right|.
	\end{equation}
In what follows, for given $N \geqslant 1$, we evaluate $\widetilde{\mathcal{E}}^{(\mathrm{int})}_{J}$ by fixing $N-1$ obstacles and letting the position of the $N$-th obstacle vary in $\R^3$, respecting the condition \eqref{eq:ass2}.
\medskip

\paragraph{{\bf N = 1.}}
In the case of a single point-like obstacle, only the self-interaction term contributes, and the rescaled Casimir energy reduces to
	\begin{equation}\label{eq:E1p}
		\widetilde{\mathcal{E}}_{\vac} 
		= \widetilde{\mathcal{E}}_{0}^{\ren}
		= \frac{1}{2\pi}\, \big[1 - \log(8\pi\alpha  \ell )\big] \, .
	\end{equation}
This is consistent with the expression for the renormalized partition function reported in \cite[\S 4.1]{SZ09}.\footnote{The first equation on p.~174 of the cited reference contains a misprint. In fact, combining the preceding formulas therein, one obtains (in the notation of that paper) $$ \log Z = 2\big(\log(8\pi \alpha \ell) - 1\big)\alpha \beta - \log\eta(\beta;-\Delta_\alpha,-\Delta)\,. $$}
\medskip

\paragraph{{\bf N = 2.}}
For two obstacles, only the relative distance between the two obstacles matters, namely
	\begin{equation}
		d_{12} := |y_1 - y_2| \in (\sqrt{2},+\infty)\,.
	\end{equation}
Here, the admissible values of $d_{12}$ are determined by the hypothesis \eqref{eq:ass2}. 
	
	\begin{figure}[t!]
		\centering
		\begin{subfigure}{0.48\textwidth}
    			\centering
    			\includegraphics[width=\linewidth]{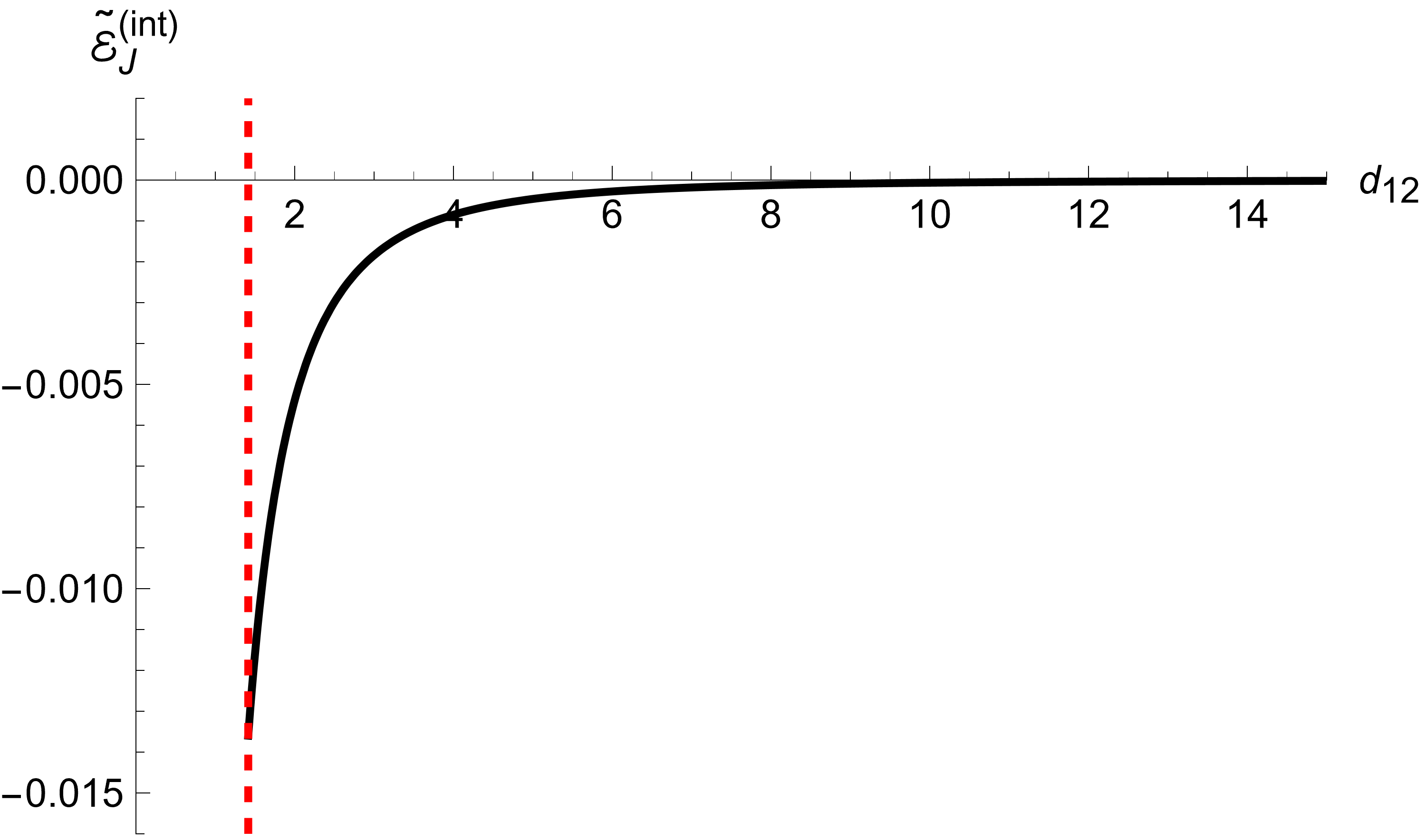}\label{fig:2a}
   			\caption{}
		\end{subfigure}
			\hfill
		\begin{subfigure}{0.48\textwidth}
   			\centering
   			\includegraphics[width=\linewidth]{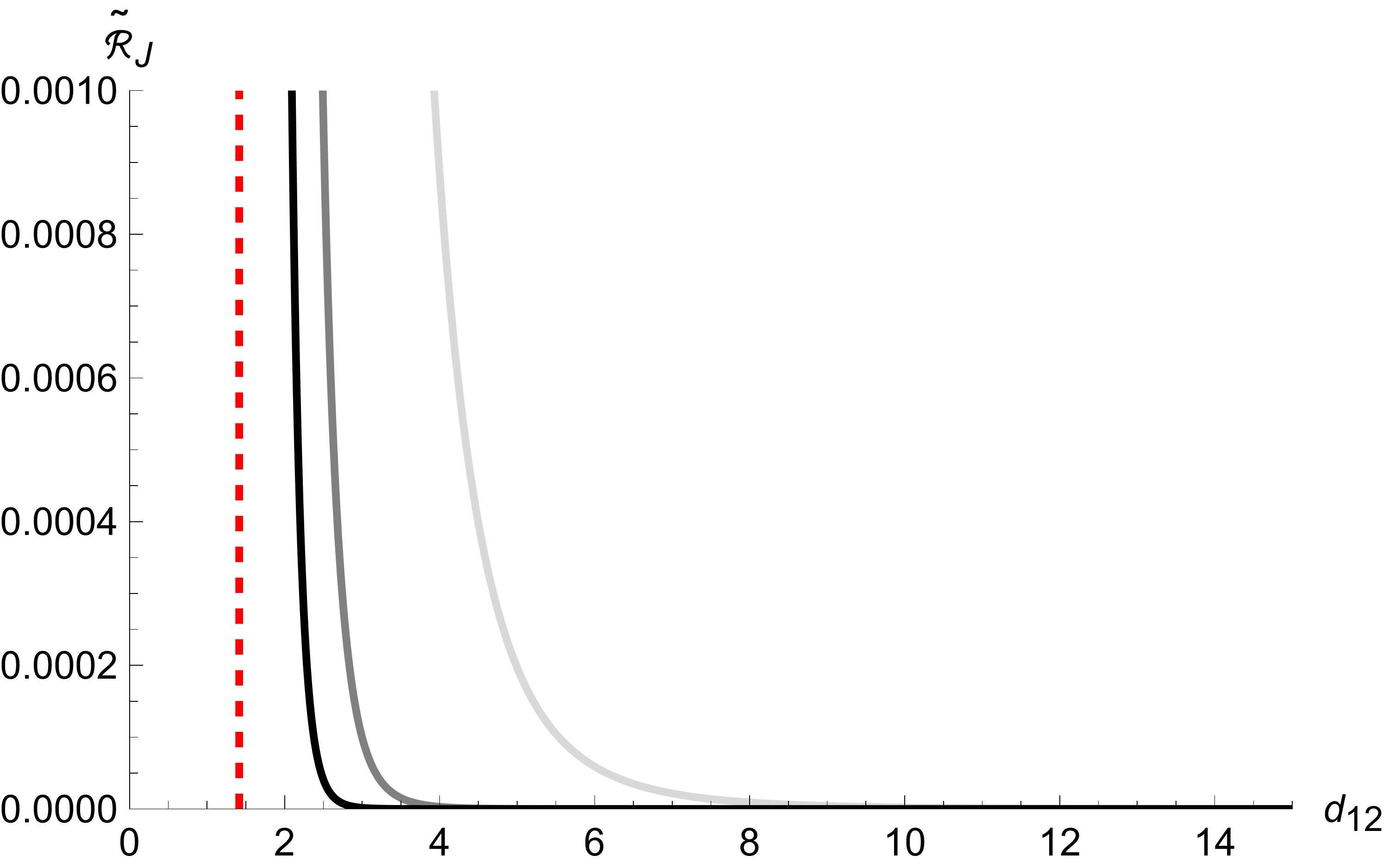}\label{fig:2b}
   			\caption{}
		\end{subfigure}
		\caption{(A) $\widetilde{\mathcal{E}}^{(\mathrm{int})}_{J}$ for $N = 2$ and $J = 15$, plotted as a function of $d_{12}$. (B) $\widetilde{\mathcal{R}}_{J}$ for $N = 2$ and $J = 10$ (light gray), $J = 15$ (gray), $J = 20$ (black), plotted as a function of $d_{12}$. In both panels, the red dashed line marks the cutoff $d_{12} > \sqrt{2}$ imposed by \eqref{eq:ass2}.}
	\label{fig:E12}
	\end{figure}

\Cref{fig:E12} displays the approximate rescaled vacuum interaction energy $\widetilde{\mathcal{E}}^{(\mathrm{int})}_{J}$ for $J = 15$, together with the relative error $\widetilde{\mathcal{R}}_{J}$ for selected values of $J$. The profile of $\widetilde{\mathcal{E}}^{(\mathrm{int})}_{J}$ remains essentially unchanged throughout the range $3 \leqslant J \leqslant 30$, indicating that the approximation is robust with respect to $J$. \Cref{fig:E12}(A) shows that the interaction energy is negative definite and monotonically increasing in $d_{12}$, tending to zero as $d_{12} \to +\infty$. This implies, in particular, that the force exerted by each obstacle on the other is always attractive and vanishes at large distances, as expected. \Cref{fig:E12}(B) illustrates that the relative error decays rapidly to zero for large separations, but it deteriorates as the threshold distance is approached, {\em i.e.} as $d_{12} \to (\sqrt{2})^{+}$.
\medskip

\paragraph{{\bf N = 3.}}
In the case of three obstacles, we evaluate the Casimir energy fixing the positions of two obstacles and allowing the third to vary over $\mathbb{R}^3$, subject to \eqref{eq:ass2}. Without loss of generality, we may place the first two obstacles symmetrically along the $z$-axis. More precisely, for a given $a > 0$, we set
	\begin{equation}\label{eq:x3}
		y_1 = (0,0,-a/2)\,, \qquad y_2 = (0,0,a/2)\,, \qquad y_3 \in \R^3\,.
	\end{equation}
Exploiting the resulting rotational symmetry about the $z$-axis, we proceed to evaluate $\widetilde{\mathcal{E}}^{(\mathrm{int})}_{J}$ as a function of the cylindrical coordinates $(r,z) \in \R_+ \times \R$ of the third obstacle, suppressing the irrelevant azimuthal angular variable.

\begin{figure}[t!]
		\centering
		\begin{subfigure}{0.48\textwidth}
    			\centering
    			\includegraphics[width=\linewidth]{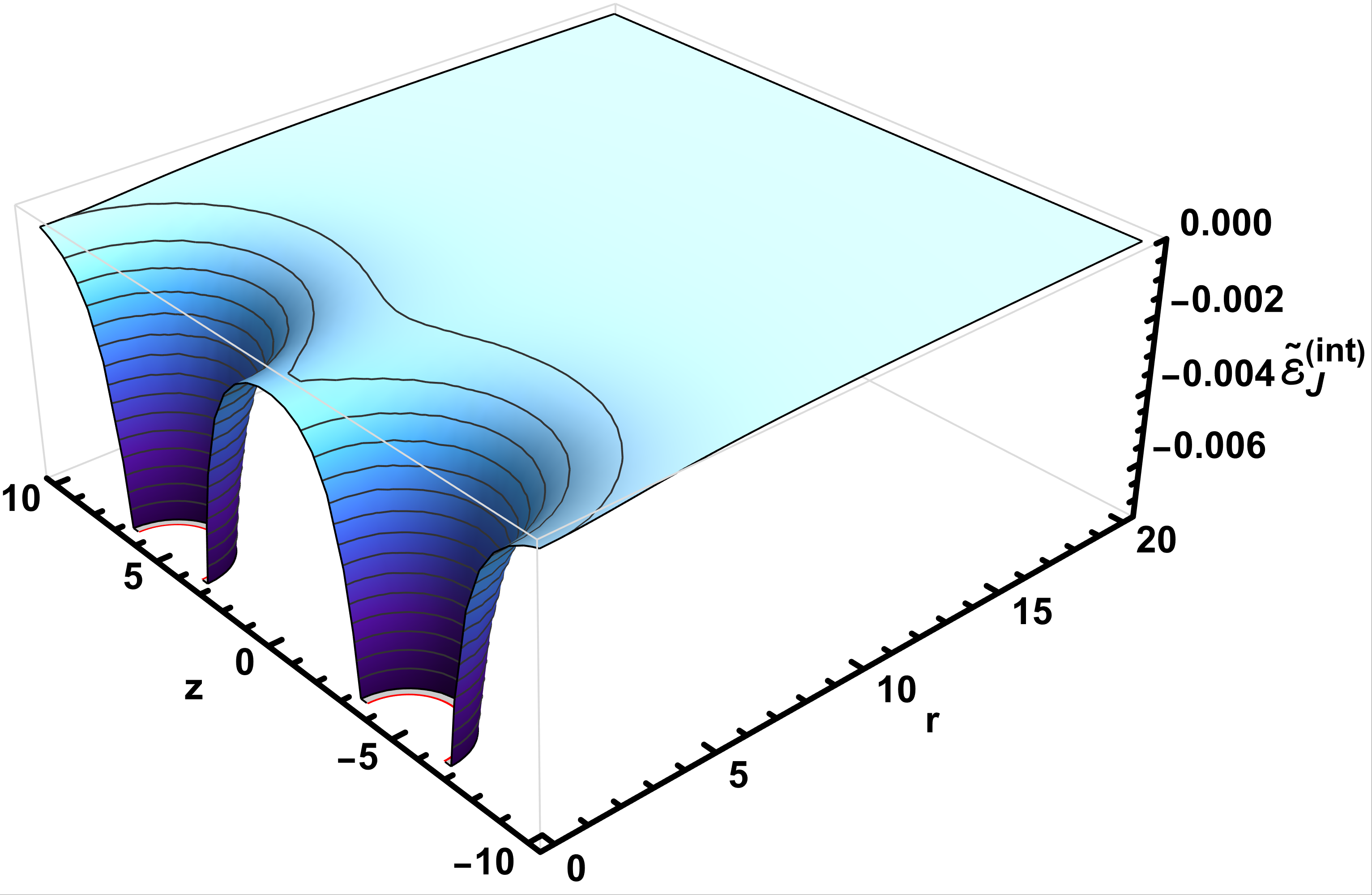}\label{fig:3a}
   			\caption{}
		\end{subfigure}
			\hfill
		\begin{subfigure}{0.48\textwidth}
   			\centering
   			\includegraphics[width=\linewidth]{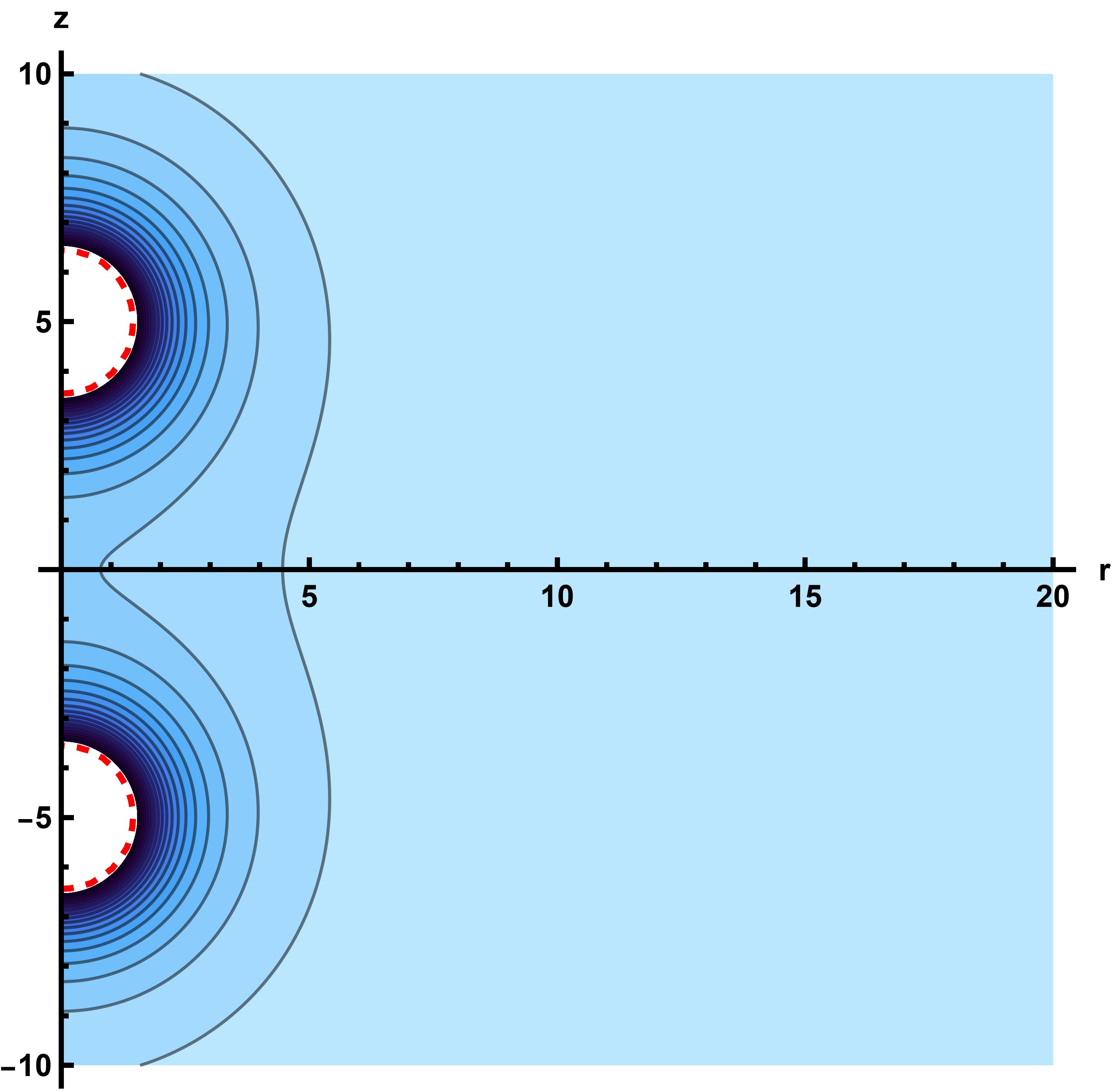}\label{fig:3b}
   			\caption{}
		\end{subfigure}
		\caption{
		$\widetilde{\mathcal{E}}^{(\mathrm{int})}_{J}$ for $N = 3$ and $J = 10$, with $(y_1,y_2,y_3)$ as in \eqref{eq:x3} and $a = 5$, plotted as a function of $(r,z)$.
		Panel~(A): surface plot. 
		Panel~(B): level curves. 
		The red dashed curves mark the cut-off imposed by \eqref{eq:ass2}.
		}
	\label{fig:E3}
	\end{figure}

\Cref{fig:E3} shows $\widetilde{\mathcal{E}}^{(\mathrm{int})}_{J}$, for $J = 10$ and $a = 5$, as a function of the coordinates $(r,z)$. The function is negative throughout the domain and exhibits two pronounced wells located at the positions of the fixed obstacles along the z-axis. Away from these regions, the surface becomes progressively flatter at larger distances, indicating that the interaction weakens with distance and is always attractive. The symmetry of the profile reflects the identical nature of the two centers. Moreover, the midpoint $y_3 = (0,0,0)$ appears to be the unique (unstable) equilibrium position.

As representative values of the relative error for $J=10$, we report
\begin{align*}
	\widetilde{\mathcal{R}}_{J}\Big|_{(r,z) = (3,5)} \leqslant 2 \times 10^{-2}\,, \qquad
	\widetilde{\mathcal{R}}_{J}\Big|_{(r,z) = (0,0)} \leqslant 6.2 \times 10^{-3}\,.
\end{align*}
These estimates confirm that the approximation achieves percent-level accuracy (or better) in the regions of interest, provided that the obstacles are sufficiently far apart, consistently with \eqref{eq:ass2}. Moreover, the relative error decreases rapidly with increasing distance from the centers $y_1$ and $y_2$. Although the accuracy can be systematically improved by taking larger values of J, this leads to a substantial increase in computational time.
\medskip

\paragraph{{\bf N = 4.}}
For illustrative purposes, we consider a configuration in which the first three obstacles are fixed at the vertexes of an equilateral triangle of side $b > 0$, lying in a given plane, while the position of the fourth obstacle varies within the same plane. For definiteness, we fix
\begin{equation}\label{eq:x4}
	y_1 = \left(\tfrac{b}{2}, - \tfrac{b}{2\sqrt{3}}, 0\right), \qquad
	y_2 = \left(-\tfrac{b}{2}, - \tfrac{b}{2\sqrt{3}}, 0\right), \qquad
	y_3 = \left(0, \tfrac{b}{\sqrt{3}}, 0\right), \qquad
	y_4 = (x,y,0)\,.
\end{equation}
With this choice, the centroid of the triangle $(y_1,y_2,y_3)$ lies at the origin of the chosen reference frame.

\begin{figure}[t!]
		\centering
		\begin{subfigure}{0.48\textwidth}
    			\centering
    			\includegraphics[width=\linewidth]{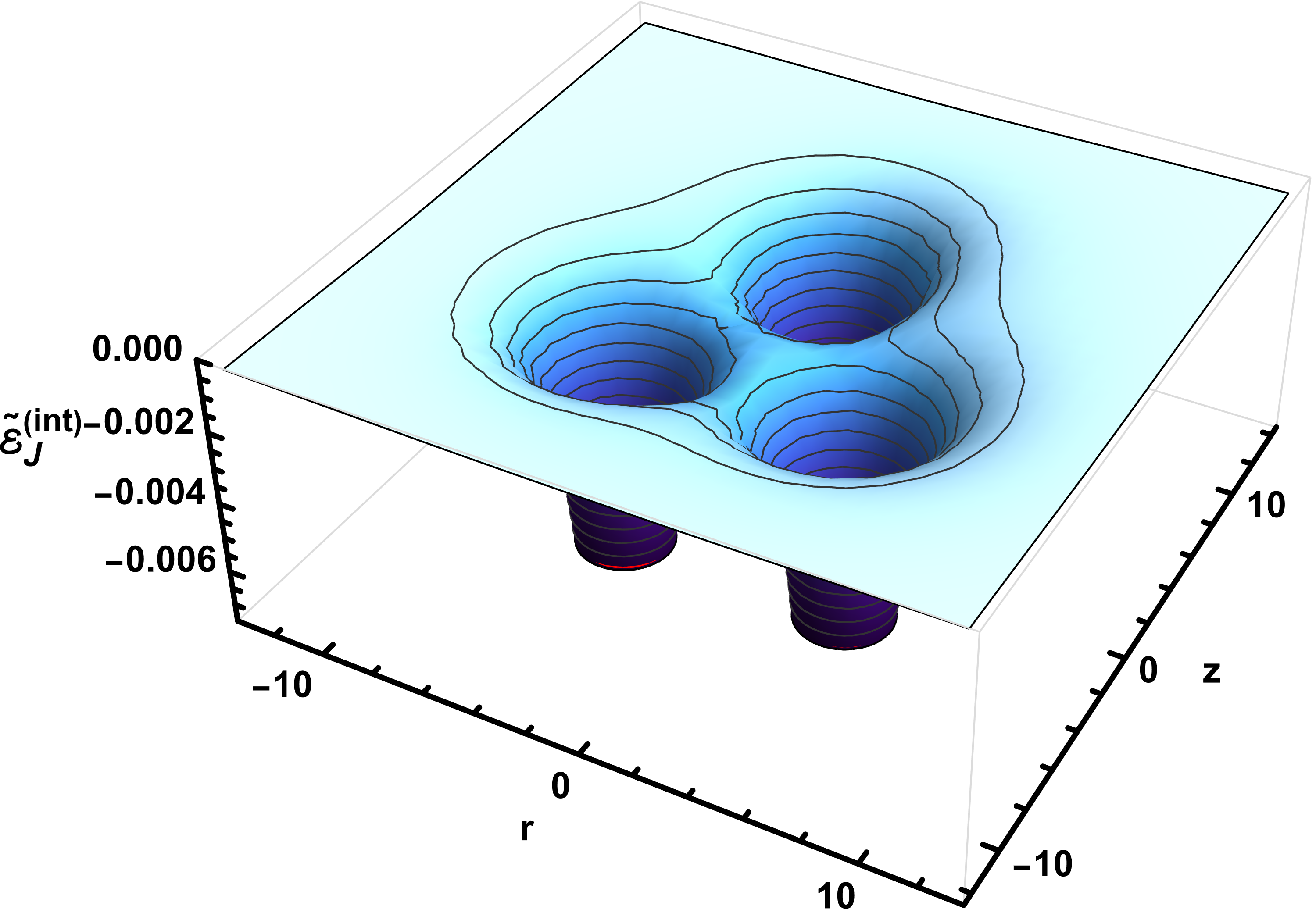}\label{fig:4a}
   			\caption{}
		\end{subfigure}
			\hfill
		\begin{subfigure}{0.48\textwidth}
   			\centering
   			\includegraphics[width=\linewidth]{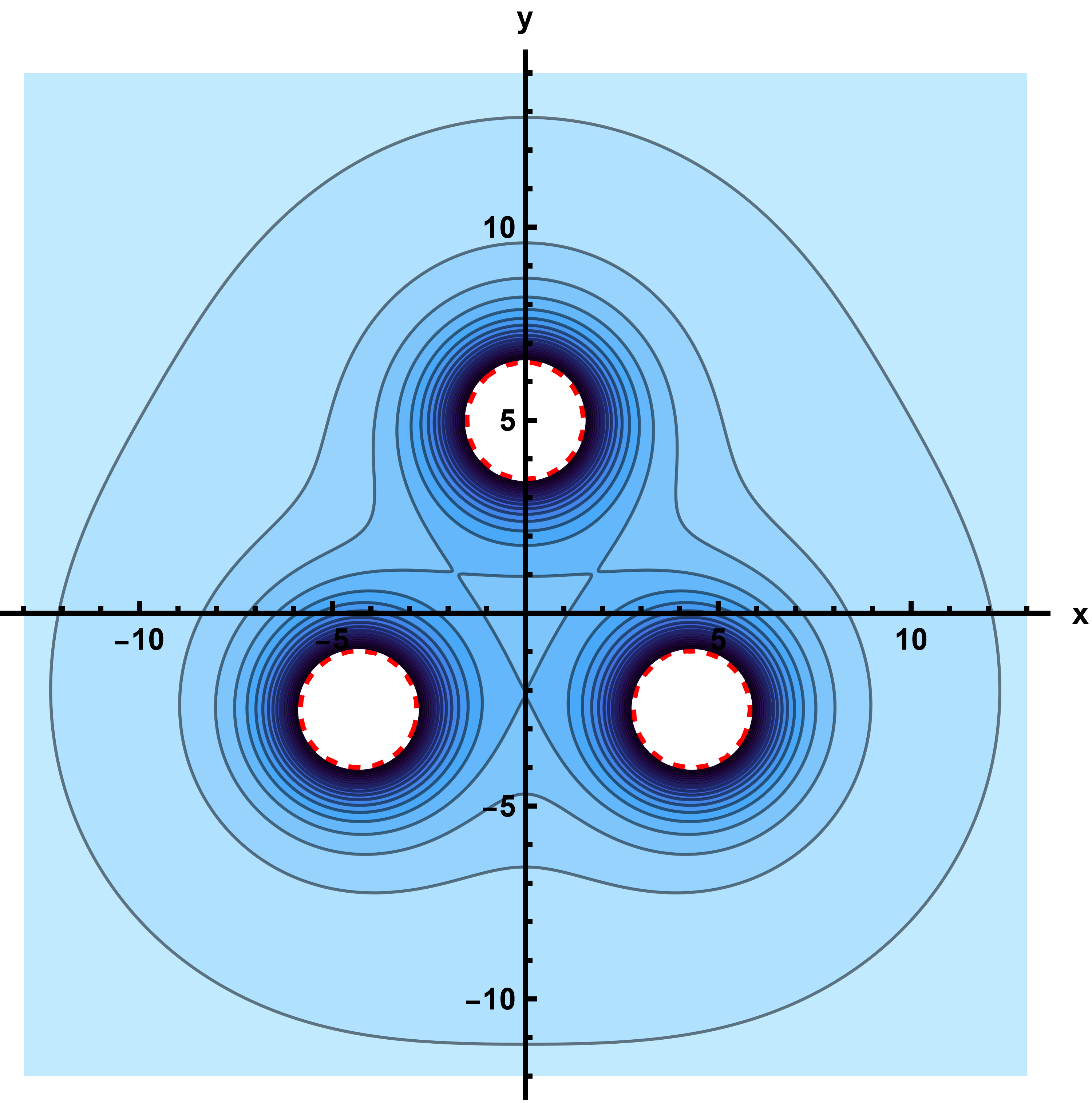}\label{fig:4b}
   			\caption{}
		\end{subfigure}
		\caption{
		$\widetilde{\mathcal{E}}^{(\mathrm{int})}_{J}$ for $N = 4$ and $J = 10$, with $(y_1,y_2,y_3)$ as in \eqref{eq:x4} and $b = 5\sqrt{3}$, plotted as a function of $(x,y)$.
		Panel~(A): surface plot. 
		Panel~(B): level curves. 
		The red dashed curves mark the cut-off imposed by \eqref{eq:ass2}.
		}
	\label{fig:E4}
	\end{figure}

\Cref{fig:E4} displays the approximate interaction energy $\widetilde{\mathcal{E}}^{(\mathrm{int})}_{J}$ for $J = 10$ and $b = 5\sqrt{3}$. 
The same qualitative behavior observed in the previously examined configurations clearly emerges here as well. In particular, the interaction energy decays rapidly as the distance from the obstacles increases. Furthermore, it exhibits pronounced wells in correspondence with the fixed positions of the first three obstacles, indicating the presence of attractive forces. The symmetry of the configuration is also reflected in the structure of the level sets of the interaction energy, which inherit the geometric arrangement of the fixed obstacles. Also in this case, the centroid $y_4 = (0,0,0)$ appears to be the unique (unstable) equilibrium position.

\section{Proofs}\label{sec:proofs}

\subsection{The relative resolvent trace}\label{subsec:restr}
Let us firstly establish the following basic result.

	\begin{lemma}\label{lemma:GaN}
		Assume \eqref{eq:ass1} and \eqref{eq:ass2}. Then, for any $z \in \C \setminus[0,+\infty)$, the matrix $\GaN^z$ is invertible and extends continuously to two distinct invertible matrices as $z$ approaches the cut $[0,+\infty)$ from above and from below.
		Furthermore, $(\GaN^z)^{-1}$ and its boundary extensions admit the absolutely convergent Neumann series representation
		\begin{equation}\label{eq:NSerG}
			\big(\GaN^z\big)^{-1} = \sum_{j = 0}^{+\infty} \big[(\VN)^{-1} \PN \big]^{j} (\VN)^{-1}\,,
		\end{equation}
		where the $N \times N$ matrices $\VN$ and $\PN$ are defined by
		\begin{equation}\label{eq:VNPN}
			\big[\VN\big]_{\mm\nn} := \left(\alpha_\nn - \frac{i\sqrt{z}}{4\pi}\right)\delta_{\mm\nn} \,, \qquad
			\big[\PN\big]_{\mm\nn} := \frac{e^{i \sqrt{z}\,|x_\mm-x_\nn|}}{4\pi|x_\mm-x_\nn|}\,(1-\delta_{\mm\nn})\,.
		\end{equation}
	\end{lemma}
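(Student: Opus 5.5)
Note first that, by \eqref{eq:GaN} and \eqref{eq:VNPN}, $\GaN^z = \VN - \PN = \VN\big(\mathbb{1} - (\VN)^{-1}\PN\big)$. Once $\VN$ is shown to be invertible and $\|(\VN)^{-1}\PN\| < 1$ in a suitable matrix norm, invertibility of $\GaN^z$ and the Neumann series \eqref{eq:NSerG} follow at once. The series converges absolutely in operator norm, and $(\mathbb{1}-K)^{-1}\VN^{-1} = \sum_j K^j \VN^{-1}$ with $K := (\VN)^{-1}\PN$. The plan is therefore to obtain a $z$-uniform bound on the Hilbert--Schmidt (Frobenius) norm of $K$ over the closed upper half-plane of $k := \sqrt{z}$, i.e. $\Im k \geqslant 0$. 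This covers $z \in \C\setminus[0,+\infty)$ ($\Im k>0$) together with both boundary values: $k = v>0$ from above and $k=-v$ from below.

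\textbf{Step 1: the diagonal part.} For $\Im k \geqslant 0$ we have
\[
\big|\alpha_\nn - \tfrac{i k}{4\pi}\big| \geqslant \Re\big(\alpha_\nn - \tfrac{ik}{4\pi}\big) = \alpha_\nn + \tfrac{\Im k}{4\pi} \geqslant \alpha_\nn > 0
\]
by \eqref{eq:ass1}. Hence $\VN$ is invertible with $|[(\VN)^{-1}]_{\nn\nn}| \leqslant \alpha_\nn^{-1}$ on the whole closed half-plane, including the cut.

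\textbf{Step 2: the off-diagonal part.} For $\mm\neq\nn$,
\[
|[\PN]_{\mm\nn}| = \frac{e^{-\Im k\,|x_\mm-x_\nn|}}{4\pi|x_\mm-x_\nn|} \leqslant \frac{1}{4\pi|x_\mm-x_\nn|}.
\]
Since $[K]_{\mm\nn} = (\alpha_\mm - ik/4\pi)^{-1}[\PN]_{\mm\nn}$,
\[
\|K\|_{HS}^2 \leqslant \sum_{\mm\neq\nn}\frac{1}{(4\pi\alpha_\mm)^2|x_\mm-x_\nn|^2} = \rho^2 < 1
\]
by \eqref{eq:ass2}. Because the operator norm is at most the Hilbert--Schmidt norm, $\|K^j\| \leqslant \rho^j$, so the series converges absolutely and uniformly in $k$ on the closed half-plane. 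We also get the bound $\|(\GaN^z)^{-1}\| \leqslant (\min\alpha)^{-1}(1-\rho)^{-1}$.

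\textbf{Step 3: boundary values.} The entries of $\VN$ and $\PN$ are continuous, indeed entire, functions of $k$. The map $z\mapsto\sqrt{z}$ extends continuously from above the cut to $k=v\geqslant 0$ and from below to $k=-v$. Uniform convergence of the series then gives continuity of $(\GaN^z)^{-1}$ up to both sides of the cut. The limits are the inverses of $\VN-\PN$ evaluated at $k=\pm v$, which by Steps 1--2 are invertible. The two boundary matrices are distinct: their diagonal entries $\alpha_\nn \mp iv/4\pi$ already differ for $v>0$. At $v=0$ they coincide, which is harmless, since only continuity and invertibility of each extension are claimed.

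The only delicate point is making sure that the estimates of Steps 1--2 really hold up to $\Im k = 0$. Step 1 uses $\Re(-ik) = \Im k \geqslant 0$, so no smallness of $|k|$ is needed. Step 2 relies on the decay factor being bounded by $1$ rather than on its decay. With those two checks, everything else is routine.
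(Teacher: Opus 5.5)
Your proof is correct and follows essentially the same route as the paper:
\begin{itemize}
\item factor $\GaN^z = \VN\big[\one - (\VN)^{-1}\PN\big]$;
\item use $\Re\big(\alpha_\nn - \tfrac{i\sqrt{z}}{4\pi}\big) \geqslant \alpha_\nn > 0$ to get invertibility of $\VN$;
\item bound the Hilbert--Schmidt norm of $(\VN)^{-1}\PN$ by the left-hand side of \eqref{eq:ass2}, uniformly for $\Im\sqrt{z} \geqslant 0$;
\item obtain the boundary values by continuity in $\sqrt{z}$.
\end{itemize}
The differences are minor. The paper keeps an extra factor decaying in $\Re\sqrt{z}$ in the norm bound, which it reuses in \cref{lemma:eNalt}; you use the uniform bound $\rho^2$ instead. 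You also check explicitly that the two boundary extensions differ for $v>0$, which the paper does not spell out.
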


\begin{proof}
First notice that, with the positions in \eqref{eq:VNPN}, the basic definition \eqref{eq:GaN} reads
\begin{equation*}
	\GaN^z = \VN - \PN \,.
\end{equation*}
Since $\alpha_\nn > 0$ for all $\nn \in \{1,\dots,N\}$ by \eqref{eq:ass1}, and since we are considering the branch of the square root with $\Im \sqrt{z} > 0$ for $z \in \C \setminus [0,+\infty)$ ($\Im \sqrt{z} = 0$ on the cut), it appears that the diagonal matrix $\VN$ has only entries with strictly positive real parts. Hence $\VN$ is invertible and we may factor
\begin{equation*}
	\GaN^z = \VN \big[\one - (\VN)^{-1} \PN \big]\,.
\end{equation*}
Invertibility of $\GaN^z$ and absolute convergence of the Neumann series \eqref{eq:NSerG} both follow once we establish
\begin{equation*}
	\big\|(\VN)^{-1} \PN\big\| < 1\,.
\end{equation*}
To this end, we estimate using the Hilbert-Schmidt norm:
\begin{multline*}
	\big\|(\VN)^{-1} \PN\big\|^2 
	\leqslant \big\|(\VN)^{-1} \PN \big\|_{\mathrm{HS}}^2
	= \tx \sum_{\mm,\nn = 1}^N \big| \big[(\VN)^{-1} \PN \big]_{\mm\nn}\big|^2 
	\\
	= \tx \sum_{\mm,\nn = 1}^N \Big|\Big(\alpha_\mm - \tfrac{i\sqrt{z}}{4\pi}\Big)^{-1} \frac{e^{i \sqrt{z}\,|x_\mm-x_\nn|}}{4\pi|x_\mm-x_\nn|}\,(1-\delta_{\mm\nn}) \Big|^2
	= \sum_{\mm \neq \nn} \frac{1}{(4\pi \alpha_\mm + \Im \sqrt{z})^2 + (\Re \sqrt{z})^2}\, \frac{e^{- 2\, \Im \sqrt{z} \,|x_\mm - x_\nn|}}{|x_\mm - x_\nn|^2} \\
	\leqslant \bigg(\sum_{\mm \neq \nn} \frac{1}{(4\pi \alpha_\mm)^2 |x_\mm - x_\nn|^2}\bigg)\; \frac{1}{1+\left(\tfrac{\Re \sqrt{z}}{4\pi \min_\mm \alpha_\mm}\right)^2} < 1 \,,
\end{multline*}
where the last inequality follows directly from \eqref{eq:ass2}.
Finally, the extensions of $\Gamma_N^z$ and $(\Gamma_N^z)^{-1}$ to both sides of the cut are derived noting that all the quantities involved depend continuously on $\Im \sqrt{z} \geqslant 0$.
\end{proof}

Next, recall that $R(z;\LN) - R(z;\Lz)$ is a finite-rank operator, hence of trace class, for any $z \in \C \setminus \big( \sigma(\LN) \cup \sigma(\Lz)\big)$. Using the explicit expression \eqref{eq:GN} and taking into account that $\Gz^z \in L^2(\R^3)$, by elementary arguments we obtain
	\begin{align*}
		r(z;\LN,\Lz) = \Tr \big[ R(z;\LN) - R(z;\Lz) \big]
		= \sum_{\mm,\nn=1}^N \big[\GaN^{z}\big]^{-1}_{\mm\nn} \int_{\R^3} dx\; \Gz^{z}(x - x_\mm)\,\Gz^{z}(x_\nn - x) \,.
	\end{align*}
Using the Fourier transform representation of the free Green function
	\begin{equation*}
		\Gz^{z}(x) = \frac{1}{(2\pi)^3}\!\int_{\R^3} dp\;\frac{e^{i p \cdot x}}{|p|^2-z} \,,
	\end{equation*}
we compute
	\begin{align*}
		\int_{\R^3} dx\; \Gz^{z}(x - x_\mm)\,\Gz^{z}(x_\nn - x) = \frac{1}{(2\pi)^3} \int_{\R^3} dp\;\frac{e^{-i p \cdot(x_\mm - x_\nn)}}{( |p|^2-z)^2}\,.
	\end{align*}
Passing to polar coordinates and applying the residue theorem, with $\Im\sqrt{z} >0$, we get
	\begin{multline*}
		\frac{1}{(2\pi)^3} \int_{\R^3} dp\;\frac{1}{(|p|^2-z)^2}
		= \frac{1}{2\pi^2} \int_{0}^\infty dr\;\frac{r^2}{(r^2-z)^2}
		\\
		= \frac{1}{4\pi^2} \int_{-\infty}^\infty dr\;\frac{r^2}{(r^2-z)^2}
		= \frac{i}{2\pi}\, \Res\left(\frac{r^2}{(r^2-z)^2};\sqrt{z}\right)
		= \frac{i}{8\pi \sqrt{z}}\,,
	\end{multline*}
for $\mm = \nn$, and 
	\begin{multline*}
		\frac{1}{(2\pi)^3} \int_{\R^3} dp\;\frac{e^{-i p \cdot (x_\mm - x_\nn)}}{(|p|^2-z)^2}
		= \frac{1}{(2\pi)^2} \int_0^\infty dr\,r^2 \;\frac{1}{(r^2-z)^2} \int_0^\pi d\theta\,\sin\theta \;e^{-i r |x_\mm - x_\nn| \cos\theta}
		\\
		= \frac{1}{2\pi^2 |x|} \int_0^\infty dr\,\frac{r\,\sin(r |x_\mm - x_\nn|)}{(r^2-z)^2} 
		= \frac{1}{4i\pi^2 |x_\mm - x_\nn|} \int_{-\infty}^\infty dr \;\frac{r}{(r^2-z)^2}\,e^{i r |x|}
		\\
		= \frac{1}{2\pi |x_\mm - x_\nn|}\, \Res\left( \frac{r}{(r^2-z)^2}\,e^{i r |x|}; \sqrt{z}\right)
		= \frac{i}{8\pi \sqrt{z}}\,e^{i \sqrt{z}|x_\mm - x_\nn|}\,.
	\end{multline*}
for $\mm \neq \nn$. Combining the above results, we obtain
	\begin{equation}\label{eq:rzexp}
		r(z;\LN,\Lz)
		= \frac{i}{8 \pi \sqrt{z}} \sum_{\mm,\nn=1}^N \big[\GaN^{z}\big]^{-1}_{\mm\nn} \,e^{i \sqrt{z}|x_\mm-x_\nn|}\,.
	\end{equation}
where $\GaN^{z}$ is defined as in \eqref{eq:GaN} and the existence of its inverse is ensured by \cref{lemma:GaN}. 
Notice that the above expression for $r(z;\LN,\Lz)$ depends on the spectral parameter $z$ only through its square root $\sqrt{z}$, with the stated determination.
\medskip

We are now ready to prove \cref{prop:rz}.

\begin{proof}[Proof of \cref{prop:rz}]
In view of the characterization \eqref{eq:sigLN} of $\sigma(\LN)$, the claim \eqref{eq:sigLNp} follows straightforwardly from \cref{lemma:GaN}.
Next, we derive the asymptotic expansions \eqref{eq:asyrz0} and \eqref{eq:asyrzinf}.

{\sl i)} Referring again to \eqref{eq:GaN}, we split
	\begin{equation*}
		\GaN^{z} = \GaN^0 - \QN\,,
	\end{equation*}
with
	\begin{equation}
		\big[\GaN^0\big]_{\mm\nn} = \alpha_\nn\,\delta_{\mm\nn} - \tfrac{1}{4\pi|x_\mm-x_\nn|}\,(1-\delta_{\mm\nn})\,,
		\qquad
		\big[\QN\big]_{\mm\nn} := \tfrac{i\sqrt{z}}{4\pi} \left[\delta_{\mm\nn} + \tfrac{e^{i\sqrt{z}|x_\mm - x_\nn|} - 1}{i\sqrt{z} |x_\mm - x_\nn|}\,(1-\delta_{\mm\nn}) \right] . \label{eq:GaN0Q}
	\end{equation}
Under the assumptions \eqref{eq:ass1} and \eqref{eq:ass2}, \cref{lemma:GaN} ensures that $\GaN^0$ is invertible. On top of that, an elementary computation shows that $\|\QN\| < c\,|z|^{1/2}$ for some finite constant $c > 0$ and all $|z|$ small enough. As a consequence, we have $\|(\GaN^0)^{-1} \QN\| < 1$ for all $z \in \C$ sufficiently close to $0$, and we can write
	\begin{equation*}
		\big(\GaN^{z}\big)^{-1} = \big[\one -(\GaN^0)^{-1} \QN\big]^{-1} (\GaN^0)^{-1}
		= \tx \sum_{j = 0}^{+\infty} \big[(\GaN^0)^{-1} \QN\big]^{j} (\GaN^0)^{-1}\,.
	\end{equation*}
Combining this relation with the explicit expression \eqref{eq:rzexp} and the analytic dependence of $\QN$ on $\sqrt{z}$
ultimately accounts for \eqref{eq:asyrz0}.

{\sl ii)} In this case, we rephrase \eqref{eq:GaN} as 
	\begin{equation*}
		\GaN^{z} = -\frac{i \sqrt{z}}{4\pi}\, \big( \one - \SN \big)\,,
	\end{equation*}
where
	\begin{equation*}
		\big[\SN\big]_{\mm\nn} := \frac{4\pi}{i\sqrt{z}} \left[\alpha_\nn\,\delta_{\mm\nn} - \tfrac{e^{i \sqrt{z}\,|x_\mm-x_\nn|}}{4\pi|x_\mm-x_\nn|}\,(1-\delta_{\mm\nn})\right] .
	\end{equation*}
Recalling once more that $\Im \sqrt{z} > 0$, it is easy to see that $\|\SN\| < 1$ for any $z$ sufficiently large. Thus, we can consider the absolutely convergent series expansion
	\begin{equation}
		\big(\GaN^{z}\big)^{-1} = -\frac{4\pi}{i \sqrt{z}}\, \big( \one - \SN \big)^{-1}
		= -\frac{4\pi}{i \sqrt{z}} \tx\sum_{j = 0}^{+\infty}\, (\SN)^j\,.
	\end{equation}
To say more, considering that the off-diagonal terms vanish with exponential rate as $z \to \infty$, we have
	\begin{equation*}
		\big[(\SN)^j\big]_{\mm\nn} = \left(\frac{4\pi \alpha_\nn}{i\sqrt{z}}\right)^j \delta_{\mm\nn}+ O\big(z^{-j/2}\,e^{i\sqrt{z} \min_{\mm\neq \nn}|x_\mm - x_\nn|}\big) \,, \qquad \mbox{for all $j \in \Nz$}\,.
	\end{equation*}
Substituting the above relations into the explicit expression \eqref{eq:rzexp} for $r(z;\LN,\Lz)$ yields
	\begin{equation*}
		r(z;\LN,\Lz)
		\sim -\frac{1}{2 z} \sum_{j = 0}^{+\infty} \big(\tx \sum_{\nn=1}^N \alpha_\nn^j\big) \left(\tfrac{4\pi}{i\sqrt{z}}\right)^j ,
	\end{equation*}
which proves \eqref{eq:asyrzinf}.
\end{proof}

\subsection{The relative spectral density}\label{subsec:respden}
We now analyze the relative spectral density $e(v;\LN,\Lz)$ defined in \eqref{eq:defee}. Recalling the explicit expression \eqref{eq:rzexp} for the relative resolvent trace, we observe that it depends on the spectral parameter $z \in \C \setminus [0,+\infty)$ only through the square root $\sqrt{z}$. Furthermore, with the established determination of the square root, we have
	\begin{equation*}
		\sqrt{v^2 e^{i\varepsilon}} \to v\,,	\qquad
		\sqrt{v^2 e^{i(2\pi - \varepsilon)}} \to - v\,, \qquad
		\mbox{as $\varepsilon \to 0^+$}\,.
	\end{equation*}
Hence, we readily obtain
	\begin{equation}\label{eq:eexp}
		e(v;\LN,\Lz) = \frac{1}{4 \pi^2} \, \Re \left( \sum_{\mm,\nn=1}^N \big[\GaN^{+,v^2}\big]^{-1}_{\mm\nn} \;e^{i v |x_\mm-x_\nn|}\right),
	\end{equation}
where, cf. \eqref{eq:GaN},
	\begin{equation}\label{eq:GaNp}
		\big[\Gamma_{N}^{+,v^2}\big]_{\mm\nn}
		:= \lim_{\varepsilon \to 0^+} \big[\Gamma_{N}^{v^2 e^{i\varepsilon}}\big]_{\mm\nn}
		= \left(\alpha_\mm - \frac{i v}{4\pi}\right)\delta_{\mm\nn} - 
\frac{e^{i v\,|x_\mm-x_\nn|}}{4\pi|x_\mm-x_\nn|}\,(1-\delta_{\mm\nn})\,.
	\end{equation}
This is precisely the continuous extension of $\Gamma_N^z$ to the branch cut $[0,+\infty)$ from above, whose existence and invertibility for all $v > 0$ were previously established in \cref{lemma:GaN}. The analogous extension from below, namely $\Gamma_{N}^{-,v^2}$, satisfies $\Gamma_{N}^{-,v^2} = \overline{\Gamma_{N}^{+,v^2}}$, which explains the appearance of the real part in \eqref{eq:eexp}.

\begin{proof}[Proof of \cref{prop:ee}]
The smoothness of $e(v;\LN,\Lz)$ follows directly from \eqref{eq:eexp}, \eqref{eq:GaNp} and \cref{lemma:GaN}. We now discuss its asymptotic behaviors for $v \to 0^+$ and $v \to + \infty$.

{\sl i)} The claim can be derived by arguing as in the proof of claim {\sl (i)} of \cref{prop:rz}. More precisely, fix $\GaN^0$ as in \eqref{eq:GaN0Q} and set
	\begin{equation*}
		\big[\QNp\big]_{\mm\nn} := \tfrac{i v}{4\pi} \left[\delta_{\mm\nn} + \tfrac{e^{i v |x_\mm - x_\nn|} - 1}{iv |x_\mm - x_\nn|}\,(1-\delta_{\mm\nn}) \right] .
	\end{equation*}
Notice that $\QNp = \lim_{\varepsilon \to 0^+} Q_N^{v^2 e^{i \varepsilon}}$.	
Then, recalling that the assumptions \eqref{eq:ass1} and \eqref{eq:ass2} ensure the uniform invertibility of $\GaN^0$, and noting that $\big\|(\GaN^0)^{-1} \QNp\big\| < 1$ for $v$ small enough, we may write
	\begin{equation}\label{eq:GaNpv0}
		\big(\GaN^{+,v^2}\big)^{-1} = \tx \sum_{j = 0}^{+\infty} \big[(\GaN^0)^{-1} \QNp\big]^{j} (\GaN^0)^{-1}\,.
	\end{equation}
The thesis \eqref{eq:asyev0} follows by inserting \eqref{eq:GaNpv0} into \eqref{eq:eexp} and observing that $\QNp$ depends analytically on $v$. In particular, the fact that only even powers of $v$ appear is a consequence of the analyticity of all functions with respect to the variable $i v$, together with the extraction of the real part in \eqref{eq:eexp}.

{\sl ii)} The argument parallels that in the proof of claim {\sl (ii)} of \cref{prop:rz}, with one important difference. Setting 
	\begin{equation*}
		\big[\SNp\big]_{\mm\nn} := \frac{4\pi}{i v} \left[\alpha_\nn\,\delta_{\mm\nn} - \tfrac{e^{i v\,|x_\mm-x_\nn|}}{4\pi|x_\mm-x_\nn|}\,(1-\delta_{\mm\nn})\right] \,,
	\end{equation*}
so that $\SNp = \lim_{\varepsilon \to 0^+} S_N^{v^2 e^{i \varepsilon}}$, we obtain
	\begin{equation*}
		\big[\GaN^{+,v^2}\big]^{-1}
		= -\frac{4\pi}{i v} \tx\sum_{j = 0}^{+\infty}\, (\SNp)^j\,.
	\end{equation*}
	Unlike in the analysis of $r(z;\LN,\Lz)$, here the off-diagonal terms do not decay exponentially, although they remain uniformly bounded for $v\in(0,+\infty)$. Upon evaluating the real part in \eqref{eq:eexp}, they thus produce oscillating contributions in the asymptotic expansion of $e(v;\LN,\Lz)$, leading to \eqref{eq:asyeving}. It appears that the non-oscillating terms descend from purely diagonal contributions. More precisely, one sees that
	\begin{multline*}
		e(v;\LN,\Lz) = \frac{1}{4 \pi^2} \, \Re \sum_{\nn=1}^N  \left(-\frac{4\pi}{i v} \tx\sum_{j = 0}^{+\infty}\, [\SNp]_{\nn\nn}^j\right) + \mbox{oscillating terms}\\
		= - \,\frac{1}{\pi} \, \Im \left( \tx\sum_{j = 0}^{+\infty} \sum_{\nn=1}^N \frac{(-4\pi i\, \alpha_\nn)^j}{v^{j+1}} \right) + \mbox{oscillating terms}\,.
	\end{multline*}
This yields \eqref{eq:gj}, upon observing that only terms with odd values of $j$ contribute to the imaginary part and relabelling the summation index accordingly. We also note that, as $v \to +\infty$,	
	\begin{multline*}
		 \ee(v;\LN,\Lz) 
		 = -\tfrac{1}{\pi v} \, \Im \tx\sum_{\mm,\nn=1}^N\left( \delta_{\mm\nn} + [\SNp]_{\mm\nn}  + O(v^{-2}) \right) e^{i v |x_\mm-x_\nn|} \\		 
		 = - \tfrac{1}{\pi v} \, \Im \left[ N + \tfrac{4\pi}{i v} \tx\left(\sum_{\nn=1}^N  \alpha_\nn - \sum_{\mm \neq \nn} \tfrac{e^{2i v\,|x_\mm-x_\nn|}}{4\pi|x_\mm-x_\nn|}\right)  + O(v^{-2}) \right] \\
		 = \tfrac{4}{v^2}\! \tx\left(\sum_{\nn=1}^N  \alpha_\nn - \sum_{\mm \neq \nn} \tfrac{\cos(2 v\,|x_\mm-x_\nn|)}{4\pi|x_\mm-x_\nn|}\right) + O(v^{-3})\,,
		\end{multline*}
	which explains the absence of the $v^{-1}$ term in \eqref{eq:asyeving} and further accounts for \eqref{eq:eevinf}.
\end{proof}

\subsection{Analytic continuation of the relative zeta function}\label{subsec:acrelz}
The fact that the definition \eqref{eq:zetamull} of $\zeta(s;\LN,\Lz)$ is well posed for all $s \in \C$ with $0 < \Re s <1/2$ follows from \cref{prop:rz} and the arguments in \cite{SZ09} (see, especially, the comments below Eq. (6) therein). To construct the analytic continuation to the entire complex plane, we refer to the representation \eqref{eq:zetaev} and split, for some $\vz > 0$ fixed arbitrarily,
	\begin{equation}\label{eq:Zsplit}
		\zeta(s;\LN,\Lz) = \zeta_{\vz}^{\IR}(s;\LN,\Lz) + \zeta_{\vz}^{\UV}(s;\LN,\Lz)\,,
	\end{equation}
where
	\begin{eqnarray}
		& \displaystyle{\zeta_{\vz}^{\IR}(s;\LN,\Lz) := \int_0^{\vz}\! dv\;v^{-2s}\,\ee(v;\LN,\Lz)\,,} \label{eq:zetaIR}\\
		& \displaystyle{\zeta_{\vz}^{\UV}(s;\LN,\Lz) := \int_{\vz}^{\infty}\! dv\;v^{-2s}\,\ee(v;\LN,\Lz)\,.} \label{eq:zetaUV}
	\end{eqnarray}
The labels $\IR$ and $\UV$ indicate that $\zeta_{\vz}^{\IR}(s;\LN,\Lz)$ and $\zeta_{\vz}^{\UV}(s;\LN,\Lz)$ capture, respectively, the infra-red and ultra-violet contributions. 
The meromorphic extensions of these two functions are treated separately in the forthcoming \cref{lemma:ZIR,lemma:ZUV}, which together ultimately prove \cref{thm:zLNLz}.
	
	\begin{lemma}\label{lemma:ZIR}
		Assume \eqref{eq:ass1} and \eqref{eq:ass2}. Then, the map $s \mapsto \zeta_{\vz}^{\IR}(s;\LN,\Lz)$ is well-defined and analytic in the half-plane $\{\Re\, s < 1/2\}$, and it possesses a meromorphic continuation to the whole complex plane, with possible simple poles at $s \in 1/2+\Nz$. Moreover, for any $\JIR \in \Nz$ and $\Re s < \JIR+3/2$, one has
		\begin{equation}\label{eq:ZIRac}
			\zeta_{\vz}^{\IR}(s;\LN,\Lz) = \sum_{j = 0}^{\JIR}  \frac{f_j\,\vz^{2j + 1 - 2s}}{2j +1 - 2s} + \int_0^{\vz}\! dv\;v^{-2s}\, \Big[\ee(v;\LN,\Lz) - {\tx \sum_{j = 0}^{\JIR}}\, f_j \,v^{2j} \Big]\,,
		\end{equation}
		where the coefficients $f_j$ are those introduced in \eqref{eq:asyev0} and the integral is absolutely convergent.
	\end{lemma}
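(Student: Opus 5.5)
The plan is the standard subtract-the-Taylor-polynomial argument for Mellin-type integrals near the origin, using \cref{prop:ee}. First, by \cref{prop:ee}, the map $v \mapsto \ee(v;\LN,\Lz)$ is smooth on $(0,+\infty)$ and, by \eqref{eq:asyev0} with $J=0$, bounded near $v=0$. Hence $\ee$ is bounded on $(0,\vz]$, so $|v^{-2s}\ee(v;\LN,\Lz)| \leqslant C\, v^{-2\Re s}$ there. This is integrable on $(0,\vz)$ exactly when $\Re s<1/2$. Uniform domination on compact subsets of this half-plane, together with analyticity of $s\mapsto v^{-2s}$, gives analyticity of $\zeta_{\vz}^{\IR}$ in $\{\Re s<1/2\}$, via Morera/Fubini or differentiation under the integral sign.

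Next, fix $\JIR\in\Nz$ and write $\ee = \sum_{j=0}^{\JIR} f_j v^{2j} + R_{\JIR}(v)$. By \eqref{eq:asyev0}, the asymptotic statement with the next term included gives $R_{\JIR}(v) = O(v^{2\JIR+2})$ as $v\to0^+$. Smoothness of $\ee$ on $(0,\vz]$ then gives $|R_{\JIR}(v)|\leqslant C_{\JIR}\, v^{2\JIR+2}$ on all of $(0,\vz]$. For $\Re s<1/2$ we split the integral accordingly. The polynomial part integrates explicitly:
\[
\int_0^{\vz} v^{2j-2s}\,dv = \frac{\vz^{2j+1-2s}}{2j+1-2s}.
\]
This is valid since $\Re(2j-2s)>-1$. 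The remainder integral $\int_0^{\vz} v^{-2s}R_{\JIR}(v)\,dv$ is dominated by $C_{\JIR}\, v^{2\JIR+2-2\Re s}$. It therefore converges absolutely and defines an analytic function on $\{\Re s<\JIR+3/2\}$, by the same domination argument as above. Thus \eqref{eq:ZIRac} holds on $\{\Re s<1/2\}$.

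The right-hand side of \eqref{eq:ZIRac} is meromorphic on $\{\Re s<\JIR+3/2\}$. Its only singularities are simple poles at $s=j+1/2$, $j=0,\dots,\JIR$, with residue $-f_j\vz^{0}/2=-f_j/2$, which vanishes when $f_j=0$; hence the word ``possible''. So it furnishes the meromorphic continuation there. The continuations for different $\JIR$ agree on overlaps by the identity theorem, since they coincide on $\{\Re s<1/2\}$. Letting $\JIR\to\infty$ yields a meromorphic function on $\C$ with possible simple poles only at $1/2+\Nz$.

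There is no real obstacle. The only point requiring care is the uniform remainder bound $|R_{\JIR}(v)|\leqslant C v^{2\JIR+2}$ on the whole interval $(0,\vz]$, rather than just near $0$. This follows by combining the local asymptotics \eqref{eq:asyev0} with the continuity of $\ee$ on $(0,\vz]$ guaranteed by \cref{prop:ee}.
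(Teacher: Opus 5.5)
Your proposal is correct and follows essentially the same route as the paper. It establishes analyticity on $\{\Re s<1/2\}$ by domination plus Morera/Fubini, subtracts the truncated expansion \eqref{eq:asyev0}, integrates the monomials explicitly, and shows the remainder integral is analytic for $\Re s<\JIR+3/2$. Your explicit uniform bound $|R_{\JIR}(v)|\leqslant C_{\JIR}\,v^{2\JIR+2}$ on all of $(0,\vz]$ and the identity-theorem gluing across different $\JIR$ fill in details that the paper leaves implicit.
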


\begin{proof}
By \cref{prop:ee}, there exists a constant $c_0 > 0$ such that $|v^{-2s} \ee(v;\LN,\Lz)| \leqslant c_0 \,v^{-2\varsigma_0}$ for all $v \in (0,\vz)$ and $\Re s < \varsigma_0$, for any fixed $\varsigma_0 < 1/2$. Since $v^{-2\varsigma_0} \in L^1(0,\vz)$, the integral in \eqref{eq:zetaIR} is absolutely convergent. On top of that, for each $v > 0$, the function $s \in \C \mapsto v^{-2s} \ee(v;\LN,\Lz)$ is entire. Hence, for any closed path $\gamma \subset \{\Re s < 1/2\}$, by Fubini's theorem we infer
\begin{align*}
	\int_{\gamma} ds \int_0^{\vz}\! dv\;v^{-2s}\,\ee(v;\LN,\Lz)
	= \int_0^{\vz}\! dv  \left( \int_{\gamma} ds\; v^{-2s} \right)\ee(v;\LN,\Lz) = 0\,, 
\end{align*}
where the inner integral in the second expression vanishes by Cauchy's theorem. Analyticity of the map $s \mapsto \zeta_{\vz}^{\IR}(s;\LN,\Lz)$ for $\Re s < 1/2$ then follows by Morera's theorem \cite[Thm. 3.1.4]{GK97}.

To obtain \eqref{eq:ZIRac}, we substitute the asymptotic expansion \eqref{eq:asyev0} of $\ee(v;\LN,\Lz)$, truncated at order $\JIR \in \Nz$, into \eqref{eq:zetaIR} and evaluate the resulting terms explicitly:
\begin{multline*}
	\zeta_{\vz}^{\IR}(s;\LN,\Lz)
	= \sum_{j = 0}^{\JIR} f_j \int_0^{\vz}\! dv\;v^{-2s + 2j} + \int_0^{\vz}\! dv\;v^{-2s} \Big[\ee(v;\LN,\Lz) - {\tx \sum_{j = 0}^{\JIR}} f_j \,v^{2j} \Big] \\
	= \sum_{j = 0}^{\JIR}  \frac{f_j\,\vz^{2j + 1 - 2s}}{2j +1 - 2s} + \int_0^{\vz}\! dv\;v^{-2s}\, \Big[\ee(v;\LN,\Lz) - {\tx \sum_{j = 0}^{\JIR}} f_j \,v^{2j} \Big] \,.
\end{multline*}
Although derived under the assumption $\Re s < 1/2$, the latter expression provides the analytic continuation of $\zeta_{\vz}^{\IR}(s;\LN,\Lz)$ to a larger region. In fact, the first sum defines a meromorphic function on $\C$, with possible simple poles at $s = j+1/2$ for $j = 0,\dots,\JIR$, while the remainder integral identifies an analytic function of $s$ for $\Re s < \JIR + 3/2$, by arguments analogous to those described above.
\end{proof}

	\begin{lemma}\label{lemma:ZUV}
		Assume \eqref{eq:ass1} and \eqref{eq:ass2}. Then, the map $s \mapsto \zeta_{\vz}^{\UV}(s;\LN,\Lz)$ is well-defined and analytic in the half-plane $\{\Re\, s > - 1/2\}$, and it possesses a meromorphic continuation to the whole complex plane, with possible simple poles at $s \in - 1/2 -\Nz$. Moreover, for any $\JUV \in \Nz$ and $\Re s > - \JUV/2-1$, one has
		\begin{multline}\label{eq:ZUVac}
		\zeta_{\vz}^{\UV}(s;\LN,\Lz) = \sum_{j = 0}^{\JUV} \frac{g_j\, \vz^{-2s-j-1}}{2s+j+1}
			- \sum_{j = 0}^{\JUV} \sum_{k = 0}^{K_j} \sum_{l = 0}^{\JUV - j}  \vz^{-2s -j-l-2} \,(2s+j+2)_l \,
			\Re \bigg( \frac{h_{j,k}\, e^{i \vz d_{j,k}}}{(i d_{j,k})^{l + 1}} \bigg) \\
			+ \sum_{j = 0}^{\JUV}\sum_{k = 0}^{K_j} \, (2s+j+2)_{\JUV - j+1} \int_{\vz}^{\infty}\! dv\; v^{-2s -\JUV -3}  \,\Re \bigg( \frac{h_{j,k}\,e^{i d_{j,k} v}}{(i  d_{j,k})^{\JUV - j + 1}}	
			 \bigg) \\			 
		+ \int_{\vz}^{\infty}\! dv\;v^{-2s} \left [\ee(v;\LN,\Lz) - \tx\sum_{j = 0}^{\JUV} v^{-j-2} \bigg( g_j + \Re\Big(\sum_{k = 0}^{K_j} h_{j,k}\,e^{i d_{j,k} v}\Big) \bigg) \right] .
	\end{multline}
		where $(a)_0 = 1$, $(a)_l = a (a+1) \,\dots (a+l-1)$ for $l \in \Nz$ are the Pochhammer symbols, the coefficients $g_j,h_{j,k},d_{j,k}$ are those introduced in \eqref{eq:asyeving}, and all the integrals are absolutely convergent.
	\end{lemma}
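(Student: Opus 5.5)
The proof mirrors that of \cref{lemma:ZIR}, with the large-$v$ expansion \eqref{eq:asyeving} of \cref{prop:ee} replacing the small-$v$ one.

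\emph{Step 1: analyticity for $\Re s > -1/2$.} By \eqref{eq:eevinf}, $|\ee(v;\LN,\Lz)| \leqslant c\,v^{-2}$ for $v \geqslant \vz$, and $\ee$ is smooth by \cref{prop:ee}. Fix $\varsigma_0 > -1/2$. For $\Re s > \varsigma_0$ we then have $|v^{-2s}\ee| \leqslant c\,v^{-2\varsigma_0-2}$, which lies in $L^1(\vz,\infty)$. Fubini's theorem, Cauchy's theorem for the entire map $s\mapsto v^{-2s}$, and Morera's theorem then give analyticity of \eqref{eq:zetaUV} on $\{\Re s>-1/2\}$.

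\emph{Step 2: subtraction of the asymptotic terms.} Fix $\JUV$ and subtract from $\ee$ the truncated expansion $\sum_{j\leqslant\JUV} v^{-j-2}\big(g_j+\Re\sum_k h_{j,k}e^{id_{j,k}v}\big)$. By \eqref{eq:asyeving} the remainder is $O(v^{-\JUV-3})$, so the last integral in \eqref{eq:ZUVac} converges absolutely and is analytic for $\Re s > -\JUV/2-1$. The non-oscillating pieces are integrated exactly:
\[
g_j\int_{\vz}^\infty v^{-2s-j-2}\,dv=\frac{g_j\,\vz^{-2s-j-1}}{2s+j+1}.
\]
These are the only sources of poles, all simple, located at $s=-(j+1)/2$. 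Because $g_j=0$ for odd $j$ by \eqref{eq:gj}, the surviving poles lie in $-1/2-\Nz$.

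\emph{Step 3: the oscillating terms (main obstacle).} The integrals $\int_{\vz}^\infty v^{-2s-j-2}e^{idv}\,dv$ converge only conditionally as $\Re s$ decreases, so they must be continued without introducing spurious poles. I would integrate by parts $\JUV-j+1$ times, each time integrating $e^{idv}$ via $\frac{d}{dv}\frac{e^{idv}}{id}$ and differentiating the power of $v$. The first step gives
\[
\int_{\vz}^\infty v^{-a}e^{idv}\,dv=-\frac{\vz^{-a}e^{id\vz}}{id}+\frac{a}{id}\int_{\vz}^\infty v^{-a-1}e^{idv}\,dv,\qquad a=2s+j+2,
\]
where $d>0$ is needed to divide. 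Iterating produces the boundary terms with Pochhammer factors $(2s+j+2)_l$ and a remainder integral with integrand $v^{-2s-\JUV-3}$. That remainder is absolutely convergent, hence analytic, for $\Re s>-\JUV/2-1$. All boundary terms and Pochhammer prefactors are entire in $s$, so the oscillating part contributes no poles at all. Taking real parts is legitimate throughout, since $s$ enters only through the real-analytic factor $v^{-2s}$; more precisely, I would split $\Re(h e^{idv})=\tfrac12(he^{idv}+\bar h e^{-idv})$ before integrating, so that the continuation stays holomorphic in $s$.

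\emph{Step 4: conclusion.} Combining Steps 2 and 3 yields \eqref{eq:ZUVac}, valid first for $\Re s>-1/2$ and then, by uniqueness of analytic continuation, on $\Re s>-\JUV/2-1$. Letting $\JUV\to\infty$ gives the meromorphic continuation to all of $\C$, with only simple poles in $-1/2-\Nz$. The one delicate point besides Step 3 is to check that every $d_{j,k}$ is strictly positive, so that no oscillating term degenerates into a non-oscillating one. This holds because \cref{prop:ee} states $\{d_{j,k}\}\subset\R_+$ and the $d_{j,k}$ are sums of positive mutual distances; any $d=0$ contribution would already have been absorbed into $g_j$.
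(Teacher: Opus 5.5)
Your proposal is correct and follows the paper's proof essentially step by step. Both arguments use a uniform $v^{-2}$ bound with Fubini--Morera for analyticity on $\{\Re s > -1/2\}$, subtract the truncated expansion \eqref{eq:asyeving}, integrate the $g_j$ terms exactly (with $g_j=0$ for odd $j$ placing the poles in $-1/2-\Nz$), and integrate the oscillatory terms by parts $\JUV-j+1$ times after splitting $\Re(h\,e^{idv})$ into conjugate exponentials, as in \eqref{eq:zUV3}; the only cosmetic difference is that the paper first rescales $v=\vz w$.
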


\begin{proof}
The well-posedness and analyticity of the integral expression \eqref{eq:zetaUV} can be established by arguments analogous to those reported in the first part of the proof of \cref{lemma:ZIR}. In this connection, notice that claim {\sl (ii)} in \cref{prop:ee} gives $|v^{-2s} \ee(v;\LN,\Lz)| \leqslant c'_0 \,v^{-2\varsigma_0-2} \in L^1(\vz,+\infty)$ for some $c'_0 > 0$ and all $\Re s > \varsigma'_0$, for any fixed $\varsigma'_0 > -1/2$.

 We now turn to the derivation of \eqref{eq:ZUVac}. Substituting the asymptotic expansion \eqref{eq:asyeving} of $\ee(v;\LN,\Lz)$, truncated at order $\JUV \in \Nz$, into \eqref{eq:zetaUV} we obtain
	\begin{multline}\label{eq:zUV3}
		\zeta_{\vz}^{\UV}(s;\LN,\Lz) = \sum_{j = 0}^{\JUV}g_j \int_{\vz}^{\infty}\! dv\;v^{-2s-j-2}\\
			+ \frac{1}{2}\sum_{j = 0}^{\JUV}\sum_{k = 0}^{K_j}\bigg(h_{j,k} \int_{\vz}^{\infty}\! dv\;v^{-2s -j-2} e^{i d_{j,k} v} + h^*_{j,k} \int_{\vz}^{\infty}\! dv\;v^{-2s -j-2} e^{-i d_{j,k} v}\bigg) \\
		+ \int_{\vz}^{\infty}\! dv\;v^{-2s} \left [\ee(v;\LN,\Lz) - \tx\sum_{j = 0}^{\JUV} v^{-j-2} \bigg( g_j + \Re\Big(\sum_{k = 0}^{K_j} h_{j,k}\,e^{i d_{j,k} v}\Big) \bigg) \right] .
	\end{multline}
Of course, for any $j = 0,\dots,\JUV$, we have
	\begin{equation*}
		\int_{\vz}^{\infty}\! dv\;v^{-2s-j-2} = \frac{\vz^{-2s-j-1}}{2s+j+1}\,,
	\end{equation*}
which identifies a meromorphic function of $s \in \C$ with a simple pole in $s = -(j+1)/2$.
As for the oscillatory integrals in \eqref{eq:zUV3}, keeping in mind that $d_{j,k} >0$ and $\vz >0$, an inductive argument based on repeated integration by parts yields, for any $M \in \Nz$,
	\begin{multline*}
		\int_{\vz}^{\infty}\! dv\;v^{-2s -j-2} e^{\pm i d_{j,k} v}
		= \vz^{-2s -j-1} \int_{1}^{\infty}\! dw\;w^{-2s -j-2} e^{\pm i \vz d_{j,k} w} \\
		= \vz^{-2s -j-1} \left[- e^{\pm i \vz d_{j,k}} \sum_{l = 0}^M \frac{(2s+j+2)_l}{(\pm i \vz d_{j,k})^{l + 1}} + \frac{(2s+j+2)_{M+1}}{(\pm i \vz d_{j,k})^{M + 1}} \int_{1}^{\infty}\! dw\; w^{-2s -j-M-3} e^{\pm i \vz d_{j,k} w}\right].
	\end{multline*}
It is evident that the first addenda are entire for any $s \in \C$. On the other hand, it can be proved that the latter integral defines an analytic function of $s$ for $\Re s > -(j+M)/2-1$, by arguments analogous to those outlined above.
Similarly, on account of claim {\sl (ii)} in \cref{prop:ee}, the remainder integral in the last line of \eqref{eq:zUV3} is analytic for $\Re s > -\JUV/2 - 1$.
The thesis follows by combining these results, recalling that $g_j = 0$ for odd values of $j$ and fixing $M = \JUV - j$.
\end{proof}

\begin{proof}[Proof of \cref{thm:zLNLz}]
The thesis is derived using \eqref{eq:Zsplit} and \cref{lemma:ZIR,lemma:ZUV}. In particular, \eqref{eq:ZIRac} and \eqref{eq:ZUVac} show that $s = 0$ is not a pole of $\zeta(s;\LN,\Lz)$, as neither $\zeta_{\vz}^{\IR}(s;\LN,\Lz)$ nor $\zeta_{\vz}^{\UV}(s;\LN,\Lz)$ are singular there. On the other hand, $s = -1/2$ is a point of analyticity for $\zeta_{\vz}^{\IR}(s;\LN,\Lz)$ and simple pole for $\zeta_{\vz}^{\UV}(s;\LN,\Lz)$.
\end{proof}

\subsection{Computing the renormalized relative partition function}\label{subsec:relpart}
Exploiting the explicit structure of the imaginary time component of the Euclidean Klein-Gordon operator $\AN$, one can relate the analytic continuation of the relative zeta function $\zeta(s;\AN,\Az)$ at $s = 0$ to that of its purely spatial counterpart $\zeta(s;\LN,\Lz)$ at $s = -1/2$. More precisely, \cite[Prop. 3.1]{SZ09} yields
	\begin{gather*}
		\Rez_{s=0} \zeta(s;\AN,\Az) = - \beta \Reu_{s = -1/2} \zeta(s;\LN,\Lz)\,, \\
		\Rez_{s=0} \zeta'(s;\AN,\Az) = - \beta \Rez_{s = -1/2} \zeta(s;\LN,\Lz) - 2\beta(1-\log 2) \Reu_{s = -1/2}\zeta(s;\LN,\Lz) -2 \log\eta(\beta;\LN,\Lz) \,,
	\end{gather*}
where, for $k = 0,1$, $\displaystyle{\Rek_{s = s_*}} f(s)$ denotes the coefficient of $(s-s_*)^{-k}$ in the Laurent expansion of $f(s)$ at $s = s_* \in \C$, while $\log\eta(\beta;L,L_0)$ is defined as in \eqref{eq:logndef}.
Substituting the above relations into the definition \eqref{eq:lnZN} of the renormalized connected relative partition function, one obtains
	\begin{multline}\label{eq:ZZzeta}
		\big[\ln \ZeN - \ln \Zez\big]_{\ren}
		= \beta \big[\log(2\ell) - 1\big] \Reu_{s=-1/2} \zeta(s;\LN,\Lz) \\
		- \tfrac{\beta}{2} \Rez_{s = -1/2} \zeta(s;\LN,\Lz) - \log\eta(\beta;\LN,\Lz)\,.
	\end{multline}
We remark that this expression is well-defined, since we already established in \cref{thm:zLNLz} that the (analytic continuation of the) relative zeta function $\zeta(s;\LN,\Lz)$ has a simple pole at $s = -1/2$.

\begin{proof}[Proof of \cref{thm:ZZren}]
We consider the representation \eqref{eq:ZZzeta} and refer to the splitting \eqref{eq:Zsplit} of the relative zeta function $\zeta(s;L,L_0)$. On account of \cref{lemma:ZIR}, we have that the IR contribution $\zeta_{\vz}^{\IR}(s;\LN,\Lz)$ is analytic at $s = -1/2$, for any $\vz > 0$. In particular, we have
	\begin{align*}
		\Reu_{s = -1/2} \zeta_{\vz}^{\IR}(s;\LN,\Lz)  = 0\,, \qquad
		\Rez_{s = -1/2} \zeta_{\vz}^{\IR}(s;\LN,\Lz)  = \int_0^{\vz}\! dv\;v\; \ee(v;\LN,\Lz)\,.
	\end{align*}
On the other hand, the analytic continuation of the UV term $\zeta_{\vz}^{\UV}(s;\LN,\Lz)$ has a simple pole at $s = -1/2$. Using the representation \eqref{eq:ZUVac} with $\JUV = 0$, we infer 
	\begin{equation*}
		\Reu_{s = -1/2} \zeta_{\vz}^{\UV}(s;\LN,\Lz) = \frac{1}{2}\,g_0 \,,
	\end{equation*}
	\begin{multline*}
		\Rez_{s = -1/2} \zeta_{\vz}^{\UV}(s;\LN,\Lz) = - g_0 \log \vz
			- \sum_{k = 0}^{K_0} \Re \bigg( \frac{h_{0,k}\, e^{i \vz d_{0,k}}}{i \vz d_{0,k}} \bigg)
			+ \sum_{k = 0}^{K_0} \,\vz \int_{\vz}^{\infty}\! dv\; v^{-2}  \,\Re \bigg( \frac{h_{0,k}\,e^{i d_{0,k} v}}{(i \vz d_{0,k})}	
			 \bigg) \\			 
		+ \int_{\vz}^{\infty}\! dv\;v \left [\ee(v;\LN,\Lz) - \tx v^{-2} \bigg( g_0 + \Re\Big(\sum_{k = 0}^{K_0} h_{0,k}\,e^{i d_{0,k} v}\Big) \bigg) \right].
	\end{multline*}
The thesis follows readily from the above results, recalling the explicit expressions \eqref{eq:gj} and \eqref{eq:eevinf}.
\end{proof}

\subsection{Asymptotic expansions of the Dedekind eta function}\label{subsec:ded}
We now study the behavior of the Dedekind eta function $\eta(\beta;\LN,\Lz)$, defined in \eqref{eq:logndef}, in both the low-temperature limit $\beta \to +\infty$ and in the high-temperature regime $\beta \to 0^+$.

As a preliminary, we observe that
	\begin{equation}\label{eq:dlogndef}
		\beta\,\partial_\beta \log \eta(\beta;\LN,\Lz) = \int_0^\infty\! dv\; \frac{\beta v}{e^{\beta v}-1}\,\ee(v;\LN,\Lz)\,.
	\end{equation}
This identity follows directly from \eqref{eq:logndef} by interchanging differentiation and integration, a manipulation which is justified by the dominated convergence theorem. In fact, by \cref{prop:ee} we have $\ee(v;\LN,\Lz) \in L^\infty(\R_+)$, which implies in turn 
$\big|\beta\,\partial_\beta\log(1-e^{-\beta v})\,e(v;\Lambda_N,\Lambda_0)\big| \leqslant \|e(\,\cdot\,;\Lambda_N,\Lambda_0)\|_\infty\, \frac{\beta_0 v}{e^{\beta_0 v}-1} \in L^1(\mathbb{R}_+)$ for any fixed $\beta_0 > 0$ and all $\beta \geqslant \beta_0$.

	\begin{lemma}\label{lemma:logetainf}
		Assume \eqref{eq:ass1} and \eqref{eq:ass2}. Then, as $\beta \to + \infty$,
			\begin{gather}
				\log \eta(\beta;\LN,\Lz) \sim - \frac{\pi^2}{\beta} \sum_{j = 0}^{+\infty} f_j\,\frac{(2\pi)^{2j}\,|B_{2j+2}|}{(j+1)(2j+1)}\,\frac{1}{\beta^{2j}}\,,\label{eq:logetainf}\\
				\partial_\beta \log \eta(\beta;\LN,\Lz) \sim \frac{\pi^2}{\beta^2} \sum_{j = 0}^{+\infty} f_j\,\frac{(2\pi)^{2j}\,|B_{2j+2}|}{(j+1)}\,\frac{1}{\beta^{2j}}\,,\label{eq:dlogetainf}
			\end{gather}
		where $B_{2j+2}$ are the Bernoulli numbers and the coefficients $\{f_{j}\}_{j \in \Nz} \subset \R$ are those appearing in \eqref{eq:asyev0}.				
	\end{lemma}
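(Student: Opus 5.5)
We prove the low-temperature asymptotics by substituting $v = w/\beta$ in \eqref{eq:logndef}, which gives
\begin{equation*}
	\log\eta(\beta;\LN,\Lz) = \frac{1}{\beta}\int_0^\infty dw\,\log(1-e^{-w})\,\ee(w/\beta;\LN,\Lz)\,.
\end{equation*}
As $\beta \to +\infty$ the argument $w/\beta$ tends to zero, so the small-$v$ expansion \eqref{eq:asyev0} of \cref{prop:ee} is the relevant input. For fixed $J \in \Nz$ we write
\begin{equation*}
	\ee(v;\LN,\Lz) = \sum_{j=0}^{J} f_j v^{2j} + R_J(v)\,.
\end{equation*}
The remainder satisfies $|R_J(v)| \leqslant C_J v^{2J+2}$ for all $v>0$. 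For $v \leqslant 1$ this follows from \eqref{eq:asyev0}. For $v \geqslant 1$ it follows from the boundedness of $\ee$ (from \eqref{eq:asyeving}): the polynomial part of $R_J$ is dominated by $v^{2J+2}$, and so is the bounded part $\ee$ itself.

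Inserting the truncated expansion gives the main terms
\begin{equation*}
	\frac{1}{\beta}\sum_{j=0}^J \frac{f_j}{\beta^{2j}}\int_0^\infty dw\,w^{2j}\log(1-e^{-w})\,.
\end{equation*}
The remainder term is bounded by
\begin{equation*}
	C_J\,\beta^{-2J-3}\int_0^\infty dw\, w^{2J+2}\,|\log(1-e^{-w})|\,.
\end{equation*}
This integral is finite because $|\log(1-e^{-w})|$ has only a logarithmic singularity at $0$ and decays exponentially at infinity. Hence the remainder is $O(\beta^{-2J-3})$, which is exactly the order of the next term in \eqref{eq:logetainf}.

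The moments are computed by expanding $\log(1-e^{-w}) = -\sum_{k\geqslant 1} e^{-kw}/k$ and integrating termwise, which is justified by monotone convergence:
\begin{equation*}
	\int_0^\infty w^{2j}\log(1-e^{-w})\,dw = -(2j)!\,\zeta_R(2j+2)\,.
\end{equation*}
Here $\zeta_R$ denotes the Riemann zeta function. By Euler's formula,
\begin{equation*}
	\zeta_R(2j+2) = \frac{(2\pi)^{2j+2}\,|B_{2j+2}|}{2\,(2j+2)!}\,.
\end{equation*}
Substituting, the $j$-th coefficient becomes
\begin{equation*}
	-\frac{(2\pi)^{2j+2}\,|B_{2j+2}|}{2(2j+1)(2j+2)} = -\pi^2\,\frac{(2\pi)^{2j}\,|B_{2j+2}|}{(j+1)(2j+1)}\,,
\end{equation*}
which is \eqref{eq:logetainf}.

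For \eqref{eq:dlogetainf} we use the identity \eqref{eq:dlogndef} rather than differentiating an asymptotic series, which would need justification. The same substitution gives
\begin{equation*}
	\partial_\beta\log\eta = \beta^{-2}\int_0^\infty dw\,\frac{w}{e^w-1}\,\ee(w/\beta;\LN,\Lz)\,.
\end{equation*}
We apply the same truncation with the same remainder bound; the remainder integral is finite because $w^{2J+3}/(e^w-1)$ is integrable. The moments are
\begin{equation*}
	\int_0^\infty \frac{w^{2j+1}}{e^w-1}\,dw = (2j+1)!\,\zeta_R(2j+2) = \frac{(2\pi)^{2j+2}\,|B_{2j+2}|}{2(2j+2)}\,,
\end{equation*}
which equals $\pi^2 (2\pi)^{2j}|B_{2j+2}|/(j+1)$. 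This yields \eqref{eq:dlogetainf}, consistently with termwise differentiation of \eqref{eq:logetainf}.

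The only delicate step is the uniform bound on $R_J$ over the whole half-line. \Cref{prop:ee} supplies it by combining the local small-$v$ expansion with global boundedness of $\ee$, which follows from smoothness on $(0,\infty)$ together with the $v^{-2}$ decay at infinity.
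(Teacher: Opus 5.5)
Your proof is correct. It differs from the paper's only in how the error terms are organized.

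\textbf{The paper's route.} The paper splits $\log\eta$ at a small $\vz>0$ and inserts the truncated expansion \eqref{eq:asyev0} only on $(0,\vz)$. It then controls three pieces separately:
\begin{enumerate}[i)]
\item the tails $\int_{\vz}^\infty \log(1-e^{-\beta v})\,v^{2j}\,dv$, created by extending the main terms to $(0,\infty)$, bounded via incomplete Gamma function estimates;
\item the contribution of $(\vz,\infty)$, bounded via $\|\ee\|_\infty$;
\item the local remainder on $(0,\vz)$.
\end{enumerate}
The first two are exponentially small in $\beta\vz$, and the third gives the $O(\beta^{-2J-3})$ error.

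\textbf{Your route.} You globalize the Taylor remainder to $|R_J(v)|\leqslant C_J v^{2J+2}$ on all of $(0,\infty)$. This uses the boundedness of $\ee$ and the fact that a polynomial of degree $2J$ is dominated by $v^{2J+2}$ away from the origin. After the rescaling $v=w/\beta$, a single integral gives the $O(\beta^{-2J-3})$ error, and $O(\beta^{-2J-4})$ for the derivative, with an explicit constant.

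\textbf{Comparison.} Your version is shorter. It also shows that the paper's closing remark, that the splitting at $\vz$ is ``essential'', is too strong. What is actually needed is a global polynomial bound on the remainder, whose growth is absorbed by the exponential decay of the weights $\log(1-e^{-w})$ and $w/(e^w-1)$. What the paper's split buys in exchange is a finer error structure: it makes explicit that everything beyond any fixed infrared cutoff contributes only exponentially small corrections.

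\textbf{One small imprecision.} The expansion \eqref{eq:asyev0} gives $|R_J(v)|\leqslant C v^{2J+2}$ only on some interval $(0,\delta]$, not necessarily up to $v=1$. The Neumann series for $(\GaN^{+,v^2})^{-1}$ around $\GaN^0$ converges only for small $v$. Your boundedness argument for $v\geqslant 1$ works verbatim for $v\geqslant\delta$, so nothing changes.
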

	
\begin{proof}
As an example, we discuss the derivation of \eqref{eq:logetainf}. The expansion \eqref{eq:dlogetainf} can be deduced by analogous arguments, starting from the identity \eqref{eq:dlogndef}.		

The key idea is that, in the low-temperature regime, the dominant contribution comes from the infrared region $v \to 0^+$. To make this intuition precise, we fix $\vz > 0$ small enough and split
	\begin{equation}\label{eq:logetaas}
			\log \eta(\beta;\LN,\Lz) = \int_0^{\vz} dv\, \log(1-e^{-\beta v})\,\ee(v;\LN,\Lz) 
			+ \int_{\vz}^\infty dv\, \log(1-e^{-\beta v})\,\ee(v;\LN,\Lz)  .
	\end{equation}
On $(0,\vz)$, we can use the expansion \eqref{eq:asyev0} truncated at an arbitrary order $J > 0$ and write
	\begin{multline}\label{eq:evIR}
		\int_0^{\vz} dv\, \log(1-e^{-\beta v})\,\ee(v;\LN,\Lz) 
		= \sum_{j = 0}^{J} f_j \int_0^{\infty} dv\, \log(1-e^{-\beta v}) \,v^{2j} \\
			- \sum_{j = 0}^{J} f_j \int_{\vz}^{\infty} dv\, \log(1-e^{-\beta v}) \,v^{2j}
			+ \int_0^{\vz} dv\, \log(1-e^{-\beta v})\,\bigg[\ee(v;\LN,\Lz)- \sum_{j = 0}^{J} f_j \,v^{2j}\bigg]\,.
	\end{multline}
The first sum determines the asymptotic expansion. Indeed, using the series $\log(1-x) = -\sum_{k = 1}^{+\infty} \frac{x^k}{k}$
and evaluating the resulting expressions in terms of the Euler Gamma and Riemann zeta functions \cite[Eqs. 5.2.1, 5.4.1 and 25.2.1, 25.6.2]{OL+10}, we obtain, for each $j \in \Nz$,
	\begin{multline*}
		\int_0^\infty dv\, \log(1-e^{-\beta v})\, v^{2j}
		= - \int_0^\infty dv\, \sum_{k = 1}^{+\infty} \frac{e^{-k \beta v}}{k}\,v^{2j} \\
		= - \frac{\Gamma(2j+1)}{\beta^{2j+1}} \sum_{k = 1}^{+\infty} \frac{1}{k^{2j+2}}
		= - \Gamma(2j+1)\,\zeta_{\mathrm{R}}(2j+2)\,\frac{1}{\beta^{2j+1}}
		= - \frac{(2\pi)^{2j+2}\,|B_{2j+2}|}{2 (2j+2)(2j+1)}\,\frac{1}{\beta^{2j+1}} \,.
	\end{multline*}
We now show that the remaining terms in \eqref{eq:evIR} yield corrections of larger order. Using the inequality $|\log(1-t)| \leqslant C\,t$ for $t \in [0,t_0]$ with $t_0 < 1$, together with some known identities for the incomplete Gamma function \cite[Eqs. 8.2.2 and 8.4.8]{OL+10}, for any $j \in \Nz$ and all sufficiently large $\beta > 0$ we infer
	\begin{multline}
		\left|\int_{\vz}^{\infty} dv\, \log(1-e^{-\beta v}) \,v^{2j}\right|
		\leqslant C \int_{\vz}^{\infty}\! dv\, e^{-\beta v} \,v^{2j}
		= C\, \beta^{-(2j+1)}\,\Gamma(2j+1,\beta \vz) \\
		= C\,\beta^{-(2j+1)}\,(2j)!\, e^{-\beta \vz} \sum_{k = 0}^{2j} \frac{(\beta \vz)^k}{k!}
		\leqslant C\,\beta^{-(2j+1)}\,(2j)!\, e^{-\beta \vz} (\beta \vz)^{2j} \sum_{k = 0}^{+\infty} \frac{1}{k!} = 
		C\,(2j)!\, \frac{\vz^{2j}}{\beta}\, e^{-(\beta \vz -1)}. \label{eq:estim}
	\end{multline}
For the last term in \eqref{eq:evIR}, since $|e(v;\Lambda_N,\Lambda_0) - \sum_{j=0}^J f_j v^{2j}| \leqslant C\,v^{2J+2}$ on $(0,v_0)$ by claim (i) in \cref{prop:ee}, computations similar to those described before entail
	\begin{multline*}
		\bigg|\int_0^{\vz}\!\! dv\, \log(1-e^{-\beta v}) \bigg[\ee(v;\LN,\Lz)- \tx\sum_{j = 0}^{J} f_j \,v^{2j}\bigg]\bigg| \\
		\leqslant C \int_0^{\infty}\!\! dv\, \big| \log(1-e^{-\beta v})\big|\,v^{2J+2}
		= C\, \frac{1}{\beta^{2J+3}}\,\Gamma(2J+3)\,\zeta_{\mathrm{R}}(2J+4)\,.
	\end{multline*}
	
Finally, exploiting the uniform boundedness of $\ee(v;\LN,\Lz)$ established in \cref{prop:ee} and using the bound \eqref{eq:estim}, we estimate the second integral in \eqref{eq:logetaas} by
	\begin{multline*}
			\left|\int_{\vz}^\infty dv\, \log(1-e^{-\beta v})\,\ee(v;\LN,\Lz)\right| \\
			\leqslant \|\ee(\,\cdot\,;\LN,\Lz)\|_{\infty} \int_{\vz}^\infty dv\, \log(1-e^{-\beta v}) 
			\leqslant  C\, \|\ee(\,\cdot\,;\LN,\Lz)\|_{\infty}\frac{1}{\beta}\, e^{-(\beta \vz - 1)} .
	\end{multline*}
We remark that the initial splitting at $\vz$ is essential. Indeed, since \eqref{eq:asyev0} holds only in the limit $v \to 0^+$, substituting it directly into \eqref{eq:logndef} would yield no uniform control of the remainder for large $v$.
\end{proof}

	\begin{lemma}\label{lemma:logeta0}
		Assume \eqref{eq:ass1} and \eqref{eq:ass2}. Then, as $\beta \to 0^+$,
			\begin{gather}
				\log \eta(\beta;\LN,\Lz) = \left(\int_0^\infty\! dv\,\ee(v;\LN,\Lz)\right) \log\beta + \int_0^\infty\! dv\,\log v\;\ee(v;\LN,\Lz) + \mathcal{O}(\beta \log\beta)\,, \label{eq:logeta0}\\
				\beta\,\partial_\beta \log \eta(\beta;\LN,\Lz) = \int_0^\infty\! dv\; \ee(v;\LN,\Lz) + \mathcal{O}(\beta \log\beta)\,.\label{eq:dlogeta0}
			\end{gather}		
	\end{lemma}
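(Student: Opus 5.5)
The plan is to rescale the integration variable. Substituting $v = w/\beta$ in \eqref{eq:logndef} gives
\begin{equation*}
\log\eta(\beta;\LN,\Lz) = \frac{1}{\beta}\int_0^\infty dw\,\log(1-e^{-w})\,\ee(w/\beta;\LN,\Lz)\,.
\end{equation*}
As $\beta\to0^+$, the spectral density is then sampled at large arguments, so the ultraviolet asymptotics \eqref{eq:asyeving}--\eqref{eq:eevinf} enter. It turns out to be cleaner to work directly in the original variable and split the kernel. Write $\log(1-e^{-\beta v}) = \log(\beta v) + \phi(\beta v)$ with
\begin{equation*}
\phi(t):=\log\frac{1-e^{-t}}{t}\,.
\end{equation*}
This function is smooth on $[0,\infty)$, satisfies $\phi(t)=-t/2+O(t^2)$ near $0$, and $\phi(t)=-\log t+O(e^{-t})$ as $t\to\infty$, so $|\phi(t)|\leqslant C\min\{t,\,1+\log(1+t)\}$. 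Since $\ee(\cdot;\LN,\Lz)$ is bounded and $O(v^{-2})$ at infinity by \cref{prop:ee}, both $\int_0^\infty \ee\,dv$ and $\int_0^\infty \log v\;\ee\,dv$ converge absolutely. Near $v=0$ the factor $\log v$ is integrable against a bounded function; at infinity $v^{-2}\log v$ is integrable. This gives
\begin{equation*}
\log\eta = \log\beta\int_0^\infty \ee\,dv + \int_0^\infty \log v\;\ee\,dv + \int_0^\infty \phi(\beta v)\,\ee(v)\,dv\,,
\end{equation*}
and it remains to show that the last integral is $\mathcal{O}(\beta\log\beta)$.

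For this remainder I split at $v=1/\beta$. On $(0,1/\beta)$, I use $|\phi(\beta v)|\leqslant C\beta v$ together with $|\ee(v)|\leqslant C\min\{1,v^{-2}\}$. This gives a bound of order $C\beta\bigl(1+\int_1^{1/\beta} v^{-1}dv\bigr)=\mathcal{O}(\beta|\log\beta|)$. On $(1/\beta,\infty)$, I use $|\phi(\beta v)|\leqslant C(1+\log(\beta v))$ and $|\ee|\leqslant Cv^{-2}$. After the substitution $t=\beta v$ this becomes $C\beta\int_1^\infty (1+\log t)\,t^{-2}\,dt=\mathcal{O}(\beta)$. Together these prove \eqref{eq:logeta0}. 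Note that only the crude bound $\ee=O(v^{-2})$ is needed, not the oscillatory structure.

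For \eqref{eq:dlogeta0} I start from \eqref{eq:dlogndef}, which is valid for every $\beta>0$, and write
\begin{equation*}
\frac{\beta v}{e^{\beta v}-1} = 1 + \psi(\beta v)\,.
\end{equation*}
Here $|\psi(t)|\leqslant C\min\{t,1\}$, because $t/(e^t-1)=1-t/2+O(t^2)$ near $0$ and is bounded by $1$ everywhere. The leading term is $\int_0^\infty \ee\,dv$. The remainder is controlled by exactly the same splitting at $1/\beta$: on $(0,1/\beta)$ it is at most $C\beta\int_0^{1/\beta}v\min\{1,v^{-2}\}\,dv=\mathcal{O}(\beta|\log\beta|)$, and on $(1/\beta,\infty)$ it is at most $C\int_{1/\beta}^\infty v^{-2}\,dv=\mathcal{O}(\beta)$.

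The main obstacle is the middle region, where neither the small-$t$ nor the large-$t$ bound on $\phi$ or $\psi$ is sharp. The logarithm there comes from $\ee\sim v^{-2}$ integrated against $\beta v$ over $(1,1/\beta)$. Checking that this produces exactly $\beta\log\beta$ and nothing worse is the step I expect to need care; in particular, $\phi$ must be used through its linear bound only up to $1/\beta$, not beyond.
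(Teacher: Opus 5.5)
Your proposal is correct and follows essentially the same route as the paper. The paper also writes $\log(1-e^{-\beta v}) = \log(\beta v) + \bigl[\log(1-e^{-\beta v}) - \log(\beta v)\bigr]$ and bounds the bracket by $C\beta v$ on $(0,1/\beta)$ and by $C(1+\log(\beta v))$ on $(1/\beta,\infty)$, against $|\mathrm{e}(v)|\leqslant C/(1+v^2)$. It then treats \eqref{eq:dlogeta0} in the same way, via $|w/(e^w-1)-1|\leqslant C\min\{w,1\}$ and the same splitting.
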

	
\begin{proof}
A formal expansion of the integrand functions in \eqref{eq:logndef} and \eqref{eq:dlogndef} suggests
\begin{gather}
	\log \eta(\beta;\LN,\Lz) = \int_0^\infty\!\!\! dv\, \big[\log(\beta v) + \mathcal{O}(\beta) \big]\,\ee(v;\LN,\Lz)\,, \\
	\beta\, \partial_\beta \log \eta(\beta;\LN,\Lz) = \int_0^\infty\!\!\! dv\; \big[1 + \mathcal{O}(\beta) \big]\,\ee(v;\LN,\Lz)\,.
\end{gather}
We now justify these expansions with quantitative bounds for the remainders.

On one hand, consider that $|\ee(v;\LN,\Lz)| \leqslant C/(1+v^2)$ for any $v > 0$ by \cref{prop:ee}. Then, using the elementary estimates $|\log(1-e^{-w}) - \log w| \leqslant C w$ for $0 < w <1$ and $|\log(1-e^{-w}) - \log w| \leqslant C(1 + \log w)$ for $w > 1$, for all $\beta > 0$ small enough we get
	\begin{multline}
		\bigg|\log \eta(\beta;\LN,\Lz) - \int_0^\infty\!\!\! dv\,\log(\beta v)\,\ee(v;\LN,\Lz)\bigg|
		\leqslant \int_0^\infty\!\!\! dv\, \big|\log(1-e^{-\beta v}) - \log(\beta v)\big|\,\big|\ee(v;\LN,\Lz)\big| \\
		\leqslant C \left( \beta\! \int_0^{1}\! dv\, v
			+ \beta\! \int_1^{1/\beta}\! dv\, \frac{1}{v} 
			+ \int_{1/\beta}^\infty\!\!\! dv\, \big[1+ \log(\beta v)\big]\,\frac{1}{v^2} \right) 
		= C \left( \tfrac{1}{2}\beta - \beta \log \beta + 2 \beta \right) 
		\leqslant C\, \beta\,|\log\beta|\,,
	\end{multline}
which proves \eqref{eq:logeta0}.

The asymptotic expansion \eqref{eq:dlogeta0} can be derived by similar arguments, observing that $\big|\frac{w}{e^{w}-1} - 1\big| \leqslant C\,w$ for $0 < w <1$ and $\big|\frac{w}{e^{w}-1} - 1\big| \leqslant C$ for $w > 1$.	
\end{proof}

We are now ready to prove \cref{cor:FUS}.

\begin{proof}[Proof of \cref{cor:FUS}]	
From the definitions \eqref{eq:Frdef}–\eqref{eq:Srdef}, combined with the identity \eqref{eq:ZZren} proved in \cref{thm:ZZren}, we can express the thermodynamic quantities as
	\begin{align*}
		\Fr(\beta) &= \mathcal{E}_{\vac}  + \tfrac{1}{\beta}\log\eta(\beta;\LN,\Lz)\,, \vspace{0.1cm} \\
		\Ur(\beta) &= \mathcal{E}_{\vac} + \partial_\beta \log\eta(\beta;\LN,\Lz)\,, \vspace{0.1cm} \\
		\Sr(\beta) &= \beta\,\partial_\beta \log\eta(\beta;\LN,\Lz) - \log\eta(\beta;\LN,\Lz)\,. 
	\end{align*}
Then, \eqref{eq:Frbinf}-\eqref{eq:Srb0} follow straightforwardly using the asymptotics established in \cref{lemma:logetainf,lemma:logeta0}. Absolute convergence of the integrals in \eqref{eq:Frb0}-\eqref{eq:Srb0} is ensured by the results established in \cref{prop:ee}.
\end{proof}

\subsection{Derivation of the Born series for the Casimir energy}\label{sec:bornexp}
We now turn to the proof of \cref{thm:EvacBorn}. To this end, we first provide an alternative representation of the relative spectral density $e(v;\Lambda_N,\Lambda_0)$.

	\begin{lemma}\label{lemma:eNalt}
		Assume \eqref{eq:ass1} and \eqref{eq:ass2}. Then, for all $v > 0$,
		\begin{equation}\label{eq:eexpalt}
		\ee(v;\LN,\Lz) =  \tx\sum_{j = 0}^{+\infty}\, \ee_j(v;\LN,\Lz)\,,
		\end{equation}
		where
		\begin{equation}\label{eq:Ejdef}
			\ee_j(v;\LN,\Lz) := \frac{1}{4 \pi^2}\,\Re \left(\sum_{\mm,\nn=1}^N \Big[ \Big(\big(\VNp \big)^{-1}\PNp\Big)^{j} (\VNp)^{-1} \Big]_{\mm\nn} \,e^{i v |x_\mm-x_\nn|}\right),
		\end{equation}
		and $\VNp,\PNp$ are as in \eqref{eq:VpPp}.
		Moreover, for any $j \in \Nz$ and $v > 0$,
		\begin{equation}\label{eq:Ejest}
			\big|\ee_j(v;\LN,\Lz)\big| \leqslant \frac{N}{4\pi^2 \min_\mm \alpha_\mm}
			\bigg(\sum_{\mm \neq \nn} \frac{1}{(4\pi \alpha_\mm)^2\, |x_\mm - x_\nn|^2}\bigg)^{j/2}
			\left[1 + \bigg(\frac{v}{4\pi \max_\mm \alpha_\mm}\bigg)^{\!2} \right]^{-\frac{j+1}{2}} .
		\end{equation}
	\end{lemma}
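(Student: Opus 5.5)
The plan is to start from the explicit formula \eqref{eq:eexp} for $\ee(v;\LN,\Lz)$ and insert the Neumann series of \cref{lemma:GaN}, evaluated on the upper side of the cut. Taking $z = v^2 e^{i\varepsilon}$ and letting $\varepsilon\to0^+$, the matrices $\VN,\PN$ converge to $\VNp,\PNp$ as in \eqref{eq:VpPp}, so that $\GaN^{+,v^2} = \VNp - \PNp$. The proof of \cref{lemma:GaN} also gives, on the cut itself,
\begin{equation*}
\big\|(\VNp)^{-1}\PNp\big\|\leqslant \rho\,\Big[1+\big(\tfrac{v}{4\pi\min_\mm\alpha_\mm}\big)^2\Big]^{-1/2}<1 .
\end{equation*}
It follows that $(\GaN^{+,v^2})^{-1}=\sum_j \big((\VNp)^{-1}\PNp\big)^j(\VNp)^{-1}$, and the series converges absolutely in operator norm. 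Substituting this into \eqref{eq:eexp}, I will exchange the finite sum over $\mm,\nn$ and the real part with the convergent $j$-series. This gives \eqref{eq:eexpalt} directly, with $\ee_j$ as in \eqref{eq:Ejdef}.

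For the bound \eqref{eq:Ejest}, I will estimate the double sum by writing it as a bilinear form. Let $u\in\C^N$ be the vector $u_\nn=e^{iv|x_\nn|}$... This choice does not factor $e^{iv|x_\mm-x_\nn|}$, so I will instead bound crudely: $\big|\sum_{\mm,\nn}M_{\mm\nn}c_{\mm\nn}\big|\leqslant\sum_{\mm,\nn}|M_{\mm\nn}|\leqslant N\|M\|$, where $|c_{\mm\nn}|=1$. The last step uses Cauchy--Schwarz: $\sum_{\mm,\nn}|M_{\mm\nn}|\leqslant N\|M\|_{\mathrm{HS}}$. Since an HS norm would spoil the constant, the cleaner route is $\sum_{\mm,\nn}|M_{\mm\nn}| = \langle |\cdot|\rangle\leqslant \sqrt N\,\|\,|M|\,\|\sqrt N$. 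I will aim for a bound of the form $N\|M\|$ with $M=\big((\VNp)^{-1}\PNp\big)^j(\VNp)^{-1}$, accepting a possibly different norm (operator or HS) as long as submultiplicativity holds.

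Submultiplicativity then gives $\|M\|\leqslant \|(\VNp)^{-1}\PNp\|^j\,\|(\VNp)^{-1}\|$. The diagonal factor satisfies
\begin{equation*}
\|(\VNp)^{-1}\|=\max_\mm\big(\alpha_\mm^2+v^2/(4\pi)^2\big)^{-1/2}\leqslant \frac{1}{\min_\mm\alpha_\mm}\Big[1+\big(\tfrac{v}{4\pi\max_\mm\alpha_\mm}\big)^2\Big]^{-1/2}.
\end{equation*}
For each off-diagonal factor, the Hilbert--Schmidt computation in \cref{lemma:GaN}, with $\Im\sqrt z=0$ and $\Re\sqrt z=v$, gives
\begin{equation*}
\sum_{\mm\neq\nn}\frac{1}{(4\pi\alpha_\mm)^2|x_\mm-x_\nn|^2}\cdot\frac{(4\pi\alpha_\mm)^2}{(4\pi\alpha_\mm)^2+v^2}\leqslant \rho^2\Big[1+\big(\tfrac{v}{4\pi\max_\mm\alpha_\mm}\big)^2\Big]^{-1}.
\end{equation*}
Here I use $\max$ rather than $\min$, which is the correct direction since $(4\pi\alpha_\mm)^2/((4\pi\alpha_\mm)^2+v^2)$ is increasing in $\alpha_\mm$. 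Collecting $j$ such factors and one $(\VNp)^{-1}$ produces the exponent $-(j+1)/2$ and the prefactor $\rho^j$. Together with the $1/(4\pi^2)$ and the factor $N$ this gives \eqref{eq:Ejest}.

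The main obstacle I expect is obtaining exactly the factor $N$, as opposed to $N^{3/2}$ or $N^2$, from $\sum_{\mm,\nn}|M_{\mm\nn}|$. The phases $e^{iv|x_\mm-x_\nn|}$ do not factor as a rank-one form, so I cannot simply write the sum as $\langle w,Mw'\rangle$ with unit-modulus vectors. I will first try Cauchy--Schwarz in the form $\sum_{\mm,\nn}|M_{\mm\nn}|\leqslant N\|M\|_{\mathrm{HS}}$. I will then control $\|M\|_{\mathrm{HS}}\leqslant\|(\VNp)^{-1}\PNp\|_{\mathrm{HS}}^{\,j}\,\|(\VNp)^{-1}\|$ using $\|AB\|_{\mathrm{HS}}\leqslant\|A\|_{\mathrm{HS}}\|B\|$. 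For $j=0$ the diagonal HS norm brings an extra $\sqrt N$; I would handle that case separately, noting that for $j=0$ the sum is diagonal and equals $\sum_\mm|[\VNp]^{-1}_{\mm\mm}|\leqslant N\|(\VNp)^{-1}\|$ directly.
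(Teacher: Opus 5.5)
Your final plan (Neumann series on the upper side of the cut, then $\sum_{\mm,\nn}|M_{\mm\nn}|\leqslant N\|M\|_{\mathrm{HS}}$ and $\|A^jB\|_{\mathrm{HS}}\leqslant\|A\|_{\mathrm{HS}}^{j}\|B\|$ with $A=(\VNp)^{-1}\PNp$, $B=(\VNp)^{-1}$, and the same two norm bounds written with $\max_\mm\alpha_\mm$) is correct and is exactly the paper's argument. Your separate diagonal treatment of $j=0$ is genuinely needed: at $j=0$ the paper's chain tacitly uses $\|(\VNp)^{-1}\|_{\mathrm{HS}}\leqslant\|(\VNp)^{-1}\|$, which fails by a factor of up to $\sqrt N$.

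Discard the exploratory claim $\sum_{\mm,\nn}|M_{\mm\nn}|\leqslant N\|M\|$ with the operator norm, which is false in general: for the unitary discrete Fourier matrix the left side is $N^{3/2}$. Also write $\max_\mm\alpha_\mm$ instead of $\min_\mm\alpha_\mm$ in your first display; there only $\|(\VNp)^{-1}\PNp\|\leqslant\rho<1$ is needed anyway.
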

	
\begin{proof}
As a preliminary, recall the definition \eqref{eq:VNPN} of the matrices $\VN,\PN$ and observe that 
	\begin{equation*}
		\VNp = \lim_{\varepsilon \to 0^+} V_N^{v^2 e^{i\varepsilon}} , \qquad
		\PNp = \lim_{\varepsilon \to 0^+} P_N^{v^2 e^{i\varepsilon}} .
	\end{equation*}
Taking this into account and using \cref{lemma:GaN}, we can write the convergent Neumann series expansion
    \begin{equation*}
        \big[\Gamma_N^{+,v^2}\big]^{-1}
        = \sum_{j=0}^{+\infty}
        \Big[\big (\VNp\big)^{-1} \PNp \Big]^{j}
        \big(\VNp \big)^{-1}\,.
    \end{equation*}
Substituting this into the basic identity \eqref{eq:eexp} gives \eqref{eq:eexpalt}-\eqref{eq:Ejdef}.

To say more, by computations analogous to those reported in the proof of \cref{lemma:GaN}, we infer:
\begin{gather*}
	\big\|(\VNp)^{-1}\big\|^2 \leqslant \max_{\nn}\bigg(\,\frac{1}{\alpha_\mm^2 + \left(\frac{v}{4\pi}\right)^2}\bigg) \leqslant \frac{1}{(\min_\mm \alpha_\mm)^2}\; \frac{1}{1 + \left(\frac{v}{4\pi \max_\mm \alpha_\mm}\right)^2}\,; \\
	\big\|(\VNp)^{-1} \PNp\big\|^2_{\mathrm{HS}}  \leqslant \bigg(\sum_{\mm \neq \nn} \frac{1}{(4\pi \alpha_\mm)^2\, |x_\mm - x_\nn|^2}\bigg)\, \frac{1}{1+\left(\tfrac{v}{4\pi \max_\mm \alpha_\mm}\right)^2} \,.
\end{gather*}
Then, by Cauchy-Schwarz inequality and standard properties of Hilbert--Schmidt operators, we obtain
		\begin{multline*}
			\big|\ee_j(v;\LN,\Lz)\big|  
			\leqslant \frac{1}{4 \pi^2} \sum_{\mm,\nn=1}^N \left|\Big[ \Big(\big(\VNp\big)^{-1} \PNp\Big)^{j} (\VNp)^{-1} \Big]_{\mm\nn} \right| 
			\\
			\leqslant \frac{N}{4 \pi^2} \left\| \Big(\big(\VNp\big)^{-1} \PNp\Big)^{j} (\VNp)^{-1} \right\|_{\mathrm{HS}}
			\leqslant \frac{N}{4 \pi^2}\,\Big\|\big(\VNp\big)^{-1} \PNp\Big\|^{j}_{\mathrm{HS}}\, \big\|(\VNp)^{-1}\big\|
			\\
			\leqslant \frac{N}{4\pi^2 \min_\mm \alpha_\mm}\,
				\bigg(\sum_{\mm \neq \nn} \frac{1}{(4\pi \alpha_\mm)^2\, |x_\mm - x_\nn|^2}\bigg)^{j/2}
				\left[1 + \Big(\tfrac{v}{4\pi \max_\mm \alpha_\mm}\Big)^2 \right]^{\!-\frac{j+1}{2}}.
		\end{multline*}
which proves the bound \eqref{eq:Ejest}, thus concluding the proof.
\end{proof}		
		
For later reference, we record that direct calculations give
\begin{gather}
\ee_0(v;\LN,\Lz) = \sum_{\nn=1}^N \frac{4\alpha_\nn}{(4\pi \alpha_\nn)^2 + v^2}\,, \label{eq:E0exp} \\
\ee_1(v;\LN,\Lz) = \frac{1}{\pi}\,\Re\! \left(\sum_{\mm \neq \nn} \frac{e^{2i v\,|x_\mm-x_\nn|}}{|x_\mm-x_\nn|\,(4\pi \alpha_\mm - i v)(4\pi \alpha_\nn - i v)} \right). \label{eq:E1exp}
\end{gather}	

While \eqref{eq:E1exp} is easily seen to be consistent with the bound \eqref{eq:Ejest} for all $v > 0$, the explicit expression \eqref{eq:E0exp} shows that the bound is not sharp for $j = 0$ as $v \to +\infty$. This is due to an exact cancellation of the same nature as the one pointed out in \cref{rem:gj}. We also stress that \eqref{eq:Ejest}, together with the assumption \eqref{eq:ass2}, ensures uniform absolute convergence of the series \eqref{eq:eexpalt} for all $v > 0$.
\medskip

We now present an alternative characterization of the meromorphic extension of the relative zeta function in a neighbourhood of $s = -1/2$, of interest for the computation of the renormalized Casimir energy.

	\begin{lemma}\label{lemma:zetaAC2}
		Assume \eqref{eq:ass1} and \eqref{eq:ass2}. Then, for any $s \in \C$ with $- 1 < \Re s < 0$, the relative zeta function $\zeta(s;\LN,\Lz)$ can be expressed as		
		\begin{equation}
			\zeta(s;\LN,\Lz) = \zeta_0^{(\mathrm{AC})}(s;\LN,\Lz) + \zeta_1^{(\mathrm{AC})}(s;\LN,\Lz) + \sum_{j = 2}^{+\infty} \zeta_j(s;\LN,\Lz)\,,
		\end{equation}			
		where,
		\begin{equation}\label{eq:zeta0ren}
			\zeta_0^{(\mathrm{AC})}(s;\LN,\Lz) := \frac{1}{2 \cos(\pi s)} \sum_{\nn=1}^N\, (4\pi \alpha_\nn)^{-2s}\,,
		\end{equation}
		\begin{multline}\label{eq:zeta1ren}
			\zeta_1^{(\mathrm{AC})}(s;\LN,\Lz) :=
				 \frac{1}{2\pi} \sum_{\mm \neq \nn} \frac{1}{|x_\mm-x_\nn|^2}\int_0^{\infty} dv\;v^{-2s-1}\, 
				\bigg[ 2s\, \Im \bigg(\frac{e^{2i v\,|x_\mm-x_\nn|}}{(4\pi \alpha_\mm - i v)(4\pi \alpha_\nn - i v)}\bigg)
			\\
			- \Re \bigg( \frac{v (4\pi (\alpha_\mm + \alpha_\nn) - 2i v)\, e^{2i v\,|x_\mm-x_\nn|}}{(4\pi \alpha_\nn - i v)^2 (4\pi \alpha_\mm - i v)^2}\bigg) \bigg]\,,
		\end{multline}
		and
		\begin{equation}
			\zeta_j(s;\LN,\Lz) := \int_0^{\infty} dv\;v^{-2s}\, \ee_j(v;\LN,\Lz)\,, \qquad \mbox{for $j \geqslant 2$}\,,
		\end{equation}
		with $\ee_j(v;\LN,\Lz)$ as in \eqref{eq:Ejdef}. Moreover, $\zeta_1^{(\mathrm{AC})}(s;\LN,\Lz)$ and $\sum_{j = 2}^{+\infty} \zeta_j(s;\LN,\Lz)$ are analytic functions of $s$ in the strip $- 1 < \Re s < 0$.
	\end{lemma}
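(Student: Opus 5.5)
The plan is to start from the Born-type decomposition $\ee=\sum_{j}\ee_j$ of \cref{lemma:eNalt}, integrate it term by term against $v^{-2s}$ on a strip where everything converges absolutely, and then continue each piece separately down to $-1<\Re s<0$.

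\emph{Common strip.} By \cref{prop:ee}, $\ee$ is bounded near $0$ and $O(v^{-2})$ at infinity. Hence $\int_0^\infty v^{-2s}\ee\,dv$ converges absolutely for $-1/2<\Re s<1/2$. By \cref{lemma:ZIR,lemma:ZUV} it coincides there with the meromorphic continuation of $\zeta(s;\LN,\Lz)$, so \eqref{eq:zetaev} holds on that whole strip. On $-1/2<\Re s<0$ I insert \eqref{eq:eexpalt}. The estimate \eqref{eq:Ejest} gives $\sum_{j\ge2}|\ee_j(v)|\le C\rho^2(1-\rho)^{-1}(1+v^2)^{-3/2}$, while $\ee_0$ and $\ee_1$ are explicit, \eqref{eq:E0exp}--\eqref{eq:E1exp}. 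Thus dominated convergence (Fubini) lets me exchange sum and integral:
\[
\zeta=\sum_{j\ge0}\int_0^\infty v^{-2s}\ee_j\,dv .
\]

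\emph{Terms $j\ge2$ and $j=0$.} For $j\ge2$, the bound $|\ee_j|\le C\rho^j(1+v^2)^{-3/2}$ makes each $\zeta_j$ absolutely convergent for $-1<\Re s<1/2$. The series $\sum_{j\ge2}\zeta_j$ converges locally uniformly there, geometrically in $\rho<1$. Analyticity of each $\zeta_j$ follows by Morera's theorem exactly as in \cref{lemma:ZIR}, and analyticity of the sum follows from uniform convergence. For $j=0$, I substitute $v=4\pi\alpha_\nn u$ in \eqref{eq:E0exp} and use
\[
\int_0^\infty \frac{u^{-2s}}{1+u^2}\,du=\frac{\pi}{2\cos(\pi s)}, \qquad |\Re s|<1/2 .
\]
This yields exactly \eqref{eq:zeta0ren} on $-1/2<\Re s<0$; the right-hand side is then simply taken as its own meromorphic continuation.

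\emph{Term $j=1$ (main step).} Here $\ee_1$ is only $O(v^{-2})$ and oscillatory, so $\int v^{-2s}\ee_1$ diverges once $\Re s\le-1/2$. For each pair I set $d=|x_\mm-x_\nn|$ and $A(v)=(4\pi\alpha_\mm-iv)(4\pi\alpha_\nn-iv)$. I write $e^{2ivd}=(2id)^{-1}\partial_v e^{2ivd}$ and integrate by parts inside the real part of \eqref{eq:E1exp}. The boundary terms vanish for $-1/2<\Re s<0$: at $v=0$ because $v^{-2s}\to0$, and at $v=\infty$ because of the decay $v^{-2\Re s-2}$.

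The derivative $\partial_v(v^{-2s}/A)=-2s\,v^{-2s-1}/A-v^{-2s}A'/A^2$ involves $-A'=i\big(4\pi(\alpha_\mm+\alpha_\nn)\big)+2v$. Converting $\Re\big(\tfrac{1}{2id}(\cdots)\big)$ into imaginary and real parts then reproduces the integrand of \eqref{eq:zeta1ren}, including the extra factor $1/d$ that produces $|x_\mm-x_\nn|^{-2}$. The careful sign bookkeeping here is the point I expect to be most delicate.

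The new integrand is $O(v^{-2\Re s-1})$ near $0$, which is integrable for $\Re s<0$. At infinity it is $O(v^{-2\Re s-3})$, which is integrable for $\Re s>-1$. Note that $v\,(4\pi(\alpha_\mm+\alpha_\nn)-2iv)/A^2=O(v^{-2})$ at infinity. Therefore $\zeta_1^{(\mathrm{AC})}$ is analytic on $-1<\Re s<0$, again by Morera's theorem.

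\emph{Conclusion.} The identity then holds on $-1/2<\Re s<0$. Every piece except $\zeta_0^{(\mathrm{AC})}$ is analytic on $-1<\Re s<0$, and $\zeta_0^{(\mathrm{AC})}$ is explicitly meromorphic there. Uniqueness of analytic continuation therefore extends the identity to the whole strip $-1<\Re s<0$, and yields the stated analyticity claims.
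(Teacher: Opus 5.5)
Your proposal is correct and follows essentially the same route as the paper. Both arguments use the Born decomposition from \cref{lemma:eNalt} with a Fubini interchange, the explicit Beta-type integral for $j=0$, integration by parts for $j=1$ (I checked your sign bookkeeping and it reproduces \eqref{eq:zeta1ren}), and the geometric bound \eqref{eq:Ejest} for $j\geqslant 2$. The only difference is that you establish the term-by-term identity directly on $-1/2<\Re s<0$, using the $O(v^{-2})$ decay of $\ee$ together with \cref{lemma:ZIR,lemma:ZUV}, and then invoke uniqueness of analytic continuation explicitly. The paper instead starts on $0<\Re s<1/2$ with a cruder $O(v^{-1})$ bound and leaves that continuation step implicit.
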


\begin{proof}
We start from the representation of the relative spectral density given in \cref{lemma:eNalt}. Taking into account \eqref{eq:ass2} together with the estimate \eqref{eq:Ejest}, we obtain
		\begin{multline}
			\big|\ee(v;\LN,\Lz)\big| \leqslant \sum_{j = 0}^{+\infty} \big|\ee_j(v;\LN,\Lz)\big| \\
			\leqslant \frac{N}{4\pi^2 \min_\mm \alpha_\mm}
				\sum_{j = 0}^{+\infty} \bigg(\sum_{\mm \neq \nn} \frac{1}{(4\pi \alpha_\mm)^2\, |x_\mm - x_\nn|^2}\bigg)^{j/2}
			\left[1 + \bigg(\frac{v}{4\pi \max_\mm \alpha_\mm}\bigg)^{\!2} \right]^{-\frac{j+1}{2}}\\
			\leqslant \frac{C}{\sqrt{(4\pi \max_\mm \alpha_\mm)^2 + v^2}}\,,
		\end{multline}
for some $C > 0$. This bound ensures that inserting \eqref{eq:eexpalt} into \eqref{eq:zetaev} yields an absolutely convergent integral for $0 < \Re s < 1/2$, defining an analytic function in that region, consistently with \eqref{eq:sinstrip}.

By Fubini's theorem, we may thus interchange summation and integration, and write
	\begin{equation*}
		\zeta(s;\LN,\Lz) = \sum_{j = 0}^{+\infty} \zeta_j(s;\LN,\Lz)\,,
	\end{equation*}
where, for $j \in \Nz$ and $0<\Re s < 1/2$, we have put
	\begin{equation*}
		\zeta_j(s;\LN,\Lz) := \int_0^{\infty} dv\;v^{-2s}\, \ee_j(v;\LN,\Lz)\,.
	\end{equation*}
In what follows, we treat the terms $j = 0$ and $j = 1$ explicitly and estimate the remainder. 

For $j = 0$, inserting the explicit expression \eqref{eq:E0exp}, changing variables and exploiting standard identities for the Euler beta function $\mathrm{B}(\cdot,\cdot)$ \cite[Eqs. (5.12.3), (5.12.1) and (5.5.3)]{OL+10}, we find, for $0 < \Re s < 1/2$,
	\begin{multline*}
		\zeta_0(s;\LN,\Lz) = \frac{1}{2\pi} \sum_{\nn=1}^N (4\pi \alpha_\nn)^{-2s} \int_0^{\infty} dw\; \frac{w^{-s-1/2}}{1 + w}\\
		= \frac{1}{2\pi} \sum_{\nn=1}^N (4\pi \alpha_\nn)^{-2s} \,\mathrm{B}\Big(\tfrac{1}{2}-s,\tfrac{1}{2}+s\Big)
		= \frac{1}{2 \cos(\pi s)} \sum_{\nn=1}^N (4\pi \alpha_\nn)^{-2s}\,.
	\end{multline*}
This yields \eqref{eq:zeta0ren} and provides the analytic continuation of $\zeta_0(s)$ to a meromorphic function with simple poles at all $s \in 1/2 + \Z$, consistently with \cref{thm:zLNLz}.
	
For $j = 1$, using \eqref{eq:E1exp} we get
	\begin{equation*}
		\zeta_1(s;\LN,\Lz) = \frac{1}{2\pi} \sum_{\mm \neq \nn} \frac{1}{|x_\mm-x_\nn|} \int_0^{\infty} dv\;v^{-2s} \left( \frac{e^{2i v\,|x_\mm-x_\nn|}}{(4\pi \alpha_\mm - i v)(4\pi \alpha_\nn - i v)} + \frac{e^{-2i v\,|x_\mm-x_\nn|}}{(4\pi \alpha_\mm + i v)(4\pi \alpha_\nn + i v)} \right) \,.
	\end{equation*}
We stress that this integral is absolutely convergent for $-1/2 < \Re s < 1/2$ and defines an analytic function of $s$ in the same strip. Notably, it cannot be directly employed to evaluate $\zeta_1(s)$ at $s = -1/2$. To reach this point, we integrate by parts the exponential expressions. Considering that boundary terms vanish for $-1/2 < \Re s < 0$, a direct computation yields
	\begin{align*}
		\zeta_1(s;\LN,\Lz) & = - \frac{1}{4\pi i} \sum_{\mm \neq \nn} \frac{1}{|x_\mm-x_\nn|^2} \bigg[
			\int_0^{\infty} dv\; e^{2i v\,|x_\mm-x_\nn|}\, \frac{d}{dv} \left(\frac{v^{-2s}}{(4\pi \alpha_\mm - i v)(4\pi \alpha_\nn - i v)} \right) \\
			& \hspace{5cm} - \int_0^{\infty} dv\; e^{-2i v\,|x_\mm-x_\nn|}\, \frac{d}{dv} \left(\frac{v^{-2s}}{(4\pi \alpha_\mm + i v)(4\pi \alpha_\nn + i v)}\right)
			\bigg] \\
		& = \frac{1}{2\pi} \sum_{\mm \neq \nn} \frac{1}{|x_\mm-x_\nn|^2}\int_0^{\infty} dv\;v^{-2s-1}\, 
				\bigg[ 2s\, \Im \bigg(\frac{e^{2i v\,|x_\mm-x_\nn|}}{(4\pi \alpha_\mm - i v)(4\pi \alpha_\nn - i v)}\bigg)
			\\
			& \hspace{7cm} - \Im \bigg( \frac{i v (4\pi (\alpha_\mm + \alpha_\nn) - 2i v)\, e^{2i v\,|x_\mm-x_\nn|}}{(4\pi \alpha_\nn - i v)^2 (4\pi \alpha_\mm - i v)^2}\bigg) \bigg]\,.
	\end{align*}
	
The last expression coincides with \eqref{eq:zeta1ren} and it is easy to see that the integral converges absolutely for $-1 < \Re s < 0$, thus providing the desired analytic continuation and proving the claimed analyticity of $\zeta_1^{(\mathrm{AC})}(s;\LN,\Lz)$.	

As for the terms with $j \geqslant 2$, using again \eqref{eq:Ejest} and relabelling the summation index, we infer
	\begin{multline*}
		\bigg| \sum_{j = 2}^{+\infty} \zeta_j(s;\LN,\Lz) \bigg| 
		\leqslant \frac{N}{4\pi^2 \min_\mm \alpha_\mm} \sum_{j' = 0}^{+\infty}\bigg(\sum_{\mm \neq \nn} \frac{1}{(4\pi \alpha_\mm)^2\, |x_\mm - x_\nn|^2}\bigg)^{j'/2 + 1}\;\times \\
		\times\,
		\int_0^{\infty}\!\! dv\;v^{-2 \Re s} \left[1 + \Big(\tfrac{v}{4\pi \max_\mm \alpha_\mm}\Big)^2 \right]^{-3/2} .		
	\end{multline*}
Then, by arguments analogous to those outlined in the proof of \cref{lemma:ZIR}, we infer that $\sum_{j = 2}^{+\infty} \zeta_j(s;\LN,\Lz)$ is an analytic function of $s \in \C$ for $-1 < \Re s < 1/2$, thereby concluding the proof.	
\end{proof} 
 
We are ready to establish \cref{thm:EvacBorn}.

\begin{proof}[Proof of \cref{thm:EvacBorn}]
From \eqref{eq:ZZzeta} and the results established in \cref{thm:ZZren}, it follows that the renormalized Casimir energy can be written as
\begin{equation}\label{eq:Evac0}
\mathcal{E}_{\vac} = \big[1-\log(2\ell) \big] \Reu_{s = -1/2} \zeta(s;\LN,\Lz) + \tfrac{1}{2} \Rez_{s = -1/2} \zeta(s;\LN,\Lz) \,.
\end{equation}
By \cref{lemma:zetaAC2}, the residue of $\zeta(s;\LN,\Lz)$ at $s = -1/2$ is entirely determined by the corresponding residue of $\zeta_0^{(\mathrm{AC})}(s)$. More precisely, recalling \eqref{eq:zeta0ren}, we obtain
\begin{equation*}
\Reu_{s=-1/2} \zeta(s;\LN,\Lz) 
= \Reu_{s=-1/2} \zeta_0^{(\mathrm{AC})}(s;\LN,\Lz)
= 2\tx\sum_{\nn=1}^N \alpha_\nn\,.
\end{equation*}
Concerning the finite parts, direct calculations yield
\begin{equation*}
\Rez_{s=-1/2} \zeta_0^{(\mathrm{AC})}(s;\LN,\Lz)
= -4 \tx\sum_{\nn=1}^N \alpha_\nn\, \log(4 \pi \alpha_\nn)
\end{equation*}
\begin{multline*}
	\Rez_{s = -1/2} \zeta_1^{(\mathrm{AC})}(s;\LN,\Lz) 
		=  - \frac{1}{2\pi} \sum_{\mm \neq \nn} \frac{1}{|x_\mm-x_\nn|^2}\int_0^{\infty}\!\! dv\; \bigg[
				\Im \bigg(\frac{e^{2i v\,|x_\mm-x_\nn|}}{(4\pi \alpha_\mm - i v)(4\pi \alpha_\nn - i v)}\bigg)
			\\
			 + \Re \bigg(\frac{v (4\pi (\alpha_\mm + \alpha_\nn) - 2i v)\, e^{2i v\,|x_\mm-x_\nn|}}{(4\pi \alpha_\nn - i v)^2 (4\pi \alpha_\mm - i v)^2}\bigg) \bigg] \,,
\end{multline*}

\begin{equation*}
		\Rez_{s = -1/2}\zeta_j(s) = \int_0^{\infty} dv\;v\, \ee_j(v;\LN,\Lz)\,, \qquad \mbox{for $j \geqslant 2$}\,.
\end{equation*}
Substituting the above expressions into \eqref{eq:Evac0} and recalling the definition \eqref{eq:Ejdef} of $e_j(v;\LN,\Lz)$ yields \eqref{eq:Eren2}-\eqref{eq:Ejren}.

Let us finally establish \eqref{eq:ErenRest}, which in particular implies the absolute convergence of the series in \eqref{eq:Eren2}. To this purpose, we fix $\rho$ as in \eqref{eq:rho} and notice that the assumption \eqref{eq:ass2} actually ensures $\rho \in (0,1)$.
Then, using the bound \eqref{eq:Ejest}, for any $J \geqslant 2$ we obtain
\begin{multline*}
		\sum_{j = J}^{+\infty} \big|\mathcal{E}_{j}\big|
		\leqslant \int_0^{\infty} dv\;v\, \big|\ee_j(v;\LN,\Lz)\big|\\
		\leqslant \frac{N}{4\pi^2 \min_\mm \alpha_\mm} \sum_{j = J}^{+\infty} \,\rho^{j} \int_0^{\infty}\! dv\;v \left[1 + \Big(\frac{v}{4\pi \max_\mm \alpha_\mm}\Big)^2 \right]^{-\frac{j+1}{2}} 
		= \frac{4N\,(\max_\mm \alpha_\mm)^2}{\min_\mm \alpha_\mm} \sum_{j = J}^{+\infty} \frac{\rho^{j}}{j-1}\,.
\end{multline*}
Concerning the series in the latter expression, we have the elementary chain of identities
\begin{equation*}
\sum_{j = J}^{+\infty} \frac{\rho^j}{j-1}
= \sum_{j = J}^{+\infty} \int_0^{\rho} w^{j-2}\,dw
= \int_0^{\rho} w^{J-2} \sum_{k=0}^{\infty} w^k\,dw
= \int_0^{\rho} \frac{w^{J-2}}{1-w}\,dw ,
\end{equation*}
where the exchange of summation and integration is justified by dominated convergence, since the geometric series is uniformly convergent on every compact subset of $(0,1)$. Considering this and using the elementary estimates
	\begin{align*}
		\int_0^{\rho} dw\; \frac{w^{J-2}}{1-w} 
			& \leqslant \frac{1}{1-\rho} \int_0^{\rho} dw\;w^{J-2} = \frac{\rho^{J-1}}{(J-1)(1-\rho)}\,; \\
		\int_0^{\rho} dw\; \frac{w^{J-2}}{1-w} 
			& \leqslant \rho^{J-2} \int_0^{\rho} dw\; \frac{1}{1-w} = - \rho^{J-2}\,\log(1-\rho)\,. 
	\end{align*}
yields \eqref{eq:ErenRest}, therefore concluding the proof.
\end{proof}

We now specialize the results of \cref{thm:EvacBorn} to the case of identical obstacles. Therefore, on top of the basic hypotheses \eqref{eq:ass1} and \eqref{eq:ass2}, we henceforth assume \eqref{eq:aannaa} and work with the rescaled coordinates introduced in \eqref{eq:yn}, namely,
\begin{equation*}
	\alpha_\nn = \alpha > 0\,, \qquad y_\nn = 4\pi \alpha\,x_\nn\,, \qquad \mbox{for all $\nn \in \{1,\dots,N\}$}\,.
\end{equation*}
	
\begin{proof}[Proof of \cref{cor:num}]
We appeal to the representation established in \cref{thm:EvacBorn} and proceed to examine separately the contributions $\mathcal{E}_{0}^{\ren},\mathcal{E}_{1}^{\ren}$ and $\mathcal{E}_{j}$, for $j \geqslant 2$, introduced therein.

First, it is easy to see that \eqref{eq:E0ren} reduces to
	\begin{equation}\label{eq:E0num}
		\mathcal{E}_{0}^{\ren} = 2 N \alpha\, \big[1-\log(8\pi \alpha \ell) \big] \,.
	\end{equation}

Next, we consider $\mathcal{E}_{1}^{\ren}$. We start from \eqref{eq:E1ren} and perform the natural change of integration variable $w := v/(4\pi \alpha) \in (0,\infty)$. After straightforward manipulations, this yields
	\begin{multline}\label{eq:E1num}
				\mathcal{E}_{1}^{\ren} = - \alpha \sum_{\mm \neq \nn} \frac{1}{|y_\mm-y_\nn|^2}\;\Im \left(\int_0^{\infty}\!\! dw\; \frac{1 + w^2}{(1 - i w)^4} \;e^{2i w\,|y_\mm-y_\nn|}\right)\\
				= \alpha \sum_{\mm \neq \nn} \frac{1}{|y_\mm-y_\nn|^2}
				\left[
					\Im \left(\int_0^{\infty}\!\! dw\; \frac{e^{2i w\,|y_\mm-y_\nn|}}{(1 - i w)^2} \right)
					-2\, \Im \left(\int_0^{\infty}\!\! dw\;\frac{e^{2i w\,|y_\mm-y_\nn|}}{(1 - i w)^3}\right)				 
				\right] .
	\end{multline}

The expression \eqref{eq:Ejren} for $\mathcal{E}_{j}$, with $j \geqslant 2$, can be handled similarly. Since the matrix $\VNp$ defined in \eqref{eq:VpPp} is a scalar multiple of the identity in the present setting, a direct calculation using the explicit form of $\PNp$ gives
	\begin{multline}\label{eq:Ejnum}
		\mathcal{E}_{j} = 2 \alpha \sum_{\mm \neq p_1 \neq \dots \neq p_{j-1} \neq \nn} \frac{1}{|y_\mm - y_{p_1}| \cdot |y_{p_1} - y_{p_2}| \cdot \dots \cdot |y_{p_{j-1}} - y_{\nn}|} \;\times \\
				\times\; \Re \left( \int_0^{\infty}\!\! dw\;\frac{w}{(1-i w)^{j+1}}\; e^{i w\,\big( |y_\mm - y_{p_1}| + |y_{p_1} - y_{p_2}| +\, \dots + |y_{p_{j-1}} - y_{\nn}| + |y_{\nn}-y_{\mm}|\big)} \right) \\
		= 2 \alpha \sum_{\mm \neq p_1 \neq \dots \neq p_{j-1} \neq \nn} \frac{1}{|y_\mm - y_{p_1}| \cdot |y_{p_1} - y_{p_2}| \cdot \dots \cdot |y_{p_{j-1}} - y_{\nn}|} \;\times \\
				\times\; \bigg[ 
					\Im \left(\int_0^{\infty}\!\! dw\;\frac{1}{(1-i w)^{j+1}}\; e^{i w\,\big( |y_\mm - y_{p_1}| + |y_{p_1} - y_{p_2}| +\, \dots + |y_{p_{j-1}} - y_{\nn}| + |y_{\nn}-y_{\mm}|\big)} \right) \\
					- \Im \left(\int_0^{\infty}\!\! dw\;\frac{1}{(1-i w)^{j}}\; e^{i w\,\big( |y_\mm - y_{p_1}| + |y_{p_1} - y_{p_2}| +\, \dots + |y_{p_{j-1}} - y_{\nn}| + |y_{\nn}-y_{\mm}|\big)}\right)
				\bigg]\,,
			\end{multline}
where the sum runs over all ordered $(j+1)$-tuples $(\mm,p_1,\dots,p_{j-1},\nn)$ of indexes in $\{1,\dots,N\}$ such that no two consecutive ones are allowed to be equal.

Recalling the position \eqref{eq:Etilde}, the above results yield \eqref{eq:Erennum}-\eqref{eq:Ejnumc} once we establish that
	\begin{equation*}
		\Im\left(\int_0^{\infty} dw\;\frac{e^{i w r}}{(1-i w)^k}\right) 
		= \frac{1}{(k-1)!} \left[\, \sum_{h = 0}^{k-2} (k-h-2)!\,(-r)^h + (- r)^{k-1}\,e^{r} E_1(r)\,\right] ,
	\end{equation*}
for all $r > 0$ and integer $k \geqslant 2$. This identity follows readily from the arguments reported in Appendix \ref{sec:app}.

Finally, the remainder estimate \eqref{eq:ErenRestnum} is a trivial rephrasing of \eqref{eq:ErenRest} and \eqref{eq:rho}.
\end{proof}
	
\appendix
\section{Auxiliary integral identities}\label{sec:app}
This appendix is devoted to the evaluation of a specific class of integrals arising in the computation of the renormalised Casimir energy for identical point-like obstacles.

For any $k \geqslant 1$ and $r > 0$, let us define
	\begin{equation}\label{eq:Ij}
		\II_{k}(r) := \int_0^{\infty} dw\;\frac{e^{i w r}}{(1-i w)^k}\,.
	\end{equation}
While the integral converges absolutely for $k \geqslant 2$, it is only conditionally convergent for $k = 1$. Yet, in the latter case a standard identity for the exponential integral $E_1(\cdot)$ gives the following, see \cite[6.7.1]{OL+10}:
	\begin{equation}\label{eq:I1}
		\II_{1}(r) = i \int_0^{\infty} dw\;\frac{e^{i w r}}{w + i} = i\,e^{r} E_1(r)\,.
	\end{equation}

\begin{lemma}
	For any $r > 0$ and integer $k \geqslant 2$, the integral $\II_k(r)$ can be expressed as
	\begin{equation}\label{eq:IjI1}
	\II_{k}(r) = \frac{i}{(k-1)!}\left[ \;\sum_{h = 0}^{k-2}\, (k-h-2)!\,(-r)^h + (- r)^{k-1}\,e^{r} E_1(r) \right].
	\end{equation}
\end{lemma}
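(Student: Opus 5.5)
The plan is to derive a first-order recursion in $k$ by integration by parts, starting from the base case \eqref{eq:I1}, and then solve the recursion by induction. For $k \geqslant 1$, write $e^{iwr} = \frac{1}{ir}\frac{d}{dw} e^{iwr}$ and integrate by parts on $(0,\infty)$:
\begin{equation*}
\II_k(r) = \Big[\frac{e^{iwr}}{ir(1-iw)^k}\Big]_0^\infty - \frac{1}{ir}\int_0^\infty dw\; e^{iwr}\,\frac{d}{dw}(1-iw)^{-k}.
\end{equation*}
The boundary term at infinity vanishes because $|1-iw|^{-k}\to 0$, and the term at $w=0$ gives $-\frac{1}{ir}$. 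Since $\frac{d}{dw}(1-iw)^{-k} = ik(1-iw)^{-k-1}$, this gives
\begin{equation*}
\II_k(r) = \frac{i}{r} - \frac{k}{r}\,\II_{k+1}(r), \qquad\text{i.e.}\qquad \II_{k+1}(r) = \frac{1}{k}\big(i - r\,\II_k(r)\big).
\end{equation*}
The differentiated integrand is absolutely integrable, so the manipulation is clean for $k\geqslant 2$. For $k=1$ the original integral is only conditionally convergent, so I would justify the step either by integrating over $(0,W)$ and letting $W\to\infty$, or by inserting a damping factor $e^{-\epsilon w}$ and sending $\epsilon\to 0^+$. This justification at $k=1$ is the only genuinely delicate point, and I expect it to be routine.

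Next, I would prove \eqref{eq:IjI1} by induction on $k\geqslant 2$, writing $\II_k = \frac{i}{(k-1)!}\,T_k$. For the base case $k=2$, the recursion with $k=1$ and \eqref{eq:I1} gives
\begin{equation*}
\II_2 = i - r\,i\,e^rE_1(r) = i\big[1 + (-r)e^rE_1(r)\big],
\end{equation*}
which matches the formula, since $0!=1$. For the inductive step, assume the formula for $k$. The recursion then gives
\begin{equation*}
\II_{k+1} = \frac{i}{k!}\Big[(k-1)! + (-r)\,T_k\Big],
\end{equation*}
with
\begin{equation*}
(-r)\,T_k = \sum_{h=0}^{k-2}(k-h-2)!\,(-r)^{h+1} + (-r)^k e^rE_1(r).
\end{equation*}
After shifting $h\to h-1$, the sum becomes $\sum_{h=1}^{k-1}(k-h-1)!\,(-r)^h$. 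Adding $(k-1)!$ supplies exactly the missing $h=0$ term, which yields the claimed expression with $k$ replaced by $k+1$.

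Finally, the identity used in the proof of \cref{cor:num} follows by taking imaginary parts. The bracket in \eqref{eq:IjI1} is real for $r>0$, so $\Im\,\II_k(r)$ equals $\frac{1}{(k-1)!}$ times that bracket, which is $\KK_k(r)$ as defined in \eqref{eq:KKk}.
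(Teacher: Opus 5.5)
Your proposal is correct and follows essentially the same route as the paper. You differentiate $(1-iw)^{-k}$ where the paper integrates it, but this is the same integration-by-parts identity read in reverse, and it gives the same recursion $\II_{k+1}(r) = \frac{1}{k}\big(i - r\,\II_k(r)\big)$. That recursion is then solved by induction starting from \eqref{eq:I1}. Your explicit inductive step, and your treatment of the conditionally convergent case $k=1$ by truncating at $W$ and letting $W \to \infty$, fill in details the paper leaves as routine.
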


\begin{proof}
An elementary integration by parts yields, for any fixed $k \geqslant 2$ and $r > 0$,
	\begin{multline*}
		\II_{k}(r) 
		= \int_0^{\infty}\!\! dw\;\frac{d}{dw}\left(\frac{1}{i(k-1)(1-i w)^{k-1}}\right) e^{i w r} \\
		= \frac{1}{i(k-1)} \left[\left.\frac{1}{(1-i w)^{k-1}}\,e^{i w r} \right|_0^{\infty} - i r \int_0^{\infty}\!\! dw\;\frac{1}{(1-i w)^{k-1}}\; e^{i w r} \right]
		= \frac{1}{i(k-1)}\, \Big[- 1 - i r\, \II_{k-1}(r) \Big]\,,
	\end{multline*}
which yields
	\begin{equation}\label{eq:IIjind}
	\II_{k}(r) = \frac{i}{k-1} - \frac{r}{k-1}\, \II_{k-1}(r) \,.
	\end{equation}
Starting from here, one infers \eqref{eq:IjI1} as an educated guess. Its validity can be checked a posteriori by a standard inductive argument, using \eqref{eq:IIjind}.
\end{proof}
\bigskip

\section*{Acknowledgments}
The numerical computations and plots were generated using the software \texttt{Mathematica} (Wolfram Research Inc., Version 14.3.0, 2025, Champaign, IL).

\section*{Funding}
The present research has been supported by
{\sl MUR} grant ``Dipartimento di Eccellenza'' 2023-2027 of Dipartimento di Matematica, Politecnico di Milano
and
{\sl PRIN 2022} grant ``ONES - OpeN and Effective Quantum Systems'' (prot. 2022L45WA3).
D.F. is member of Gruppo Nazionale per la Fisica Matematica GNFM – INdAM in Italy. 
M.G. acknowledges the support of {\sl PNRR Italia Domani and Next Generation EU - ICSC National Research Centre for High Performance Computing, Big Data and Quantum Computing}.

\section*{Conflict of interest}
The authors declare no conflict of interest.

\section*{Data availability statement}
Data sharing is not applicable to this article as it has no associated data.


\bigskip

\begin{thebibliography}{999}

\bibitem[ACS16]{ACS16} S. Albeverio, C. Cacciapuoti, M. Spreafico, {\sl Relative partition function of Coulomb plus delta interaction}, pp. 1-29 in J. Dittrich, H. Kovarik, A. Laptev (Eds.) ``Functional Analysis and Operator Theory for Quantum Physics. A Festschrift in Honor of Pavel Exner'', European Mathematical Society Publishing House, Z\"urich (2016).

\bibitem[AC+10]{AC+10} S. Albeverio, G. Cognola, M. Spreafico, S. Zerbini, {\sl Singular perturbations with boundary conditions and the Casimir effect in the half space}, J. Math. Phys. {\bf 51}, 063502 (2010).

\bibitem[AG+88]{AG+88} S. Albeverio, F. Gesztesy, R. Hoegh-Krohn, H. Holden, {\sl Solvable Models in Quantum Mechanics}, Springer Berlin, Heidelberg (1988).

\bibitem[AK99]{AK99} S. Albeverio, P. Kurasov, {\sl Singular Perturbations of Differential Operators}, London Mathematical Society Lecture Notes Series 271, Cambridge University Press, Cambridge (1999).

\bibitem[B14]{B14} R. Bennett, {\sl Born-series approach to the calculation of Casimir forces}, Phys. Rev. A {\bf 89}, 062512 (2014).

\bibitem[BP34]{BP34} H.A. Bethe, R. Peierls, {\sl Quantum Theory of the Diplon}, Proc. R. Soc. A {\bf 148}, 146-156 (1934).

\bibitem[BK+09]{BK+09} M. Bordag, G.L. Klimchitskaya, U. Mohideen, V.M. Mostepanenko, {\sl Advances in the Casimir Effect}, Oxford University Press Inc., New York (2009).

\bibitem[BC+02]{BC+02} G. Bressi, G. Carugno, R. Onofrio, G. Ruoso, {\sl Measurement of the Casimir Force between Parallel Metallic Surfaces}, Phys. Rev. Lett. {\bf 88}(4), 041804 (2002).

\bibitem[BC+03]{BC+03} A.A. Bytsenko, G. Cognola, E. Elizalde, V. Moretti, S. Zerbini, {\sl Analytic aspects of quantum fields}, World Scientific Publishing Co., Singapore (2003).

\bibitem[CFP17]{CFP17} C. Cacciapuoti, D. Fermi, A. Posilicano, {\sl Relative-Zeta and Casimir energy for a semitransparent hyperplane selecting transverse modes}, pp. 71-97 in G.F. Dell'Antonio, A. Michelangeli (Eds.), ``Advances in Quantum Mechanics: contemporary trends and open problems'', Springer (2017).

\bibitem[CCF26]{CCF26} D. Cafiero, M. Correggi, D. Fermi, {\sl Homogenization of point interactions}, arXiv:2603.21400 [math-ph].

\bibitem[C48]{C48} H.B.G. Casimir, {\sl On the attraction between two perfectly conducting plates}, Proc. Koninklijke Nederlandse Akad. van Wetenschappen {\bf 51}, 793-795 (1948).

\bibitem[CP48]{CP48} H.B.G. Casimir, D. Polder, {\sl The Influence of Retardation on the London-van der Waals Forces}, Phys. Rev. {\bf 73}, 360-372 (1948).

\bibitem[DM+11]{DM+11} D. Dalvit, P. Milonni, D. Roberts, F. Da Rosa (Eds.), {\sl Casimir Physics}, Lecture Notes in Physics {\bf 834}, Springer-Verlag Berlin Heidelberg (2011).

\bibitem[DM24]{DM24} M. Dhital, U. Mohideen, {\sl A Brief Review of Some Recent Precision Casimir Force Measurements}, Physics {\bf 6}(2), 891-904 (2024).

\bibitem[DL+14]{DL+14} M. Dou, F. Lou, M. Bostr\"om, I. Brevik, C. Persson, {\sl Casimir quantum levitation tuned by means of material properties and geometries}, Phys. Rev. B {\bf 89}, 201407 (2014).

\bibitem[DC76]{DC76} J.S. Dowker, R. Critchley, {\sl Effective Lagrangian and energy-momentum tensor in de Sitter space}, Phys. Rev. D {\bf 13}(12), 3224-3232 (1976).

\bibitem[DK78]{DK78} J.S. Dowker, G. Kennedy, {\sl Finite temperature and boundary effects in static space-times}, J. Phys. A {\bf 11}(5), 895-920 (1978).


\bibitem[EO+94]{EO+94} E. Elizalde, S.D. Odintsov, A. Romeo, A.A. Bytsenko, S. Zerbini, {\sl Zeta regularization techniques with applications}, World Scientific Publishing Co., Singapore (1994).

\bibitem[EY+24]{EY+24} B. Elsaka, X. Yang, P. K\"astner, K. Dingel, B. Sick, P. Lehmann, S.Y. Buhmann, H. Hillmer, {\sl Casimir Effect in MEMS: Materials, Geometries, and Metrologies - A Review}, Materials {\bf 17}(14), 3393 (2024).

\bibitem[EB23]{EB23} T. Emig, G. Bimonte, {\sl Multiple Scattering Expansion for Dielectric Media: Casimir Effect}, Phys. Rev. Lett. {\bf 130}, 200401 (2023).

\bibitem[FS22]{FS22} Y.L. Fang, A. Strohmaier, {\sl A Mathematical Analysis of Casimir Interactions I: The Scalar Field}, Ann. Henri Poincaré {\bf 23}, 1399-1449 (2022).

\bibitem[F36]{F36} E. Fermi, {\sl Sul moto dei neutroni nelle sostanze idrogenate}, Ricerca Scientifica {\bf 7}(2), 13-53 (1936).

\bibitem[F16]{F16} D. Fermi, {\sl A functional analytic framework for local zeta regularization and the scalar Casimir effect}, PhD thesis, Doctoral School in Mathematics, 28\textsuperscript{th} Cycle, University of Milan (2016). \href{https://doi.org/10.13130/d-fermi phd2016-02-22}{https://doi.org/10.13130/d-fermi phd2016-02-22}.

\bibitem[F20]{F20} D. Fermi, {\sl The Casimir energy anomaly for a point interaction}, Mod. Phys. Lett. A {\bf 35}(03), 2040008 (2020).

\bibitem[FP11]{FP11} D. Fermi, L. Pizzocchero, {\sl Local zeta regularization and the Casimir effect}, Prog. Theor. Phys. {\bf 126}(3), 419-434 (2011).

\bibitem[FP15]{FP15} D. Fermi, L. Pizzocchero, {\sl Local zeta regularization and the scalar Casimir effect III. The case with a background harmonic potential}, Int. J. Mod. Phys. A {\bf 30}(35), 1550213 (2015).

\bibitem[FP16]{FP16} D. Fermi, L. Pizzocchero, {\sl Local zeta regularization and the scalar Casimir effect IV. The case of a rectangular box}, Int. J. Mod. Phys. A {\bf 31}(4-5), 1650003 (2016).

\bibitem[FP17]{FP17} D. Fermi, L., Pizzocchero, {\sl Local zeta regularization and the scalar Casimir effect. A general approach based on integral kernels}, World Scientific, Singapore (2017).

\bibitem[FP18]{FP18} D. Fermi, L. Pizzocchero, {\sl Local Casimir effect for a scalar field in presence of a point impurity}, Symmetry {\bf 10}(2), 38 (2018).

\bibitem[FP23]{FP23} D. Fermi, L. Pizzocchero, {\sl On the Casimir Effect with $\delta$-Like Potentials, and a Recent Paper by K. Ziemian (Ann. Henri Poincar\'e, 2021)}, Ann. Henri Poincar\'e {\bf 24}, 2363-2400 (2023).

\bibitem[G77]{G77} G. Gibbons, {\sl Thermal zeta functions}, Phys. Lett. A {\bf 60}(5), 385-386 (1977).

\bibitem[GS17]{GS17} Y.V. Grats, P. Spirin, {\sl Vacuum polarization and classical self-action near higher-dimensional defects}, Eur. Phys. J. C {\bf 77}, 101 (2017).

\bibitem[GS25]{GS25} Y.V. Grats, P. Spirin, {\sl Vacuum Polarization Effects of Pointlike Singularities}, Moscow Univ. Phys. {\bf 80} (Suppl 2), S564-S571 (2025).

\bibitem[GK97]{GK97} R.E. Green, S.G. Krantz, {\sl Function theory of one complex variable}, John Wiley \& Sons Inc., New York (1997). 

\bibitem[H77]{H77} S.W. Hawking, {\sl Zeta function regularization of path integrals in curved spacetime}, Commun. Math. Phys. {\bf 55}(2), 133-148 (1977).

\bibitem[H05]{H05} A. Herdegen, {\sl Quantum backreaction (Casimir) effect I. What are admissible idealizations?}, Ann. Henri Poincar\'e {\bf 6}, 657-695 (2005).

\bibitem[H06]{H06} A. Herdegen, {\sl Quantum backreaction (Casimir) effect II. Scalar and electromagnetic fields}, Ann. Henri Poincar\'e {\bf 7}, 253-301 (2006).

\bibitem[HS10]{HS10} A. Herdegen, M. Stopa, {\sl Global versus local Casimir effect}, Ann. Henri Poincar\'e {\bf 11}, 1171-1200 (2010).

\bibitem[KC+07]{KC+07} D.J. Kapner, T.S. Cook, E.G. Adelberger, J.H. Gundlach, B.R. Heckel, C.D. Hoyle, H.E. Swanson, {\sl Tests of the Gravitational Inverse-Square Law below the Dark-Energy Length Scale}, Phys. Rev. Lett. {\bf 98}, 021101 (2007).

\bibitem[KG89]{KG89} J.I. Kapusta, C. Gales, {\sl Finite-Temperature Field Theory, Principles and Applications}, Cambridge University Press, Cambridge (1989).

\bibitem[K01]{K01} K. Kirsten, {\sl Spectral Functions in Mathematics and Physics}, CRC Press, Boca Raton, Florida (2001).

\bibitem[L97]{L97} S.K. Lamoreaux, {\sl Demonstration of the Casimir Force in the $0.6$ to $6 \mu m$ Range}, Phys. Rev. Lett. {\bf 78}(1), 8pp. (1997). Erratum Phys. Rev. Lett. {\bf 81}(24), 5475-5476 (1998).

\bibitem[L05]{L05} S.K. Lamoreaux, {\sl The Casimir force: background, experiments, and applications}, Rep. Prog. Phys. {\bf 68}, 201-236 (2005).

\bibitem[L55]{L55} E.M. Lifshitz, {\sl The Theory of Molecular Attractive Forces between Solids}, Sov. Phys. JETP {\bf 2}(1), 73-83 (1956). See also Zh. Eksp. Teor. Fiz. {\bf 29}, 94 (1955) [in Russian].

\bibitem[M94]{M94} P.W. Milonni, {\sl The Quantum Vacuum, An Introduction to Quantum Electrodynamics}, Academic Press Inc., San Diego (1994).

\bibitem[M01]{M01} K. A. Milton, {\sl The Casimir effect - Physical manifestations of zero-point energy}, World Scientific Publishing Co., Singapore (2001).

\bibitem[MR98]{MR98} U. Mohideen, A. Roy, {\sl Precision Measurement of the Casimir Force from $0.1$ to $0.9\mu m$}, Phys. Rev. Lett. {\bf 81}(21), 4549-4552 (1998).

\bibitem[MT97]{MT97} V.M. Mostepanenko, N.N. Trunov, {\sl The Casimir effect and its applications}, Oxford University Press Inc., New York (1997).

\bibitem[M98]{M98} W. M\"uller, {\sl Relative Zeta Functions, Relative Determinants and Scattering Theory}, Commun. Math. Phys. {\bf 192}, 309–347 (1998).

\bibitem[NF+13]{NF+13} V.D. Nguyen, S. Faber, Z. Hu, G.H. Wegdam, P. Schall, {\sl Controlling colloidal phase transitions with critical Casimir forces}, Nature Commun. {\bf 4}, 1584 (2013).

\bibitem[OL+10]{OL+10} F.W.J. Olver, D.W. Lozier, R.F. Boisvert, C.W. Clark, {\sl NIST Handbook of mathematical functions}, Cambridge University Press, Cambridge (2010).

\bibitem[PR+25]{PR+25} R. Passante, L. Rizzuto, P. Schall, E. Marino, {\sl Quantum and critical Casimir effects: bridging fluctuation physics and nanotechnology}, Nanoscale {\bf 17}, 13982-13997 (2025).

\bibitem[RE+09]{RE+09} S.J. Rahi, T. Emig, N. Graham, R.L. Jaffe, M. Kardar, {\sl Scattering Theory Approach to Electrodynamic Casimir Forces}, Phys. Rev. D{\bf 80}, 085021 (2009).

\bibitem[RS74]{RS74} D.B. Ray, I.M. Singer, {\sl R-torsion and the Laplacian on Riemannian manifolds}, Adv. Math. {\bf 7}, 145-210 (1974).

\bibitem[RCD+10]{RCD+10} S. Reynaud, A. Canaguier-Durand, R. Messina, A. Lambrecht, P.A. Maia Neto, {\sl The Scattering Approach to the Casimir Force}, Int. J. Mod. Phys. A {\bf 25}(11), 2201-2211 (2010).

\bibitem[S05]{S05} A. Scardicchio, {\sl Casimir dynamics: interactions of surfaces with codimension $> 1$ due to quantum fluctuations}, Phys. Rev. D {\bf 72}, 065004 (2005).

\bibitem[SC+23]{SC+23} F. Schmidt, A. Callegari, A. Daddi-Moussa-Ider, B. Munkhbat, R. Verre, T. Shegai, M. K\"all, H. L\"owen, A. Gambassi, G. Volpe, {\sl Tunable critical Casimir forces counteract Casimir–Lifshitz attraction}, Nature Physics {\bf 19}, 271–278 (2023).

\bibitem[SZ09]{SZ09} M. Spreafico, S. Zerbini, {\sl Finite temperature quantum field theory on noncompact domains and application to delta interactions}, Rep. Math. Phys. {\bf 63}(2), 163-177 (2009).

\bibitem[S21]{S21} A. Strohmaier, {\sl The Classical and Quantum Photon Field for Non-compact Manifolds with Boundary and in Possibly Inhomogeneous Media}, Commun. Math. Phys. {\bf 387}, 1441-1489 (2021).

\bibitem[S24]{S24} A. Strohmaier, {\sl Dimensional reduction formulae for spectral traces and Casimir energies}, Lett. Math. Phys. {\bf 114}, 66 (2024).

\bibitem[TP25]{TP25} F. Tajik, G. Palasantzas, {\sl Dynamical actuation of graphene MEMS under the influence of Casimir forces}, EPL {\bf 150}, 36001 (2025).

\bibitem[W79]{W79} R.M. Wald, {\sl On the Euclidean approach to quantum field theory in curved spacetime}, Commun. Math. Phys. {\bf 70}(3), 221–242 (1979).

\bibitem[Z21]{Z21} K. Ziemian, {\sl Algebraic approach to Casimir force between two $\delta$-like potentials}, Ann. Henri Poincar\'e {\bf 22}, 1751-1781 (2021).

\end{thebibliography}
\end{document}